\documentclass[12pt,a4paper]{amsart}
\usepackage[margin=2.5cm]{geometry}

\usepackage[utf8]{inputenc}
\usepackage[T1]{fontenc}

\usepackage{amsmath, amssymb, amsfonts,amsthm,amsopn,amscd,mathrsfs}
\allowdisplaybreaks

\usepackage{dsfont}
\usepackage{graphicx}
\usepackage{booktabs}\renewcommand{\arraystretch}{1.2}
\usepackage{titletoc}
\usepackage[section]{placeins}
\usepackage[shortlabels]{enumitem}
\usepackage{kbordermatrix}
\usepackage{mathtools}

\usepackage{upgreek}
\usepackage{xcolor}

\usepackage[breaklinks]{hyperref}
\hypersetup{
	linkcolor={red!40!black},
	citecolor={blue!50!black},
	urlcolor={blue!20!black}
}

\newcommand\overbar[1]{\accentset{\rule{.4em}{.8pt}}{#1}}

\DeclareMathOperator{\tr}{tr}

\DeclareMathOperator{\rank}{rank}

\DeclareMathOperator{\Aut}{Aut}

\DeclareMathOperator{\wt}{wt}
\DeclareMathOperator{\GF}{\mathrm{GF}}

\newcommand{\av}[1]{{\langle#1\rangle}}

\newcommand{\ket}[1]{\mathinner{|#1\rangle}}

\newcommand{\dyad}[1]{| #1\rangle \langle #1|}
\newcommand{\ot}[0]{\otimes}
\newcommand{\one}[0]{\mathds{1}}
\renewcommand{\a}{\alpha}

\newcommand{\II}{\mathcal{I}}

\newcommand{\Al}{A^{(\ell)}}
\newcommand{\Bl}{B^{(\ell)}}

\newcommand{\N}{\mathds{N}}
\newcommand{\R}{\mathds{R}}
\newcommand{\C}{\mathds{C}}

\newcommand{\by}[1]{{\boldsymbol{#1}}}

\newcommand{\qn}{\by{q}^n}

\newcommand{\EE}{\mathcal{E}}
\newcommand{\HH}{\mathcal{H}}

\newcommand{\aaa}{a}
\newcommand{\bbb}{b}
\newcommand{\ccc}{c}
\newcommand{\ddd}{d}
\newcommand{\x}{x}
\newcommand{\y}{y}

\usepackage{thmtools, thm-restate}
\newtheorem{theorem}{Theorem}
\newtheorem*{theorem*}{Theorem}
\newtheorem{proposition}[theorem]{Proposition}

\newtheorem{definition}[theorem]{Definition}
\newtheorem{corollary}[theorem]{Corollary}
\newtheorem{observation}[theorem]{Observation}
\newtheorem{example}[theorem]{Example}
\newtheorem{remark}[theorem]{Remark}

\newtheorem*{problem*}{Problem}
\newtheorem*{question*}{Question}

\newtheorem*{result*}{Result}

\newtheorem{thmA}{Result}

\newcommand{\nn}{\nonumber}

\renewcommand{\overbar}[1]{\mkern 1.5mu\overline{\mkern-1.5mu#1\mkern-1.5mu}\mkern 1.5mu}

\makeatletter
\renewcommand*\env@matrix[1][\arraystretch]{%
	\edef\arraystretch{#1}%
	\hskip -\arraycolsep
	\let\@ifnextchar\new@ifnextchar
	\array{*\c@MaxMatrixCols c}}
\makeatother

\begin{document}

	\title[
	SDP bounds on quantum codes
	]{SDP bounds on quantum codes}

	\date{\today}

	\author{Gerard Anglès Munné\textsuperscript{*}} %

	\address{
	Institute of Theoretical Physics and Astrophysics, University of Gdansk, ul. Wita Stwosza 57, 80-308 Gdansk, Poland\textsuperscript{*}}

	\author{Andrew Nemec\textsuperscript{†}} %
    	\address{
    	Department of Computer Science,
    	University of Texas at Dallas,
	800 Franklyn Jenifer Dr.,
    	Richardson, TX, 75080, USA\textsuperscript{†}
    	}

	\author{Felix Huber\textsuperscript{§}}
	\address{Division of Quantum Computing,
	Faculty of Mathematics, Physics and Informatics,
	University of Gdańsk,
	Wita Stwosza 57,
	80-308 Gdańsk, Poland\textsuperscript{§}}
	\email{felix.huber@ug.edu.pl}

	\thanks{GM and FH were supported by the Foundation for Polish Science through TEAM-NET (POIR.04.04.00-00-17C1/18-00).
	GM was also supported by NCN grant no. 2024/53/B/ST2/04103.
	FH's research was funded in whole or in part by
	the National Science Centre, Poland 2024/54/E/ST2/00451,
	the Polish National Agency for Academic Exchange
	under the Strategic Partnership Programme grant BNI/PST/2023/1/00013/U/00001,
	the Agence National de Recherche grant ANR-23-CPJ1-0012-01,
	and the Region Nouvelle-Aquitaine grant 34982420.
	AN was supported by QLCI.
	Much of the work was written while FH and GM were affiliated with the
	Jagiellonian University Kraków, Poland;
	while FH was affiliated with the
	Bordeaux Computer Science Laboratory (LaBRI), University of Bordeaux, France;
	and while AN was affiliated with the Duke Quantum Center, Duke University, USA.
	For the purpose of Open Access, the author has applied a CC-BY public copyright licence to any Author Accepted Manuscript (AAM) version arising from this submission.
	We thank
	Christine Bachoc,
	Simeon Ball,
	Bram Bekker,
	Sébastien Designolle,
	Dion Gijswijt,
	Marcus Grassl,
	Carlos de Gois,
	Otfried Gühne,
	Kiara Hansenne,
	Igor Klep,
	Victor Magron,
	Moisés Bermejo Morán,
	Claudio Procesi,
	Eric Rains,
	René Schwonnek,
	Jurij Volčič,
	Andreas Winter,
	Zhen-Peng Xu,
	and
	Karol Życzkowski
	for fruitful discussions and helpful comments.
	We especially want to thank the anonymous Referee ``2'' and Dion Gijswijt for extensive comments,
	which really helped to improve this manuscript.
}

\begin{abstract}
This paper provides a semidefinite programming hierarchy based on state polynomial optimization
to determine the existence of quantum codes with given parameters.
The hierarchy is complete, in the sense that a $(\!(n,K,\delta)\!)_2$ code exists
if and only if every level of the hierarchy is feasible.
It is not limited to stabilizer codes and thus is applicable generally.
While the machinery is formally dimension-free,
we restrict it to qubit codes through quasi-Clifford algebras.
We derive the quantum analog of a range of classical results:
first, from an intermediate level a Lov\'asz bound for self-dual quantum codes is recovered.
Second, a symmetrization of a minor variation of this Lov\'asz bound recovers the quantum Delsarte bound.
Third, a symmetry reduction using the Terwilliger algebra leads
to semidefinite programming bounds of size $O(n^4)$.
With this we give an alternative proof that there is no $(\!(7,1,4)\!)_2$ quantum code,
and show that $(\!(8,9,3)\!)_2$ and $(\!(10,5,4)\!)_2$ codes do not exist.
\end{abstract}

\maketitle

\section{Introduction}

Coding theory studies how to store or send messages in the presence of noise.
Typically, a message consists of a bit string of which a noisy channel may flip some bits.
A classical error-correcting code allows to encode messages in a redundant manner,
so that it can be recovered after being affected by noise.
In contrast, a quantum message is represented by a positive semidefinite matrix of trace one
on a $K$-dimensional Hilbert space
$\C^K$,
that is,
$\varrho \in L(\C^{K})$ with $\varrho \succeq 0$ and $\tr(\varrho)=1$.
A noisy quantum channel can be written in a Kraus decomposition as $\mathcal{N}(\varrho) = \sum_{E \in \mathcal{N}} E \varrho E^\dag$ with $\sum_{E \in \mathcal{N}} E^\dag E \preceq \one$.
A quantum error-correcting code then encodes a $K$-dimensional Hilbert space to a subspace
of a $n$-qubit Hilbert space $(\C^2)^{\ot n}$,
such that the action of a noisy quantum channel can be corrected.
The Knill-Laflamme conditions
provide necessary and sufficient conditions for this to be possible~\cite{PhysRevLett.84.2525}.
Formulated in terms of the projector $\Pi$ onto the code subspace they read:
if $\Pi E^\dag F \Pi = c_{EF} \Pi$
for all errors $E,F$ appearing in the channel $\mathcal{N}$, then there exists a recovery map $\mathcal{R}$ such that
$\mathcal{R}(\mathcal{N} (\varrho))\propto \varrho$.
The size $K$ of the quantum code equals the rank of $\Pi$.

A fundamental problem in both classical and quantum coding theory is to determine
the maximum code size $K$
for a given block-length~$n$ and minimum distance~$\delta$.
The maximum size can be bounded with a linear program using weight enumerators~\cite{10.1007/978-94-010-1826-5_7,PhysRevLett.78.1600}:
for a classical linear code ~$C$,
the weight enumerator counts the number of codewords of Hamming weight~$j$, $A_j(C) = |\{c \in C\, :\, \wt(c)=j\}|$.
Here $\wt(c)$ is the number of non-zero positions in the bit string $c$.
In contrast, the quantum weight enumerators $A_j(\Pi)$ decompose the
Hilbert-Schmidt norm
$\tr(\Pi^\dag \Pi)$
according to the weights
of quantum correlations~\cite{1751-8121-51-17-175301}:
one has $A_j = \sum_{E \in \EE_n,\, \wt(E)= j} \tr(\Pi E^\dag) \tr(\Pi E)$,
where $\EE_n$ is the $n$-qubit Pauli basis
and $\wt(E)$ the number of coordinates in which $E$ acts nontrivially.
The dual enumerator is given by
$B_j = \sum_{E \in \EE_n,\, \wt(E)= j} \tr(\Pi E^\dag \Pi E)$~\footnote{
For stabilizer and codeword-stabilized quantum codes,
the weight enumerators allow for combinatorial interpretations~\cite{9738648, lai2023semidefinite},
but for general quantum codes no such interpretation is known.}.
The introduction of weight enumerators allowed for efficiently computable linear programming bounds on the parameters $n$, $K$, and $\delta$ that a code may take.
In the classical case, these are strengthened by a complete hierarchy of semidefinite programs~\cite{10.1137/0801013,lasserre2001explicit}.
However, no such hierarchy for the quantum case exists in the literature. The aim of this paper is to fill this gap
and, similar to the classical case, link it to the Lov\'asz theta number of a graph and a Delsarte bound.

Our main result is a complete semidefinite programming (SDP)
hierarchy for the existence of quantum codes [Result~\ref{thm:A}, Section~\ref{sec:stcode}]:
a code with given parameters $(\!(n,K,\delta)\!)_2$ exist
if and only if every level of the hierarchy is feasible.
Inspired by~\cite{PhysRevLett.132.200202},
our hierarchy is based on the state polynomial optimization framework,
which provides a sequence of outer approximations
to the set of non-linear quantum correlations.
This allows us to approximate the quantum weight enumerators $A_j$ and $B_j$,
onto which the Knill-Laflamme and other conditions can be imposed.
While this machinery is formally dimension-free,
we pull this result back to $n$-qubit systems
through the characterization of quasi-Clifford algebras by Gastineau-Hills~\cite{gastineau-hills_1982}.

We then derive quantum Lov\'asz and Delsarte bounds.
Given a graph, a subset of vertices is termed independent if no two of its vertices are adjacent.
The Lov\'asz theta number $\vartheta$ then bounds the independence number $\alpha$ of a graph, that is the largest cardinality of an independent set
as $\alpha(G) \leq \vartheta(G)$.
The computation of $\alpha$ is NP-hard~\cite{Karp1972},
but $\vartheta$ itself can be expressed as a semidefinite program~\cite{1055985}.
This provides an efficiently computable upper bound on $\alpha$.
A classical code corresponds to a subset of the graph of all binary words of length $n$
for which words of Hamming distance less than $\delta$ are adjacent.
As a consequence, the Lov\'asz theta number also provides an upper bound
on the maximum code size.
A suitable symmetrization of the SDP for the Lov\'asz theta number
yields the classical Delsarte bound, which encodes the condition that the primal and dual weight enumerators are non-negative.
The derivation relies on the fact that, given the indexing vector of a code $\chi$,
the matrix $\chi \chi^T$ is positive semidefinite
and has zero entries when the row and column strings
have a distance in $1,\dots, \delta-1$~\cite{PhDGijswijt}.

We show that for self-dual quantum codes~\footnote{
Self-dual codes,
that is, quantum codes with $K=1$,
are always assumed to be pure,
meaning they must satisfy $\tr(E \Pi) = 0$ for all $0 < \wt(E)<\delta$.
Otherwise the Knill-Laflamme conditions would be satisfied trivially.},
quantum Lov\'asz and Delsarte bounds arise in a similar fashion [Result~\ref{thm:B}, Section~\ref{sec:Delsarte}],
with the added constraint that the matrix also has zeros
in rows and columns whose Pauli strings anticommute.

Finally, following the strategy of Ref.~\cite{GIJSWIJT20061719}
we give efficiently computable semidefinite programming bounds [Result~\ref{thm:C}, Section~\ref{sec:sym_red_Terwilliger}]:
similar to the classical case, these scale as $O(n^4)$, where~$n$ is the block length,
and build on the block-diagonalization of the Terwilliger algebra
of the non-binary Hamming scheme.
With this, we give an an alternative proof that there is no
$(\!(7,1,4)\!)_2$ quantum code~\footnote{
The original proof in Ref.~\cite{PhysRevLett.118.200502}
does not make use of mathematical optimization.}, as well as prove the non-existence of quantum codes with parameters $(\!(8,9,3)\!)_2$ and $(\!(10,5,4)\!)_2$ [Section~\ref{sec:appli_codes}].
This shows that the method can lead to tighter bounds than the
``enhanced'' quantum linear programming bounds
that include the shadow inequalities by Rains~\cite{817508, 796376},
which do not recover our results.

Overall, our approach is applicable to both stabilizer and non-stabilizer codes.
It is based on moment matrices and thus offers high flexibility on the constraints
to be included,
so that small relaxations can be constructed.
Lastly, because our formalism is dimension-free,
it can in principle also target scenarios where the Hilbert space is unknown.

\section{Contributions}
Our aim is to determine the existence of quantum codes with block-length $n$,
code distance~$\delta$,
and the dimension of the code space~$K$.
A quantum code with parameters $(\!(n,K,\delta)\!)_2$ is then able to encode a $K$-dimensional system into $n$ qubits,
such that errors acting on at most $\lfloor \tfrac{\delta-1}{2} \rfloor$
physical qubits can be corrected.
A fundamental problem in quantum coding theory is~\cite{1986Hill, dougherty2015open}:

\begin{question*}
Does there exist a quantum code with parameters $(\!(n,K,\delta)\!)_2$?
\end{question*}
Partial answers to this questions are known in form of the quantum linear programming~\cite{PhysRevLett.78.1600,
681316}
and analytical upper bounds~\cite{782103, NebeRainsSloane2001,
nemec2023hamminglikebounddegeneratestabilizer,
Huber_2020,
Grassl_2022}.
Our main results are the following:
\begin{thmA}\label{thm:A}
	A quantum code with parameters $(\!(n,K,\delta)\!)_2$ exists
	if and only if
	the semidefinite programming hierarchy of Eq.~\eqref{eq:code_SDP} is feasible
	at every level $\ell \in \N$.
\end{thmA}

Result~\ref{thm:A} is shown in Section~\ref{sec:stcode}, Theorem~\ref{thm:code}.
The proof is based on the characterization of quantum error correction through the Knill-Laflamme conditions,
the quantum MacWilliams identity, and the state polynomial optimization framework
to optimize over valid quantum weight enumerators.
From an intermediate level of the hierarchy we recover the following result for self-dual quantum codes:
\begin{thmA}\label{thm:B}
The Lov\'asz  theta number $\vartheta$ bounds the existence of self-dual quantum codes.
The symmetrization over distance preserving automorphisms of a minor variation of the Lov\'asz bound
recovers the quantum Delsarte bound.
\end{thmA}
Result~\ref{thm:B} is shown in
Section~\ref{sec:Lovasz}, Corollary~\ref{cor:lovasz}; and
Section~\ref{sec:Delsarte}, Theorem~\ref{thm:Lovasz2Delsarte}.
The proof is based on a confusability graph constructed from the Pauli basis $\EE_n$
in Definition~\ref{def:pure_conf_graph},
with additional edges arising from anti-commutation relations.
Finally, we provide a symmetry-reduced semidefinite program for general quantum codes with $K\geq1$:
\begin{thmA}\label{thm:C}
There is a symmetry-reduction of an intermediate level of the semidefinite programming
hierarchy based on the Terwilliger algebra of size $O(n^4)$.
This recovers the non-existence of the
$(\!(7,1,4)\!)_2$ code, and proves the non-existence of $(\!(8,9,3)\!)_2$ and $(\!(10,5,4)\!)_2$ codes.
\end{thmA}
Result~\ref{thm:C} is shown in
Section~\ref{sec:sym_red_Terwilliger}, Theorem~\ref{prop:sym_red_sdp}; and Section~\ref{sec:appli_codes},
Observations~\ref{prop:infeas_cert} and~\ref{prop:infeas_cert2}.
The proof is based on symmetrizing an intermediate level of the hierarchy over distance preserving automorphisms of the Pauli basis $\EE_n$ that leave the identity invariant.
The non-existence proof for the $(\!(7,1,4)\!)_2$ code is provided as an infeasibility certificate in Appendix~\ref{app:inf}. The certificates for $(\!(8,9,3)\!)_2$ and $(\!(10,5,4)\!)_2$ codes
can be found online in \url{https://github.com/ganglesmunne/SDP_bounds_for_quantum_codes} [Ref. \cite{Munné2024github}].

\subsection{Proof sketch: SDP hierarchy}
	We now give a sketch of the proof of Result~\ref{thm:A}.
	Consider a quantum channel
	$\mathcal{N}(\varrho) = \sum_{E \in \mathcal{N}} E \varrho E^\dag$.
	A quantum code requires that the Knill-Laflamme conditions~\cite{PhysRevA.55.900}
	are met:
	A projector~$\Pi$ corresponds to a quantum code able to correct errors induced by~$\mathcal{N}$,
	if and only if
	\begin{align}\label{eq:KnillLafl}
		\Pi E^\dag F \Pi = c_{EF} \Pi \quad \text{for all } \quad E, F \in \mathcal{N}\,, \quad c_{EF} \in \C\,.
	\end{align}

	\bigskip

	In this paper we consider errors that are Pauli strings,
	that is tensor products of Pauli matrices and the identity.
	The set of Pauli strings
	$\EE_n = \{I,X,Y,Z\}^{\ot n}$
	forms a complete orthonormal basis
	of operators acting on $(\C^2)^{\ot n}$.
	The weight~$\wt(E_a)$ of an operator $E_a \in \EE_n$
	is the number of positions it acts non-trivially on.
	If a code has distance~$\delta$,
	then all errors of weight $\lfloor \tfrac{\delta-1}{2}\rfloor$
	can be corrected.
	The size of the code space is $K = \tr(\Pi)$ with $\Pi$ being a projector acting on a $n$ qubit Hilbert space.

	By Eq.~\eqref{eq:KnillLafl}, an $(\!(n,K,\delta)\!)_2$  code exists
	if and only if the following program is feasible:
	\begin{align}\label{eq:find}
		\operatorname{find}\quad & \Pi\nn\\
		\text{subject to} 	\quad & \Pi^2 = \Pi \,, \quad
		\tr(\Pi) = K \,,\nn\\
		& \Pi E_a^\dag E_b \Pi = c_{ab} \Pi \quad\quad \text{for all}
		\quad E_a, E_b \in \EE_n\,:\, \wt(E_a^\dag E_b)<\delta \,.
	\end{align}
	Here $c_{ab}\in \C$ and $\Pi$ acts on $(\C^2)^{\ot n}$.

	We show how the feasibility of the program~\eqref{eq:find}
	can be formulated as an SDP hierarchy.
	The hierarchy is complete, in the sense
	that an $(\!(n, K, \delta)\!)_2$ code exist if and only if every level of the hierarchy is feasible.
	It rests on the following observations:
	\begin{itemize}

		\item[A.] The Knill-Laflamme conditions for a projector $\Pi$ of rank $K$ to correspond to a quantum code
		are nonlinear in $\Pi$.
		That is, they can be formulated as the requirement that $K B_j(\Pi) = A_j(\Pi)$ for all $j = 0,\dots,\delta-1$ \cite{681316}, where
		\begin{equation}
	 A_j(\Pi)= \sum_{\substack{E_a \in \EE_n \\ \wt(E_a)=j}} \tr(E_a\Pi)\tr(E_a^\dag\Pi)\,, \quad 		 B_j(\Pi)= \sum_{\substack{E_a \in \EE_n \\ \wt(E_a)=j}} \tr(E_a\Pi E_a^\dag\Pi)\,.
		\end{equation}

		\item[B.]
		The quantum MacWilliams identity allows one to formulate this condition in terms of the $A_j(\Pi)$ alone~\cite{681316}.
		Define $A(x,y) = \sum_{j=0}^n A_j(\Pi) x^{n-j} y^j$ and likewise for $B(x,y)$.
		The identity states that the $B_j(\Pi)$ can be expressed as linear combinations of the~$A_j(\Pi)$,
			\begin{align}\label{eq:mac0}
			B(x,y) & = A\Big(\frac{x+3y}{2},\frac{x-y}{2}\Big)\,.
		\end{align}
	\end{itemize}

	We are now left with the task to determine the set of possible
	$A_j(\Pi)$ where $\Pi$ is a projector acting on $(\C^2)^{\otimes n}$.

	\begin{itemize}
	\item[C.] First note that the conditions for a $n$-qubit state $\varrho$
		to be proportional to a projector~$\Pi$ of rank~$K$
		are nonlinear in its expectation values.
		To see this, note that swap operator exchanging the two tensor factors of
		$(\C^d)^{\ot n} \ot (\C^d)^{\ot n}$ can be written as
		$(1,2)=2^{-n}\sum_{E_a\in \EE_n} E_a^\dag \ot E_a$ where the sum is over the $n$-qubit Pauli basis $\EE_n$~ \cite{Eltschka2018distributionof}.
		Decomposing cycles into elementary transpositions,
		a generalization of the swap trick then allows one to evaluate traced powers of a state~\cite{huber2020positive},
		\begin{align}
			\tr(\varrho^{m})= \tr\big((1,2,\dots, m)\varrho^{\ot m}\big) \,.
		\end{align}
		The fact that $\varrho=\Pi/K$
		if and only if $\tr(\varrho^{m})=1/K^{m-1}$ for all $m\in \N$ recovers the conditions for $\Pi$ to be a projector.

		\item [D.]
		Second, note that the set of $n$-qubit Pauli operators can be defined algebraically:
		The characterization of algebras $\mathscr{C}$ over a complex field $\C$ whose elements satisfy $\alpha_i^2 = 1 $ and $\alpha_j \alpha_i = (-1)^{\chi_{ij}} \alpha_i \alpha_j$ with $\chi_{ij}  \in \{0,1\}$, $1\leq i<j\leq m$,
		was given by Gastineau-Hills~\cite{gastineau-hills_1982}.
		These quasi-Clifford algebras $\mathscr{C}$ are isomorphic to
		a direct sum of Pauli operators acting on $s$-qubit systems,
		\begin{equation}
	 \mathscr{C} \simeq \bigoplus^{2^r}_{i=1} (\C_2)^{\ot s} \,,
		\end{equation}
		where $\C_2$ is the space of complex $2\times 2$  matrices and $r+2s=m$.
		In combination with the constraints of the hierarchy,
		this restricts the Gelfand-Naimark-Segal construction of the state polynomial optimization framework
		to states and operators acting on the Hilbert space of $n$~qubits.

		\item [E.] Third, the framework of state polynomial optimization~\cite{klep2023state} (also known as scalar extension~\cite{pozas2019quantum,PhysRevLett.123.140503})
		allows us to optimize over expressions that are nonlinear in expectation values of a state, such as
		$A_j(\varrho)= \sum_{E_a \in \EE_n ,\, \wt(E_a)=j} \tr(E_a\varrho)\tr(E_a^\dag\varrho)$.
		This is done by a variant of the Navascués-Pironio-Acín hierarchy~\cite{doi:10.1137/090760155} for non-commutative optimization.
		It consists of a hierarchy of semidefinite programs
		whose objective value converges to that of a non-commutative state polynomial optimization problem.

	\end{itemize}
	This way, the state polynomial optimization hierarchy
	allows us to optimize over the set of possible weight enumerators arising from $n$-qubit quantum codes.
	This gives a complete semidefinite programming hierarchy
	for the existence of qubit quantum codes with parameters $(\!(n,K,\delta)\!)_2$.
	That is, a code with given parameters exist if and only if every level of the hierarchy is feasible.
	Semidefinite programming duality can then be used to find infeasibility certificates to prove that no code with the given parameters exists.

\subsection{Proof sketch: Lov\'asz, Delsarte, and symmetry-reduced SDP bounds}

	We sketch the proofs of Results~\ref{thm:B} and ~\ref{thm:C}.
	\begin{itemize}

	\item [A.]
	We consider a relaxation of the complete hierarchy of Result~\ref{thm:A} that is indexed by $\langle E_a \rangle E_a$ with
	$E_a \in \EE_n$.
	The key object is a matrix with entries
	$	\Gamma_{\aaa\bbb}=\text{Re}(\av{E_\aaa^\dagger} \av{E_\bbb} \av{E_\aaa E_\bbb^\dagger})$.
	Such matrix is positive semidefinite by construction, and satisfies $\Gamma_{00} = 1$,  $\Gamma_{a0} = \Gamma_{aa}$,
	and $\Gamma_{ab} = 0$ if $E_a E_b + E_b E_a = 0$.
	For pure codes we additionally have
	$\Gamma_{ab} = 0$ if $0<\wt(E_a)<\delta$,
	$0<\wt(E_b)<\delta$,
	or  $0<\wt(E_aE^\dagger_b)<\delta$.

	\item [B.]
	Consider the case of self-dual codes, i.e., pure codes with $K=1$.
	Then the set of matrices $\Gamma$ with this structure can be identified with the Lov\'asz theta body for a graph with loops,
	that is the feasible region of Eq.~\eqref{eq:lovasz_SDP2},
	\begin{equation}
	 \mathrm{TH}(G) = \Big\{ \operatorname{diag}(M)
	 \,\big|\, \begin{pmatrix}
	            1 &x^T \\
				x &M
	           \end{pmatrix}  \succeq 0\,,
	           x_a = M_{aa} \,\forall a\,, M_{ab} = 0 \text{ if } a \sim b \Big\} \nn \,.
	\end{equation}
	A projector onto a self-dual code  additionally satisfies $\sum_{E_a \in \EE_n} \Gamma_{aa} = 2^n$.
	As a consequence, the Lov\'asz theta number gives a bound on the existence of self-dual quantum codes.

	\item [C.] 	Consider the the anti-commutativity graph $G$ of the $n$-qubit Pauli group.
				Averaging $\mathrm{TH}(G)$ of under distance preserving automorphisms
				$\Aut(4,n)$ yields, along with additional equality constraints from projector conditions,
				the quantum Delsarte bound~[Eq.~\eqref{eq:lp_Delsarte}].

	\item [D.]  For arbitrary quantum codes,
				the matrix $\Gamma$ satisfies a range of additional constraints.
				These arise
				from the (projective) group structure of the indexing set
				and from the fact that the code subspace corresponds to a projector $\Pi^2 = \Pi$.
				In combination, this yields the main semidefinite programming relaxation in Eq.~\eqref{eq:sdpgamma}.

	\item [E.]  Using the Terwilliger algebra for $4$-ary codes, we average this main relaxation over distance-preserving automorphisms
				that keep the zero codeword invariant $\Aut_0(4,n)$. A block-diagonalization using the Terwilliger algebra~\cite{GIJSWIJT20061719} yields to a symmetry-reduced semidefinite program of size $O(n^4)$.
	\end{itemize}

	A variation of the Lov\'asz bound recovers the non-existence of the $(\!(7,1,4)\!)_2$ code,
	while a variation of the symmetry-reduced semidefinite programm shows the non-existence of
	$(\!(8,9,3)\!)_2$ and $(\!(10,5,4)\!)_2$ codes.

\section{Key concepts}
\subsection{Quantum error-correcting codes.}\label{subsec:qec}

	A quantum error-correcting code with parameters $(\!(n,K,\delta)\!)_2$ encodes a Hilbert space of dimension $K$ in $n$ qubits.
	A code of distance $\delta$ can correct all errors acting on at most
	$\lfloor \tfrac{\delta-1}{2}\rfloor$ qubits.

	Define a $n$-qubit error basis (or Pauli basis) ${\EE_n}$
	by considering all $n$-fold tensor products of
	the identity and the three Pauli matrices,
	\begin{equation}\label{eq:pauli}
	I =
	\begin{pmatrix}
		1 & 0 \\
		0 & 1
	\end{pmatrix}
	\,,\quad
	X =
	\begin{pmatrix}
		0 & 1 \\
		1 & 0
	\end{pmatrix}
	\,,\quad
	Y =
	\begin{pmatrix}
		0 & -i \\
		i & 0
	\end{pmatrix}
	\,,\quad
	Z =
	\begin{pmatrix}
		1 & 0 \\
		0 & -1
	\end{pmatrix}\,.
	\end{equation}
	The elements of $\EE_n$ then generate the
	$n$-qubit Pauli group~$P_n = \{i^b E_a \,|\, E_a \in \EE_n\,, b = 0,1,2,3\}$.
	Denote by $\wt(E_a)$ the Hamming weight of an error $E_a$, which is the number of positions an error $E_a\in {\EE_n}$ acts on non-trivially.

	Given a projector $\Pi$ onto a code subspace,
	the Knill-Laflamme conditions~\cite{PhysRevA.55.900} state that the code
	has distance $\delta$, if and only if
	\begin{align}\label{eq:KnillLafl2}
		\Pi E_a^\dag E_b \Pi = c_{ab} \Pi \quad  \text{for all} \quad\wt(E_a^\dag E_b)<\delta \,,
	\end{align}
	where  $c_{ab}\in \C$.
	A code is termed pure if $c_{ab} =\tr\!\left(E_a^\dagger E_b\Pi\right)/K= 0$ for all $E_a^\dag E_b\neq \one$
	with $0 < \wt(E_a^\dag E_b) < \delta$~\cite{681315, PhysRevA.69.052330}. Note that all codes with $K=1$ are assumed to be pure, as the Knill-Laflamme conditions would be trivial otherwise.
	Such codes are termed self-dual~\footnote{This is an extension of the nomenclature for stabilizer codes, where $k=0$ ($K=1$) implies the stabilizer group is isomorphic to a self-dual additive classical code.}.
	A motivation to study self-dual codes comes from the propagation rules
	for pure codes:
	any pure $(\!(n,K,\delta)\!)_2$ gives rise to a family of
	$(\!(n-s, 2^s K, \delta-s)\!)_2$ codes
	for $s=1, \dots, \delta-1$~\cite[Theorem 19]{681316}.
	An additional interest lies in the fact that self-dual codes correspond to absolutely maximally entangled and $m$-uniform states~\cite{Rather_2022, 1751-8121-51-17-175301}.	

	\subsection{Stabilizer codes}\label{subsec:stabcodes}

	Given $n-k$ commuting independent generators
	$g_i \in P_n$, let $S = \langle g_1, \dots, g_{n-k}\rangle$
	be a subgroup of the $n$-qubit Pauli group $P_n$ not containing $-\one$.
	The stabilizer group $S$ defines a code subspace,
	whose corresponding projector is,
	\begin{align}\label{eq:stab_code_proj}
		\Pi= \frac{1}{2^{n-k}}\prod^{n-k}_{i=1} (\one+g_i) =\frac{1}{2^{n-k}} \sum_{s \in S} s\,.
	\end{align}
	In particular, the $s_i \in S$ stabilize $\Pi$, so that
	$s\Pi =\Pi s=\Pi$ for all $s \in S$.
	A stabilizer code encodes $k$ logical qubits into $n$ physical qubits with $k\leq n$ and $K=2^k$.
	The size of the stabilizer group is $|S|=\frac{2^{n}}{K}=2^{n-k}$ and when $k=0$ ($K=1$) one has a stabilizer state.

	\subsection{Quantum weight enumerators}
	Given a projector $\Pi$ onto a code space,
	we define two quantum weight enumerators\footnote{We follow Rains \cite{681316} and omit the normalization factor from the original
	formulation by Shor and Laflamme~\cite{PhysRevLett.78.1600}.} as
	\begin{align}\label{eq:enum}
		A_j(\Pi)=& \sum_{\substack{E_a\in {\EE_n} \\ \wt(E_a)=j}} \tr(E_a^\dag \Pi) \tr(E_a\Pi) \,,\\
		B_j(\Pi)=& \sum_{\substack{E_a\in{\EE_n} \\ \wt(E_a)=j}} \tr(E_a\Pi E_a^\dagger\Pi)\,,
	\end{align}
	where the sum runs over all errors $E_a$ of weight $j$ in the $n$-qubit Pauli basis ${\EE_n}$.
	Let $K=\rank(\Pi)$, then we have
	\begin{align}\label{eq:enumcond}
	 A_0(\Pi) &= K^2\,,\nn\\
	 A_j(\Pi) &\geq 0 \,, \nn\\
	 B_j(\Pi) &\geq 0 \,, \nn\\
	 KB_j(\Pi) 			&\geq A_j(\Pi) \,,
	\end{align}
	for all $j=0,\dots, n$.
	Self-dual codes satisfy $A_j(\Pi) = B_j(\Pi)$ for all $j$.
	This is seen by expanding $A_j$ and $B_j$
	in terms of a projector onto a pure state $\Pi = \dyad{\psi}$.

	Interestingly, the Knill-Laflamme conditions~[see Eq.~\eqref{eq:KnillLafl2}] can be formulated in terms of the quantum weight enumerators.
	\begin{theorem}{\cite[Theorem 17]{681316}}\label{thm:Rains}
		A projector $\Pi$ of rank $K$ on $n$ qubits corresponds to a quantum code of distance $\delta$,
		if and only if
		\begin{align}\label{eq:klenum}
			KB_j(\Pi) &= A_j(\Pi) \quad \text{for}	\quad  0\leq j < \delta\,.
		\end{align}
	\end{theorem}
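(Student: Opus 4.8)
The plan is to prove both implications of the stated equivalence $KB_j(\Pi) = A_j(\Pi)$ for $0 \le j < \delta$ iff $\Pi$ has distance $\delta$, working directly from the Knill--Laflamme conditions in Eq.~\eqref{eq:KnillLafl2}. First I would establish a convenient rewriting of the enumerators. Expanding $A_j$ and $B_j$ in the Pauli basis and using the trace orthogonality $\tr(E_a^\dag E_b) = 2^n \delta_{ab}$, I would express each $\tr(E^\dag \Pi)$ in terms of the matrix elements of $\Pi$ and collect terms. The key identity to extract is a ``weighted'' version: for any Hermitian operator $M$ define $B_j^M(\Pi) = \sum_{\wt(E)=j} \tr(\Pi E^\dag \Pi E)$ type expressions, and relate $\sum_{\wt(E)=j}\big(K\tr(\Pi E^\dag \Pi E) - \tr(E^\dag\Pi)\tr(E\Pi)\big)$ to a manifestly nonnegative quantity. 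The cleanest route is the one underlying Eq.~\eqref{eq:enumcond}: write $\Pi = \sum_i \dyad{\psi_i}$ in an orthonormal basis of the code space, so that $\tr(E\Pi) = \sum_i \bra{\psi_i} E \ket{\psi_i}$ and $\tr(\Pi E^\dag \Pi E) = \sum_{i,j} |\bra{\psi_i} E \ket{\psi_j}|^2$. Then
\begin{align}
K B_j(\Pi) - A_j(\Pi) = \sum_{\wt(E)=j} \Big( K \sum_{i,j} |\bra{\psi_i}E\ket{\psi_j}|^2 - \big|\textstyle\sum_i \bra{\psi_i}E\ket{\psi_i}\big|^2 \Big),
\end{align}
and by Cauchy--Schwarz $\big|\sum_i \bra{\psi_i}E\ket{\psi_i}\big|^2 \le K \sum_i |\bra{\psi_i}E\ket{\psi_i}|^2 \le K \sum_{i,j}|\bra{\psi_i}E\ket{\psi_j}|^2$, so $KB_j - A_j \ge 0$ always, with equality term-by-term precisely when, for every $E$ with $\wt(E)=j$, the off-diagonal blocks vanish ($\bra{\psi_i}E\ket{\psi_j}=0$ for $i\neq j$) and the diagonal entries $\bra{\psi_i}E\ket{\psi_i}$ are all equal. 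This is exactly the Knill--Laflamme condition $\Pi E \Pi = c_E \Pi$ restated in a basis.

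Next I would run the argument over all $j < \delta$ simultaneously. For the forward direction, assume $\Pi$ has distance $\delta$: then for every pair $E_a, E_b$ with $\wt(E_a^\dag E_b) < \delta$ we have $\Pi E_a^\dag E_b \Pi = c_{ab}\Pi$; taking $E_a = \one$ shows in particular $\Pi E \Pi = c_E \Pi$ for every single Pauli $E$ with $\wt(E) < \delta$. Feeding this into the block decomposition above forces equality in the Cauchy--Schwarz chain for each such $E$, hence $KB_j(\Pi) = A_j(\Pi)$ for all $0 \le j < \delta$. For the converse, assume $KB_j(\Pi) = A_j(\Pi)$ for $0 \le j < \delta$. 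Since the summand for each $E$ is individually nonnegative, the sum vanishing forces each summand to vanish, so for every $E$ with $\wt(E) < \delta$ we get $\Pi E \Pi = c_E \Pi$ for a scalar $c_E$. The remaining task is to upgrade this ``single-Pauli'' version to the full Knill--Laflamme form $\Pi E_a^\dag E_b \Pi = c_{ab}\Pi$ for all $\wt(E_a^\dag E_b) < \delta$: but $E_a^\dag E_b$ is itself a scalar multiple (a power of $i$) of some Pauli $E$ with $\wt(E) = \wt(E_a^\dag E_b) < \delta$, so this is immediate, and the claimed equivalence with Eq.~\eqref{eq:KnillLafl2} follows.

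The main obstacle, and the step that needs the most care, is the equality analysis in the Cauchy--Schwarz estimate: I must verify that equality holding \emph{for every} $E$ of weight $< \delta$ is genuinely equivalent to the operator identity $\Pi E \Pi = c_E \Pi$, including checking that the constant $c_E$ is forced to be the common diagonal value and that the vanishing of all off-diagonal blocks $\bra{\psi_i} E \ket{\psi_j}$ together with constancy of the diagonal is basis-independent (it is, since $\Pi E \Pi$ acts as a scalar on the code space iff it does so in one, hence every, orthonormal basis). A secondary subtlety is bookkeeping the two Cauchy--Schwarz steps separately: one must track that both inequalities are saturated, i.e. that the off-diagonal terms vanish \emph{and} the diagonal terms are equal, rather than these trading off against each other. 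Once these points are pinned down the proof is essentially the chain of (in)equalities above read in both directions; I would cite \cite{681316} for the original statement and note that the argument also transparently recovers the inequalities $KB_j \ge A_j$ of Eq.~\eqref{eq:enumcond} as the non-saturated case.
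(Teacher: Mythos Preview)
The paper does not supply its own proof of this theorem: it is quoted verbatim from Rains~\cite{681316} and used as a black box throughout. There is therefore nothing in the paper to compare against, but your argument is correct and is essentially the standard proof one finds in the literature (and indeed the one underlying Rains' original result): expand $\Pi$ in an orthonormal code basis, rewrite $KB_j - A_j$ as a sum of nonnegative terms via Cauchy--Schwarz, and identify the equality case term-by-term with the operator identity $\Pi E \Pi = c_E \Pi$. Your handling of the two subtleties --- that both Cauchy--Schwarz steps must saturate simultaneously, and that the single-Pauli form upgrades to the full Knill--Laflamme condition because $E_a^\dag E_b$ is a phase times a Pauli --- is accurate.
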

	For pure codes, the enumerators additionally fulfill $A_j(\Pi)=0$ for $0 < j < \delta$.
	The quantum MacWilliams identity relates the two enumerators
	through a polynomial transform~\cite{681316},
	\begin{align}\label{eq:mac}
			B(x,y) & = A\Big(\frac{x+3y}{2},\frac{x-y}{2}\Big)\,,
	\end{align}
	where $A(x,y)$ is a polynomial defined as $A(x,y) = \sum^n_{j=0} A_j \, x^{n-j} y^{j}$
	and likewise for $B(x,y)$~\footnote{
	Technically speaking, the
	$A(x,y)$ is the weight enumerator, whose coefficients $A_j$ form the weight distribution. We use the notions interchangeably.}.
	The coefficients of this polynomial transformation can be expanded as
	\begin{align}\label{eq:macexp}
	B_j(\Pi) & = 2^{-n}\sum^n_{i=0} K_j(i;n)A_i(\Pi)\,,
	\end{align}
	where $K_j(i;n):=K_j(i;n,4)$ are the quaternary Krawtchouk polynomials~\cite{macwilliams1977theory}, defined as
	\begin{align}\label{eq:krawpoly}
		K_j(i;n) = \sum^n_{\alpha=0} (-1)^\a 3^{j-\a}
		\binom{i}{\alpha}  \binom{n-i}{j-\alpha}     \,.
	\end{align}
	Note that $KB_0(\Pi)=A_0(\Pi) = K^2$
	corresponds to $\sum^n_{i=0} A_i(\Pi)= 2^nK$.
	The weight enumerator of a self-dual code is invariant under the quantum MacWilliams transform.
	In the case of qubit stabilizer codes, one has the additional constraint that
 	$\sum_{j=0}^{\lfloor n /2 \rfloor} A_{2j}(\Pi) = 2^{n-\log_2(K)-1}$ or
 	$\sum_{j=0}^{\lfloor n /2 \rfloor} A_{2j}(\Pi) =2^{n-\log_2(K)}$
 	for codes of type I and II respectively~\cite{681315,PhysRevA.69.052330}, as well as the constraint that the weight enumerators must take integer values.

	The quantum shadow enumerator $S_j$ can be obtained through the polynomial transform~\cite{796376}
	\begin{align}\label{eq:shadow}
		S(x,y) & = A\Big(\frac{x+3y}{2},\frac{y-x}{2}\Big) \,,
	\end{align}
	and its coefficients expand as

	\begin{align}\label{eq:shadowexp}
			S_j(\Pi)& = 2^{-n}\sum^n_{i=0} (-1)^i K_j(i;n)A_i(\Pi) \,.
	\end{align}
	It is known that $S_j(\Pi)\geq 0$ for all $j=0,\dots, n$,
	giving the so-called shadow bounds on quantum codes~\cite{651000,796376}.
 	If $K=1$ and $n-j$ is odd then $S_j(\Pi) = 0$~\cite[Theorem 12]{796376}.

	\subsection{Linear programming bounds}\label{subsect:LP}
	The enhanced linear programming bounds are given by the conditions from Eqs.~\eqref{eq:enumcond}--\eqref{eq:shadowexp}~\cite{PhysRevLett.78.1600,681316},
	\begin{align}\label{eq:lp}
		\text{find} 	\quad & A_j,\,B_j,\,S_j\geq 0 \nn\\
		\text{subject to}	\quad & A_0=K^2\,, \nn\\
		&  KB_j \geq A_j  \quad\quad\text{with equality for } 0 \leq j < \delta\,,
	\end{align}
	where
	$A_j$, $B_j$, $S_j$ are related by
	Eqs.~\eqref{eq:mac} and \eqref{eq:shadow}.
	In case of self-dual, pure, or stabilizer codes the above-mentioned
	extra constraints apply.
	For given parameters $(\!(n,K,\delta)\!)_2$, if the linear program in Eq.~\eqref{eq:lp} is infeasible, then a corresponding quantum codes does not exist.

	In analogy to the classical case
	(see Eq.~\eqref{eq:lp_Delsarte_class} and \cite{10.1007/978-94-010-1826-5_7,HuffmanKimSole2021}),
	for self-dual codes
	we will call the following the quantum Delsarte linear programming bound,
	\begin{align}\label{eq:lp_Delsarte}
	\eta = \text{max} 	\quad & \sum^n_{j=0} A_j\,, \nn\\
	\text{subject to} 	\quad & A_0 = 1\,, \nn\\
	& A_j \geq 0
	\quad\quad \text{with equality for } 0 < j < \delta\,,
	\nn\\
	& \sum^n_{i=0} K_j(i;n)A_i \geq 0 \quad \text{for all } j=0,\dots, n\,.
	\end{align}
	From $\sum_{i=0}^n A_i(\Pi) = 2^n$,
	it follows that if $\eta<2^n$,
	then
	a code with parameters $(\!(n,1,\delta)\!)_2$
	does not exist.
	It is clear that Eq.~\eqref{eq:lp_Delsarte} relaxes the constraints in Eq.~\eqref{eq:lp}.
	As we show in Section~\ref{sec:Delsarte}, it can also be obtained by
	symmetrization of a Lov\'asz bound for self-dual quantum codes.
	This mirrors the derivation of the classical Delsarte bound [Eq.~\eqref{eq:lp_Delsarte_class}].

	\begin{remark}
	Note that for $K >1$ and without adding extra constraints such as the quantum MacWilliams identity or the Knill-Laflamme conditions
	(i.e.,
	$B_j = \sum_{i=0}^n K_j(i;n) A_i$ for $j=0,\dots,n$ and
	$KB_j = A_j$  for $j=0,\dots, \delta-1$)
	the quantum Delsarte bounds becomes weaker.
	\end{remark}

\subsection{Lov\'asz theta number of a graph}
\label{sec:lovasz_number_def}
	Let $G$ be a simple graph defined by a set of $N$ vertices~$V$ connected by a set of edges $E$.
	For the Lov\'asz theta number there exist a number of equivalent definitions.
	We consider the following two semidefinite programming formulations~\cite{GALLI2017159}.
	We call the following SDP1 for the Lov\'asz theta number:
	\begin{align}\label{eq:lovasz_SDP1}
		\vartheta(G)= \max_M \quad & \sum_{i,j = 1}^N M_{ij}\,,  \nn \\
		\text{subject to} \quad &  \tr(M) =1 \,,  \nn \\
		& M_{ij}=0  \quad \,\,\text{for all} \quad \{i,j\} \in E \,,  \nn \\
		& M \succeq 0 \,.
	\end{align}
	We call the following SDP2 for the Lov\'asz theta number:
	\begin{align}\label{eq:lovasz_SDP2}
		\vartheta(G)=\max_M \quad & \sum^N_{i=1} M_{ii} \,, \nn \\
		\text{subject to} \quad & M_{ii}=a_i \quad\, \text{for all}\quad i \in V \,,  \nn\\
		& M_{ij}=0  \quad \,\,\text{for all}\quad \{i,j\} \in E \,,\nn \\
		& \Delta = \begin{pmatrix} 1 & a^T\\ a & M \end{pmatrix} \succeq 0 \,.
	\end{align}
	The Lov\'asz theta number upper bounds the independence number of a graph,
	that is, the maximum cardinality of a subset of vertices that do not share any edge,
	by $\a(G) \leq \vartheta(G)$. In turn, this also bounds the Shannon capacity of a graph~\cite{1055985}.
	Other mathematical formulations of the Lov\'asz theta number exist~\cite{Knuth1994,porumbel2022demystifyingcharacterizationsdpmatrices, GALLI2017159}.

	A slight variation of SDP1 from Eq.~\eqref{eq:lovasz_SDP1} with the additional entry-wise non-negativity constraint $M_{ij} \geq 0$ for all $i,j\in V$, defines $\vartheta'(G)$.
	For binary codes,
	the computation of $\vartheta'(G)$
	can be shown to reduce to the classical Delsarte bound~\cite{1056072,HuffmanKimSole2021},
	\begin{align}\label{eq:lp_Delsarte_class}
	 \eta_C = \max \quad & 2^n \sum_{i=0}^n x_i \nn\\
				\text{subject to} \quad & x_0 = \frac{1}{2^n} \,,\nn \\
				& x_1 = \dots =x_{\delta-1} = 0 \,,\nn \\
				& x_\delta, \dots, x_n \geq 0 \,,\nn \\
				& \sum_{i=0}^n K_j(i;n,2)x_i  \geq 0 \quad \text{for all } j= 0, \dots, n\,.
	\end{align}
	where $K_j(i;n,2)$ are the binary Krawtchouk polynomials defined as~\cite{macwilliams1977theory},
	\begin{align}\label{eq:krawpolybi}
		K_j(i;n,2) = \sum^i_{\alpha=0} (-1)^\a
		\binom{i}{\alpha}  \binom{n-i}{j-\alpha}     \,.
	\end{align}
	In other words, $\vartheta'(G)=\eta_C$.

\subsection{Semidefinite programming}

	A semidefinite program optimizes a linear function over the self-dual cone of positive semidefinite matrices under a set of linear matrix constraints.
	Denote by $S^k$ the set of real symmetric matrices of size $k$.
	We now follow the exposition from Boyd and Vandenberghe~\cite[Section 4.6.2, Example 5.12]{boyd2004convex}
	for the cone of positive semidefinite matrices, with the Hilbert-Schmidt inner product $\langle A,B\rangle = \tr(A^\dag B)$.

	Consider a semidefinite
	program in standard form
	\begin{align}\label{eq:primal}
		\text{minimize} & \quad \tr(CX) \nn \\
		\quad \text{subject to} & \quad \tr(A_i X)=b_i \quad \text{for} \quad i=1,\dots, p \,,\nn \\
		& \quad X\succeq 0 \,.
		\end{align}
	where $A_1,\dots,A_p, C \in S^k$ and $b_i \in \R$.
	Eq.~\eqref{eq:primal} is called the primal program and $\tr(CX)$ is its objective function.
	To each primal program is associated a dual program which reads
	\begin{align}
		\text{maximize} \quad & \sum^p_{i=1} y_i b_i \nn \\
		\text{subject to} \quad &  C-\sum^p_{i=1} y_iA_i\succeq0 \quad \text{for} \quad  i=1,\dots,p \,.
	\end{align}

	A variable $X$ (or set of $y_i$) satisfying the constraints of the primal (or dual) problem is said to be primal (or dual) feasible.
	Any pair of feasible $X$ and $\{y_i\}_{i=1}^p$ satisfy \emph{weak duality}:
	\begin{align}\label{eq:weak}
		\tr(CX) -\sum^p_{i=1} y_ib_i \geq 0 \,.
	\end{align}
	An optimization problem is converted to a feasibility problem by setting $C=0$.
	Weak duality can then be used to prove primal infeasibility: if there exists a dual feasible solution $\{y_i\}_{i=1}^p$ with $\sum^p_{i=1} y_ib_i >  0$, then Eq.~\eqref{eq:weak} is violated and the primal problem is infeasible.

	It is straightforward to embed complex semidefinite programs in the above formalism, by
	encoding a complex Hermitian matrix $X$ as,
	\begin{equation}\tilde X=
	 \begin{pmatrix}
	  \operatorname{Re}(X) & - \operatorname{Im}(X) \\
	  \operatorname{Im}(X) & \operatorname{Re}(X)
	 \end{pmatrix},
	\end{equation}
	where $\operatorname{Re}(X)$ and the $\operatorname{Im}(X)$ are the real and the imaginary part of $X$, respectively.
	Then $X\succeq 0$ if and only if $\tilde X\succeq 0$.

\subsection{State polynomials}\label{subsec:stpoly}

	Given a state $\varrho$ and a set of observables $\{X_i\}^m_{i=1}$,
	denote by $\av{X_i}_\varrho = \tr(X_i \varrho)$ their expectation values~\footnote{
	Technically, $\av{X_i}_\varrho = \tr(X_i \varrho)$
	only holds for bounded linear trace-class operators $\varrho$ on a separable Hilbert space.
	In general, $\av{X_i}_\varrho = \varrho(X_i)\in \C$.}.
	Noncommutative polynomials over $\C$
	in the products of operators and their expectations are termed state polynomials.
	An example is
	$
	\av{X_1^2}_\varrho X_2 X_1 + \av{X_1}_\varrho \av{X_2 X_1 X_2}_\varrho$
	where the second term is understood as $\one \av{X_1}_\varrho \av{X_2 X_2 X_2}_\varrho$.
	Polynomials containing ``bare'' operators (e.g.,
	$\av{X_1^2}_\varrho X_2 X_1$) are non-commutative state polynomials,
	whereas those only containing expectation values
	(e.g., $\av{X_1}_\varrho \av{X_2 X_1 X_2}_\varrho$)
	are commutative or pure.
	How can one optimize state polynomials over the set of all states and operators satisfying
	a given set of constraints?

	Let $f$ and $\{g_j\}_{j=1}^r$ be state polynomials.
	Consider optimizing $f$ under positive semidefinite constraints
	$g_k \succeq 0$ over the set of quantum states $\varrho$ and operators
	$X_i$ acting on a separable Hilbert space $\HH$,
	\begin{equation}\label{eq:optprob}
		\begin{aligned}
			\zeta^*=\inf_{\mathcal{H}, \{X_i\} ,\varrho} \quad &   \big\langle \, f(X_1,\dots, X_m) \,\big\rangle_\varrho  \\
			\text{subject to} \quad & g_k(X_1,\dots,X_m)\succeq 0 \quad \text{for} \quad k=1, \dots,r \,.
		\end{aligned}
	\end{equation}
	
	Note that Eq.~\eqref{eq:optprob} requires an optimization over the Hilbert space $\mathcal{H}$ and all operators and quantum states with support on $\mathcal{H}$ satisfying the positive semidefinite constraints.
	The framework of state polynomial optimization~\cite{klep2023state}
	(also known as scalar extension~\cite{pozas2019quantum,PhysRevLett.123.140503})
	provides a hierarchy of semidefinite programming approximations to Eq.~\eqref{eq:optprob}.

	To see how this works,
	consider a set of (non-commutative) letters $\{x_i\}^m_{i=1}$
	and form words $w=x_{a_1}\dots x_{a_\xi}$.
	Denote the empty word as $1$, satisfying $1 w=w 1=w$ for all words $w$.
	The degree of a word
	is defined as the number of letters in it.
	The involution of a word is given by $(x_1 \cdots x_m)^*=(x^*_m \cdots x^*_1)$.
	We associate to each word $w$ a scalar with the symbol $\av{w}$.
	We refer to $\av{w}$ as the ``expectation value'' of a word.
	Expectations satisfy 
	\begin{align}
	 w \av{v} &=\av{v}w\,, \quad\quad\quad\quad
	 \av{w \av{v}} =\av{w}\av{v}\nn\\
	 \av{w}\av{v}&=\av{v}\av{w}\,, \quad\quad\quad
	 (\av{v} w )^*=\av{v^*} w^*\,,
	\end{align}
	for all words $v,w$~\footnote{Note the difference to trace polynomials,
	which additionally satisfy $\tr(vw) = \tr(wv)$.
	However, state polynomials generally do not satisfy $\av{vw} = \av{wv}$.}. In general, $x_i^*$ are considered separate variables from the $x_i$.
	They are related to each other by their real and imaginary parts,
	so that
	$\av{x_i^*}^{\text{Re}} =   \av{x_i}^{\text{Re}}$ and
	$\av{x_i^*}^{\text{Im}} = - \av{x_i}^{\text{Im}}$.
	This allows one to define $\av{v^*} = \overbar{\av{v}}$.
	In case of Hermitian variables one identifies $x_i=x_i^*$.

	Products of words and their expectations are known as (letter)
	state monomials, e.g., $w_{i_1} \av{w_{i_2}} \dots  \av{w_{i_m}}$.
	The degree of a state monomial is the sum over the degree of all words that compose it, including the words within expectations.
	State polynomials are formed by a linear combination of monomials over~$\C$,
	where the expectation $\langle \,\cdot\, \rangle$ is extended linearly.
	The degree of a state polynomial is the largest degree of the monomials that it is composed of.
	Let $\C \langle x \rangle$ be the set of non-commutative polynomials over $\C$,
 	and $\C \langle x \rangle_k$ its subset of degree at most $k$.
	Let $\bf S$ be
	the set of non-commutative state polynomials over $\C$,
	and $S = \{\langle p \rangle \,|\, p \in \bf{S}\}$ be the set of (commutative/pure) state polynomials.

	Naturally, (letter) state polynomials behave in the same manner as polynomials in operators and their expectations, i.e., operator state polynomials.
	A word $w = x_{a_1} \cdots x_{a_\xi}$ and its expectation value $\av{w}$
	can be evaluated on a state $\varrho$ and operators
	$X_{a_1},  \dots, X_{a_\xi}$
	so that $W = X_{a_1} \cdots X_{a_\xi}$
	and $\av{W}_\varrho = \av{X_{a_1} \cdots X_{a_\xi}}_\varrho$.
	Then,
	the empty word $1$ corresponds to the identity $\one$
	and $\av{1} = 1$.
	A monomial
	$(x_{a_1} \cdots x_{a_\xi}) \av{x_{b_1} \cdots x_{b_\mu}} \cdots
	\av{x_{\ell_1} \cdots x_{\ell_\nu}}$
	is evaluated as
	$(X_{a_1} \cdots X_{a_\xi}) \av{X_{b_1} \cdots X_{b_\mu}}_\varrho \cdots
	\av{X_{\ell_1} \cdots X_{\ell_\nu}}_\varrho$.
	By linear extension any state polynomial can be evaluated.

	In what follows we use the term state polynomials loosely
	to denote either type, that is letter or operator state polynomials.
	The important difference between the two types is
	that when assigning a set of values to the $\av{w}$,
	there does not necessarily exists a compatible quantum state and operators. Thus with some abuse of notation,
	given a set of $\av{w} \in \C$,
	it can be that there does not exist a $\varrho$
	and a set of operators $W$
	such that $\av{w}= \av{W}_\varrho$.

	\subsection{State polynomial optimization}\label{subsec:stpolyopt}

	Let $\varrho$ be a state and $S_0, \dots, S_u$ be the set
	of all operator state monomials of degree at most $\ell \in \N$
	in operators $X_1, \dots, X_m$.
	We set $S_0 = \one$ for simplicity.
	One can then form a moment matrix with entries
	\begin{align}
		M^{(\ell)}_{ij}=
		\tr\big( {S_i^\dagger} S_j \varrho\big) \quad \in \C^{(u+1) \times (u+1)}\,.
	\end{align}
	Then $M_{00} = 1$ and
	$M^{(\ell)}_{ij} = M^{(\ell)}_{rs}$ if $\langle S_i^\dag S_j \rangle = \langle S_r^\dag S_t \rangle$.
	The moment matrix $M^{(\ell)}$ is positive semidefinite.
	This can be seen from the spectral decomposition
	$ \varrho = \sum_v \lambda_v \dyad{\phi_v}$ with $\lambda_v \geq 0$;
	then for each $\ket{\phi_v}$ define the vector
	\begin{equation}
		\phi_v = \big( S_0 \ket{\phi_v}, S_1 \ket{\phi_v}, \dots, S_u \ket{\phi_v}\big)^\dagger \,.
	\end{equation}
	One sees that $M^{(\ell)} = \sum_v \lambda_v \phi_v \phi_v^\dagger \succeq 0$.

	Likewise, given a state polynomial inequality $g_k(X_1, \dots, X_m)\succeq 0$,
	we can form a moment matrix using all state monomials of up to degree
	$\ell - q_k$ where $q_k :=\lceil \frac{\text{deg}(g_k)}{2} \rceil$~\footnote{Choose $\ell$ large enough so that $\ell - q_k\geq 0$.},
	\begin{align}
		G^{(\ell-q_k)}_{ij} =  \tr( {S_i^\dagger} g_k(X_1,\dots,X_m) S_j \varrho)\,,
	\end{align}
	Also $G^{(\ell-q_k)}_{ij}$ is positive semidefinite,
	which follows from the fact that for every feasible point, $g_k\succeq 0$ can be factorized as a product $g_k=b_kb_k^\dagger$ of a matrix $b_k$.
	 Note that the entries of $G^{(\ell-q_k)}$
	 are linear combinations of the entries of $M^{(\ell)}$.

	The idea of state polynomial optimization is to optimize over a semidefinite program
	where $M^{(\ell)}$ and the matrices $G^{(\ell - q_k)}$, $k=1, \dots, r$
	are positive semidefinite matrix variables,
	such that they satisfy all linear constraints arising from relations among the entries of $M^{(\ell)}$ and the $G^{\ell - q_k}$.
	This provides a relaxation for the optimization of a state polynomial
	under a set of state polynomial constraints.

	To formulate the optimization problem, one uses letter state polynomials.
	For $s,t$ state polynomials we write the entries of the
	state moment matrix as
	$M^{(\ell)}_{st}=\av{s^*t}$.
	To simplify the notation, define
	$M^{(\ell)} (\av{s^* t}) := M^{(\ell)}_{st}$.

	The optimization problem from Eq.~\eqref{eq:optprob} is then approximated
	by the following semidefinite programming hierarchy~\cite{klep2023state,PhysRevLett.123.140503,pozas2019quantum} indexed by $\ell \in \N$,
	\begin{equation}\label{eq:genstpoly}
			\begin{aligned}
		\zeta_{\ell} \quad = \quad \inf_{M^{(\ell)}} \quad &   \langle \, f(x_1,\dots, x_m) \,\rangle\,,  \\
		\text{subject to} \quad & M^{(\ell)}(1)=1 \,,  \\
		& M^{(\ell)}\succeq 0 \,, \\ &G^{(\ell-q_k)} \succeq 0 \quad \quad\quad\quad\quad\quad\,\,\text{for}\quad 1\leq k \leq r\,, \\
		& M^{(\ell)}_{st} = M^{(\ell)}_{\alpha \beta} \quad\quad\quad\quad\quad\quad
		\text{if} \quad\quad\av{s^*t} = \av{\alpha^*\beta}  \,,  \\
		& G^{(\ell-q_k)}_{st}= \sum_{\alpha \beta} c_{\alpha\beta} M^{(\ell)}_{\alpha \beta} \quad\quad
		\text{if} \quad\quad \av{s^*g_k t}= \sum_{\alpha \beta} c_{\alpha\beta} \av{\alpha^*\beta}\,,
		\end{aligned}
	\end{equation}
	where $r$ is the number of constraints,
	$\ell$ is greater than or equal to the degree of the polynomial $f$,
	and $c_{\alpha\beta} \in \C$.

	The program~\eqref{eq:genstpoly} provides a lower bound $\zeta_{\ell}\leq \zeta^*$, but in general $\zeta_{\ell}\neq \zeta^*$. Let us consider convergence of $\zeta_\ell$ to $\zeta^*$.
	A constraint set is {\em balanced}, if $C^* = C$ and for every non-symmetric element $g_k^* \neq g_k$
	we also have $- g_k \in C$. This allows us to handle equality constraints by requiring both $g_k \succeq 0$ and $-g_k \succeq 0$.
	Corollary 6.1 in combination with Corollary 6.7 in \cite{klep2023state}
	states that
	if the constraint set is balanced and
	if there exists a $N>0$ such that
	$N - \sum_i x_i^* x_i  = \sum_j p_j^* p_j + \sum_k \lambda_k g_k$ for some
	$p_j \in \C\langle x \rangle_1$,
	$g_k \in C \cap \C \langle x \rangle_2$,
	and $\lambda_k§ \in \R_{\geq 0}$,
	then the sequence of relaxations in
	Eq.~\eqref{eq:genstpoly} converges to the optimum $\zeta^*$ of Eq.~\eqref{eq:optprob},
	\begin{align}\label{eq:arch}
		\lim_{\ell\rightarrow \infty} \zeta_{\ell} = \zeta^* \,.
	\end{align}
	Under certain flatness conditions,
	a variant of the Gelfand-Naimark-Segal (GNS) construction
	allows one to construct 
	state $\varrho$ and operators $\{X_i\}$
	on a Hilbert space
	$\mathcal{H}$ from the sequence $\{M^{(\ell)}\}_{\ell = 0}^\infty$
	satisfying the constraints of the optimization problem~\cite[Proposition 6.10]{klep2023state}.

	A short remark on imposing constraints:
	inequality constraints are imposed by building a localizing matrix
	$G^{(\ell-q_k)}_{st} \succeq 0$ from each constraint $g_k$.
	Equality constraints $g_i=0$ can be imposed in two different ways:
	either they are enforced by requiring both $g_k \succeq 0$ and $-g_k \succeq 0$, or 
	alternatively,
	a set of equalities $E = \{e_i \,|\, i=1,\dots, m_e\}$
	can be modeled by replacing $\bf S$
	by the quotient ring ${\bf S} / I$,
	where 
	$I=\left\{ \sum_{i} p _{i}g_i q_{i}: p_i,q_i\in {\bf S} \right\}$.
	A simplified construction of this can be done for group algebras~\cite[Section 7.1]{klep2023state}.

	\subsection{Quasi-Clifford algebras}\label{sect:qca}
	Let $F$ be a commutative field of characteristic not equal to $2$ and $m$ be a positive integer.
	Let $\{k_j\}_{j=1}^m \in F$ be non-zero elements
	and
	$\chi_{ij} \in \{0,1\}$ for $1\leq i<j\leq m$.
	A quasi-Clifford (QC) algebra is defined as an associative algebra over $F$ containing the identity
	formed by $m$ generators $ \alpha_1,\dots,\alpha_m$,
	such that
	\begin{equation}
	 \alpha_i^2 = k_i\,, \quad \quad \alpha_j \alpha_i = (-1)^{\chi_{ij}} \alpha_i \alpha_j \quad \text{for }i<j\,.
	\end{equation}
	Denote by $\mathbb{C}_{b}$ the QC algebra of one generator $\beta$ with $\beta^2=b$ and
	by $\mathbb{Q}_{cd}$ the QC algebra of two generators $\gamma$, $\delta$ with $\gamma^2=c$, $\delta^2=d$, and $\gamma\delta=-\delta\gamma$.
	\begin{theorem}{\cite[Theorem 2.7]{gastineau-hills_1982}}\label{thm:QC}
		A QC algebra $\mathscr{C}$ with $m$ independent generators is isomorphic to
		\begin{equation}\label{eq:thmQC}
			\mathscr{C}\cong\mathbb{C}_{b_1} \ot \dots \ot \mathbb{C}_{b_r} \ot \mathbb{Q}_{c_1d_1} \ot \dots \ot \mathbb{Q}_{c_sd_s} \,,
		\end{equation}
		where  $m=r + 2s$  with $r$, $s \geq 0$ and  $b_i$, $c_i$, $d_i$ are plus or minus products of some $k_j$'s.
		\end{theorem}

		Restricting to $\alpha_i^2 = k_i=1$ and $F$ to $\C$, then $\mathscr{C}$ contains only $\mathbb{C}_{\pm 1}$ and $\mathbb{Q}_{\pm1\pm1}$ as tensor factors.
		Ref.~\cite[Eq.~(3.1)]{gastineau-hills_1982} lists the irreducible representations
		of the generators.
\begin{equation}\label{eq:CQ_qubit}
\begin{array}{ll}
	\text{Field}  & \text{Generators} \\[0.1cm] \hline \\[-0.4cm]
	\mathbb{C}_{1} & \beta\rightarrow1 \quad \text{or} \quad \beta\rightarrow-1  \\[0.2cm]
	\mathbb{C}_{-1} &  \beta\rightarrow i \quad \text{or} \quad \beta\rightarrow-i  \\[0.2cm]
	\mathbb{Q}_{\pm1,1}&  \gamma\rightarrow\begin{pmatrix} 0 & \pm1 \\ 1 & 0 \end{pmatrix}, \quad  \delta\rightarrow\begin{pmatrix} 1 & 0 \\ 0 & -1 \end{pmatrix}  \\[0.5cm]
	\mathbb{Q}_{1,-1}&  \gamma\rightarrow\begin{pmatrix} 0 & \phantom{-}1 \\ 1 & \phantom{-}0 \end{pmatrix}, \quad \delta\rightarrow\begin{pmatrix}[1] 0 & -1 \\ 1 & 0 \end{pmatrix}  \\[0.5cm]
	\mathbb{Q}_{-1,-1}&  \gamma\rightarrow\begin{pmatrix} 0 & -1 \\ 1 & 0 \end{pmatrix}, \quad \delta\rightarrow\begin{pmatrix} i & 0 \\ 0 & -i \end{pmatrix}
\end{array}
\end{equation}

		This shows that
		$\mathbb{C}_{\pm 1} \cong 2F$ and  $\mathbb{Q}_{\pm1,\pm1} \cong F_2$,
		where $2F = F \oplus F$ and $F_2$ is the algebra of complex $2\times 2$ matrices.
		Thus, Eq.~\eqref{eq:thmQC} can be written as a direct sum
		\begin{align}\label{eq:qc_direct_sum}
		\mathscr{C}\cong 2F \ot \cdots \ot 2F \ot F_2 \ot \cdots \ot F_2=2^rF_{2^s} \,.
		\end{align}
		Here we use that $\otimes$ distributes over $\oplus$ and $F_m \ot F_n \cong F_{nm}$.
 		As a consequence, any quasi-Clifford algebra over $\C$ with $k_i = 1$

		is isomorphic
		to a direct sum of
		$2^r$ algebras, each formed (up to phases)
		by $s$-qubit Pauli operators,
 		where $m = r+2s$.

\begin{remark}
The direct sum in Eq.~\eqref{eq:qc_direct_sum} comes from the fact that given a set of operators with some commutation/anti-commutation relations and squaring to $\one$,
the mapping to Pauli operators is not unique. For example, given $x_1,x_2,x_3$ with $x_i x_j = -x_j x_i$ and $x_i^2=\one$,
the elements $x_1$, $x_2$, $x_3$ can be mapped to
the Pauli matrices $X$, $Z$, $Y$, respectively, but also to $X$, $Z$, $-Y$, respectively. Different blocks in the direct sum
correspond to different inequalivalent mappings to Pauli operators.
\end{remark}

\section{Complete SDP hierarchy bounding quantum codes}\label{sec:stcode}

Our aim is now to express the conditions for the existence of quantum codes
as a state polynomial optimization problem.
For this we formulate the Knill-Laflamme conditions,
the quantum MacWilliams identity,
and the constraints for the state to be proportional to a projector of rank $K$ as state polynomial constraints.
This gives rise to a complete hierarchy for the existence of an $(\!(n,K,\delta)\!)_2$ code.

\subsection{Framework}

	Consider the $n$-qubit error basis ${\EE_n}$ formed by $n$-fold tensor products of Pauli matrices~[see Eq.~\eqref{eq:pauli}].
	Define a set of letters $\mathscr{E}_n$ so that to each Pauli error $E_a \in {\EE_n}$ we associate a letter $e_a \in \mathscr{E}_n$. We require this to be a group isomorphism, so that
	\begin{equation}\label{eq:rel}
	 	 e_ a e_b= \omega_{abc} e_c \quad \text{if} \quad E_a E_b= \omega_{abc} E_c
	 	 \quad\quad \text{for all $a,b,c$}\,.
	\end{equation}
	For the Pauli group it holds that $\omega_{abc} \in \{\pm1, \pm i\}$.
	The Pauli relations also imply that $ e_a^* = e_a$ and $e_a^*e_a = 1$,
	where the empty word $1$ corresponds to identity operator~$\one$.
	For state polynomials,
	we extend the relations in Eq.~\eqref{eq:rel}
	linearly to expectations,
	so that additionally we also have
	$\langle e_ a e_b \rangle = \omega_{abc} \langle e_c \rangle$ if $E_a E_b= \omega_{abc} E_c$.
	In effect, this imposes the Pauli relations onto the hierarchy
	in terms of the quotient ring
	$\bold{S}_n / I$ where $\bold{S}_n$ is the set of all state polynomials formed by the letters in $\mathscr{E}_n$ and $I$ is the ideal generated by the $n$-qubit Pauli relations.
	Let $M^{(\ell)}$ be the moment matrix from the state polynomial optimization framework [see Eq.~\eqref{eq:genstpoly}],
	indexed by all state monomials of degree at most~$\ell$.
	We impose all constraints from Eq.~\eqref{eq:rel} on $M^{(\ell)}$.
	Recall that
	the entries of the state moment matrix are denoted by 
	\begin{equation}
		M^{(\ell)} (\av{s^* t}) := M^{(\ell)}_{st}\,.
	\end{equation}

	We now formulate all the conditions for a quantum code
	in terms of constraints on state polynomials, that is, as
	constraints on the entries of $M^{(\ell)}$.

	\subsection{Quantum MacWilliams identity} The quantum weight enumerator $A_j(\varrho)$ in Eq.~\eqref{eq:enum}
	is a state polynomial in~$\varrho$.
	We can then define a state polynomial representing the quantum weight enumerator,
	\begin{align}\label{eq:Astpoly}
		A_j^{(\ell)}
		=\sum_{\substack{e\in \mathscr{E}_n \\\wt(e)=j}} M^{(\ell)}(\av{e^*}\av{e\,}) \,,
	\end{align}
	so to correspond to
	$
	A_j(\varrho) =  \sum_{E\in {\EE_n}\,, \wt(E) = j}
	\tr(E^\dag \varrho) \tr(E \varrho)
	$.

	However, the quantum weight enumerator $B_j(\varrho) = \sum_{E\in {\EE_n}\,, \wt(E) = j} \tr(\varrho E^{\dag} \varrho E)$ does not have such a straightforward interpretation as a state polynomial.
	But recall that the quantum MacWilliams identity [see Eq.~\eqref{eq:mac}] linearly transforms the $A_i(\varrho)$ to the $B_j(\varrho)$ enumerators. We can then linearly transform $A^{(\ell)}_j(\varrho)$ to $B^{(\ell)}_j(\varrho)$ using
	\begin{align}\label{eq:QMW_spo}
	B^{(\ell)}_j & = 2^{-n}\sum^n_{i=0} K_j(i;n)A^{(\ell)}_i\,,
	\end{align}
	where $K_j(i;n)$ are the Krawtchouk polynomials defined in Eq.~\eqref{eq:krawpoly}.

	\subsection{Knill-Laflamme conditions}
	We now impose the Knill-Laflamme conditions as constraints on $M^{(\ell)}$. These conditions can be formulated in terms of
	quantum weight enumerators.
	Thus we demand that
	\begin{equation}\label{eq:KLM_spo}
	 K \Bl_j = \Al_j \quad \text{for } \quad  0\leq j < \delta\,.
	\end{equation}

	\begin{remark}\label{rmk:extraconst}
	One can additionally include the remaining linear programming constraints shown in Eq.~\eqref{eq:lp},
	\begin{align}
	S^{(\ell)}_j &\geq 0 \,,\nn \\
	K \Bl_j &\geq \Al_j \,.
	\end{align}

	Here $S^{(\ell)}_j$ mirrors the shadow enumerator of Eq.~\eqref{eq:shadowexp}.
	The condition $A^{(\ell)}_j, \, B^{(\ell)}_j \geq 0$ is redundant, since $\Gamma \succeq 0$ contains the elements of $A^{(\ell)}_j$ in its diagonal.
	These extra constraints do not have an impact on completeness.
	\end{remark}

\subsection{Projector constraints}

	We now impose conditions for a state $\varrho$ to be proportional to a projector $\Pi$ of rank $K$ on $M^{(\ell)}$.
	The state $\varrho$ then fulfills for all $m \in \N$,
	\begin{align}\label{eq:projconst}
	\tr(\varrho^{m})=\frac{1}{K^{m-1}}\,.
	\end{align}

	The converse also holds, namely that
	a state $\varrho$ satisfying Eq.~\eqref{eq:projconst} for all $m \in \N$ must be proportional to a projector of rank $K$.
	To see this, note that the set $\{\tr(A^m)\,:\, m=1,\dots, r\}$
	determines the characteristic polynomial of a square matrix $A$ of size $r$.
	The zeros of this characteristic polynomial then determine the eigenvalues of~$A$.
	Fixing the moments as in Eq.~\eqref{eq:projconst} implies that the state $\varrho$ is proportional to a projector of rank $K$.

	We now show how to write Eq.~\eqref{eq:projconst} in terms of state polynomials.
	For this we
	decompose the cyclic permutation $(1,\dots,m)$ into elementary transpositions,
	\begin{equation}
	 (1,\dots, m)=(m ,\, m-1) \dots(2, 3) (1, 2) \,.
	\end{equation}
	Second, expand the swap operator through the Pauli basis of $(\C^2)^{\ot n}$~\cite{lecture_Wolf},

	\begin{equation}
	 (1,2) ={2^{-n}}\sum_{E \in {\EE_n}} E\ot E^\dagger\,.
	\end{equation}
	The generalization of the swap trick~\cite{huber2020positive} then gives
	\begin{align}\label{eq:trace}
		\tr(\varrho^{m})&=\tr\big((1,\dots, m)\varrho^{\ot m}\big) \nn\\
		&=2^{-n(m-1)}\!\!\!\!\!\!\sum_{E_1,\dots,E_{m-1} \in {\EE_n}}
		\tr(E_{1} \varrho)
		\tr(E^\dagger_{1} E_{2} \varrho)
		\dots
		\tr(E^\dagger_{m-2} E_{m-1} \varrho)
		\tr(E^\dagger_{m-1} \varrho)\,.
	\end{align}
	Therefore, Eq.~\eqref{eq:projconst} is now expressed as
	 \begin{align}\label{eq:stprop}
	\sum_{E_1,\dots,E_{m-1} \in {\EE}_n}
	\tr(E^\dagger_{1} \varrho)
	\tr(E_{1} E^\dagger_{2}\varrho)
	\dots \tr(E_{m-2} E^\dagger_{m-1} \varrho)
	\tr(E_{m-1} \varrho)
	= \Big( \frac{2^n}{K}\Big)^{m-1}\,.
	\end{align}
	The condition~\eqref{eq:projconst} for a state to be proportional to a projector of rank $K$, i.e.
	$\varrho = \Pi / K$, corresponds to the following constraint on the moment matrix for $m =2,\dots,2\ell$.
	\begin{align}\label{eq:stpropM}
	P^{(\ell)}_m =
	\sum_{e_{1},\dots,e_{{m-1}} \in {\mathscr{E}_n}}
	M^{(\ell)} \big(
	\langle e^*_{1} \rangle
	\langle e_{1} e^*_{2} \rangle
	\dots
	\langle e_{{m-2}} e^*_{{m-1}} \rangle
	\langle e_{{m-1}} \rangle \big)
	=\Big( \frac{2^n}{K}\Big)^{m-1} \,.
	\end{align}

\begin{remark}\label{rmk:alt_proj} As pointed out to us by Dion Gijswijt, alternatively to Eq.~\eqref{eq:projconst}, the constraints
\begin{align}\label{eq:alt_proj}
\tr(E_a\varrho^2)=\frac{1}{K}\tr(E_a\varrho) \quad \text{for} \quad a=0,\dots, 4^n-1
\end{align}
also give necessary and sufficient conditions for $\varrho$ to be proportional to a projector of rank $K$, due to the fact that the $E_a$ form a complete basis.
These conditions can also be written in terms of state polynomials:
similar to Eq.~\eqref{eq:trace}, expand
\begin{align}\label{eq:alt_proj_swap}
	\tr(E_a \varrho^2) &=
	\tr\left((1,2)E_a \varrho \ot \varrho\right) \nn \\
	&= \frac{1}{2^n}\sum_{E_b\in \EE_n}\tr(E_a E_b^\dagger\varrho)  \tr(E_b\varrho) \,.
\end{align}
Then, the condition from Eq.~\eqref{eq:alt_proj} can be written in terms of elements of $M^{(\ell)}$ as
\begin{align}\label{eq:alt_proj_spo}
	\frac{1}{2^n} \sum_{e_b \in \mathscr{E}_n} M^{(\ell)}\left(\av{e_a e_b^\dagger}\av{e_b}\right) =  \frac{1}{K}M^{(\ell)}\left(\av{e_a}\right) \quad \text{for} \quad a=0,\dots, 4^n-1
	\end{align}
The main advantage of Eq.~\eqref{eq:alt_proj_spo} in comparison to Eq.~\eqref{eq:alt_proj} is that already at the level $\ell=2$
a full set of projector conditions is included.

\end{remark}

\subsection{An SDP hierarchy for qubit codes}\label{sect:SDPhierc}
	We now formulate a semidefinite programming hierarchy
	that contains the necessary and sufficient conditions for a state to correspond to a quantum code:
	the projector constraints guarantee that the state is
	proportional to a projector of rank $K$, while the Knill-Laflamme conditions ensure that this projector 
	corresponds to a quantum code subspace of minimum distance $\delta$.

	From now on consider state polynomials in the quotient ring $\bold{S}_n/I$,
	where $I$ is the ideal generated by the $n$-qubit Pauli relations.
	Then consider the following hierarchy indexed by $\ell \in \N$,
	with the $M^{(\ell)}$ is indexed by all state monomials  in $\bold{S}_n/I$ of degree at most $\ell$ in hermitian variables in $\EE_n$.
	\begin{align}\label{eq:code_SDP}
	\text{find} \quad &   M^{(\ell)} \succeq 0  \nn \\
	\text{subject to} \quad
	& M^{(\ell)}(1)=1 \,, \nn \\
	& M^{(\ell)}_{st} = M^{(\ell)}_{\alpha \beta}  \quad\quad\quad\,\,\,\,\text{if} \quad \av{s^*t}= \av{\alpha^*\beta}\,, \nn \\
	&KB^{(\ell)}_j=A^{(\ell)}_j \quad\quad\quad\,\, \text{for}
	\quad 0 \leq j < \delta \nn, \\
	&  P^{(\ell)}_{m} =
	\Big(\frac{2^n}{K}\Big)^{m-1} \quad\quad \text{for }\,\,\, m=2,\dots, 2\ell\,,
	\nn \\
	&\text{$A_j^{(\ell)}$, $B^{(\ell)}_j$, $P^{(\ell)}_{m}$ are given by Eqs.~\eqref{eq:Astpoly}, \eqref{eq:QMW_spo}, \eqref{eq:stpropM}  respectively.}
	\end{align}

 	For every $\ell \in \N$, the SDP~\eqref{eq:code_SDP}
 	is a relaxation of the optimization problem in Eq.~\eqref{eq:find}.
	\setcounter{thmA}{0}

	\begin{theorem}\label{thm:code}
		A quantum code with parameters $(\!(n,K,\delta)\!)_2$ exists
		if and only if
		the semidefinite programming hierarchy of Eq.~\eqref{eq:code_SDP} is feasible
		at every level $\ell \in \N$.
	\end{theorem}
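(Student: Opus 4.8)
The plan is to prove both directions of the equivalence separately, relying on the ingredients already assembled in the excerpt: the characterization of quantum error correction via the Knill--Laflamme conditions (Theorem~\ref{thm:Rains}), the projector characterization through traced powers (Eq.~\eqref{eq:projconst}), the quantum MacWilliams identity (Eq.~\eqref{eq:mac}), the state polynomial optimization framework with its convergence guarantee under Archimedeanity (Eq.~\eqref{eq:arch}), and Gastineau-Hills' classification of quasi-Clifford algebras (Theorem~\ref{thm:QC}, Corollary~\ref{col:dsum}).

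For the \emph{forward} direction, suppose a code with parameters $(\!(n,K,\delta)\!)_2$ exists, with projector $\Pi$ acting on $(\C^2)^{\ot n}$. Set $\varrho = \Pi/K$. Then I would exhibit an explicit feasible point of the SDP~\eqref{eq:code_SDP} at every level~$\ell$ by taking the moment matrix $M^{(\ell)}_{vw} = \tr(E_v^\dagger E_w \varrho)$ built from the Pauli realization of the letters $\mathscr{E}_n$, extended multiplicatively to all state monomials of degree $\le \ell$ by replacing each $\av{w}$-factor with $\tr(W\varrho)$. Positive semidefiniteness and $M^{(\ell)}(1)=1$ follow from the GNS-type argument already given in Section~\ref{subsec:stpoly}; the homomorphism relations~\eqref{eq:rel} hold because the $E_a$ genuinely satisfy the Pauli group relations; the Knill--Laflamme constraints $KB^{(\ell)}_j = A^{(\ell)}_j$ for $j<\delta$ hold by Theorem~\ref{thm:Rains} together with the fact that $A^{(\ell)}_j$ then equals the true enumerator $A_j(\Pi)$ and $B^{(\ell)}_j$ equals $B_j(\Pi)$ via the MacWilliams transform~\eqref{eq:macexp}; and the projector constraints $P^{(\ell)}_\sigma = (2^n/K)^{2\ell-|\sigma|}$ hold by the swap-trick computation~\eqref{eq:trace}--\eqref{eq:stprop} together with $\tr(\varrho^m)=K^{-(m-1)}$. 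This gives feasibility at every level.

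For the \emph{converse}, suppose the hierarchy~\eqref{eq:code_SDP} is feasible at every level $\ell\in\N$. I would invoke the convergence statement of state polynomial optimization: here the generators satisfy $e_a^* e_a = 1$, so the Archimedean condition is met (take $N = |\mathscr{E}_n|$), and hence a GNS construction produces a Hilbert space $\HH$, a state $\varrho$, and operators realizing the letters $\mathscr{E}_n$ such that all imposed polynomial constraints hold on $\varrho$ exactly. Since these operators square to the identity and obey the commutation/anticommutation pattern of the Pauli group, Corollary~\ref{col:dsum} forces them to be (up to phases) a direct sum of $s$-qubit Pauli bases; restricting to a single block $(\C^2)^{\ot s}$ on which $\varrho$ has nonzero component, one obtains a genuine Pauli realization on a qubit Hilbert space. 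On that block, the constraints $P^{(\ell)}_\sigma = (2^n/K)^{2\ell-|\sigma|}$ for $\ell \ge 2^s$ (in particular all $m$ up to the matrix dimension) force $\tr(\varrho^m)=K^{-(m-1)}$, so by the characteristic-polynomial argument $\varrho = \Pi/K$ for a rank-$K$ projector $\Pi$; and the constraints $KB^{(\ell)}_j = A^{(\ell)}_j$ become $KB_j(\Pi) = A_j(\Pi)$ for $j<\delta$, so by Theorem~\ref{thm:Rains} $\Pi$ is the projector of an $(\!(n,K,\delta)\!)_2$ code.

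The main obstacle is the converse: one must carefully reconcile the dimension-free GNS output with the qubit requirement. The subtle points are (i) that the SDP only imposes relations among state monomials of bounded degree, so one needs $\ell$ large enough that the finitely many relations determining the characteristic polynomial of $\varrho$ are all present --- this is where $n$ being fixed and the Pauli algebra being finite-dimensional is essential, since it caps the relevant $m$; (ii) that Corollary~\ref{col:dsum} is applied to the \emph{image} algebra generated by the GNS operators, and one must argue that passing to an irreducible block does not destroy any of the scalar constraints (the $\av{w}$-values are central, hence constant on each block, so a block with the correct $\av{1}=1$ can be selected); and (iii) handling the phases $\omega_{abc}\in\{\pm1,\pm i\}$, which Corollary~\ref{col:dsum} only controls up to sign, so one verifies that the weight enumerators $A_j$, which involve $\tr(E^\dagger\varrho)\tr(E\varrho)$, are insensitive to these phase ambiguities. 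None of these steps is deep individually, but assembling them into a clean argument --- rather than a routine bookkeeping exercise --- is where the care lies.
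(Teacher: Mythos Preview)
Your proposal is correct and follows essentially the same approach as the paper: the forward direction by plugging in the actual code state to build the moment matrix, and the converse via Archimedeanity $\Rightarrow$ GNS $\Rightarrow$ quasi-Clifford classification $\Rightarrow$ projector constraints $\Rightarrow$ Knill--Laflamme. If anything, you are more explicit than the paper about the block-restriction and phase issues; note that for the specific commutation pattern of the $2n$ Pauli generators $X_i,Z_i$ one has $r=0$, $s=n$ in Corollary~\ref{col:dsum}, so there is in fact a single block of the correct size and your point~(ii) becomes trivial.
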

	\begin{proof}
	\noindent
	($\Longrightarrow$):
	Let $\varrho$ be a state proportional to the projector of a quantum code with parameters $(\!(n,K,\delta)\!)_2$.
	Given the set of Pauli operators $\EE_n$, we can construct a moment matrix
	$M^{(\ell)}_{ST} = \tr(S^\dag T \varrho)$
	indexed by all Pauli state monomials of at most degree $\ell$.
	It is clear that for all $\ell \in \N$,
	$M^{(\ell)}\succeq 0$ and $M^{(\ell)}(\one)=1$ [see Section~\ref{subsec:stpoly}].
	Additionally, $M$ satisfies the projector conditions in Eq.~\eqref{eq:stpropM} as $\varrho$ is a density matrix proportional to a projector of rank $K$.
	Finally,
	$M$ satisfies the state polynomial quantum MacWilliams identity [Eq.~\eqref{eq:QMW_spo}] and Knill-Laflamme conditions in enumerator form [Eq. ~\eqref{eq:KLM_spo}].
	Thus, if there exists a code with parameters
	$(\!(n,K,\delta)\!)_2$, then
	every level of the hierarchy is feasible.

	\noindent
	($\Longleftarrow$):
	Now given the parameters $(\!(n,K,\delta)\!)_2$, suppose every level of the hierarchy Eq.~\eqref{eq:code_SDP} is feasible.
Corollary 6.1 and Proposition 6.7 of \cite{klep2023state}
imply that the sequence of relaxations in Eq.~\eqref{eq:code_SDP}
converges to the optimum of

	\begin{align}\label{eq:code_SDP__}
	\text{find} \quad \quad& \HH, \{X_a\}_{a=0}^{4^n-1}, \varrho \nn\\
	\text{subject to}\quad
	& KB_j(\varrho) = A_j(\varrho) \quad \text{for}
	\quad 0 \leq j < \delta \,,\nn\\
	&
	A_j(\varrho)
	=\sum_{{X_a, \, \wt(X_a)=j}}
		 \tr(X_a \varrho)  \tr(X_a^\dag \varrho)  \,,\nn\\
		 	&B_j(\varrho) = 2^{-n}\sum^n_{i=0} K_j(i;n)A_i(\varrho)\,, \nn\\
		 &\text{$\{X_a\}_{a=0}^{4^n-1}$ satisfies all $n$-qubit Pauli relations}\,, \nn\\
		& P_m =
	\sum_{ {a_1},\dots,{a_{m-1}} = 0}^{4^n-1}
	\tr(X^\dagger_{a_1} \varrho)
	\tr(X_{a_1} X^\dagger_{a_2}\varrho)
	\dots \tr(X_{a_{m-2}} X^\dagger_{a_{m-1}} \varrho)
	\tr(X_{a_{m-1}} \varrho) \nn
	\\ &=\Big(\frac{2^n}{K}\Big)^{m-1} \quad \text{for all } m \in \N\,,
	\end{align}
	where $\wt$ is a suitable function defined on the $X_i$ corresponding to the weights of the elements in $\mathcal{E}_n$, and the $X_i$ are hermitian.. This is because the set of constraints from~\eqref{eq:code_SDP__} is balanced and
	$4^n - \sum_{e\in \mathcal{E}_n} e e^* = 0$.

	Now we show that the optimization problem in Eq.~\eqref{eq:code_SDP__} completely characterizes a $n$-qubit quantum code.
	First, every finite group algebra decomposes by the Artin-Wedderburn theorem into a direct sum of matrix algebras.
	In our case, Theorem~\ref{thm:QC} restricts the set of operators $X_a$
	to be isomorphic to a direct sum of tensor products of Pauli matrices:
	considering the commutation relations alone, the operators form a quasi-Clifford algebra over $\C$,
	whose elements square to the identity, and no other relations.
	The representation in Eq.~\eqref{eq:CQ_qubit} together with the $n$-qubit Pauli relations
	baked into $\mathscr{E}_n/I$
	implies that these operators are isomorphic
	to the $n$-qubit Pauli basis $\EE_n$.
	As a consequence, $\varrho$ is isomorphic to a $n$-qubit quantum state.
	Finally, the Knill-Laflamme and projector condiditions force $\varrho$ to be the projector onto a code subspace.
	This ends the proof.
	\end{proof}

	In the following, we show how an intermediate level of this hierarchy
	yields a Lov\'asz bound [Section~\ref{sec:Lovasz}], whose symmetrization results a quantum Delsarte bound
	[Section~\ref{sec:Delsarte}].
	Using the Terwilliger algebra of the Hamming association scheme, we also provide efficiently computable symmetry-reduced SDP relaxations [Section~\ref{subsect:sdp_aver}]. Finally, we provide an alternative nonexistence proof of the
	$(\!(7,1,4)\!)_2$ code [Section~\ref{sec:7qb}],
	and show that $(\!(8,9,3)\!)_2$ and $(\!(10,5,4)\!)_2$ codes do not exist [Section~\ref{sec:sym_red_numerics}].

\section{A Lov\'asz bound}
\label{sec:Lovasz}

We now present a relaxation of the hierarchy from Eq.~\eqref{eq:code_SDP}
that leads to a bound involving the
Lov\'asz theta number [Section~\ref{sec:lovasz_number_def}].
The relaxation is indexed by state monomials of the form
$\langle e_a \rangle e_a$ where the $e_a$ correspond to letters from $\mathscr{E}_n$ [see Eq.~\eqref{eq:rel}].
For readability, we will use elements from the $n$-qubit Pauli basis $E_a \in \EE_n$ for indexing rather the corresponding letters $e_a \in \mathscr{E}_n$.

\subsection{An intermediate relaxation level}\label{subsec:inter_relax}
Consider a moment matrix $\Gamma$ indexed by the symbols $\av{E^\dagger_\aaa}E_\aaa$ with $E_\aaa\in \EE_n$,
where $\av{E_\aaa}$ inherits the properties of an expectation $\tr(E_\aaa\varrho)$ for some state $\varrho$,
\begin{align}\label{eq:relax}
	\Gamma_{\aaa\bbb}=\av{E_\aaa^\dagger} \av{E_\bbb} \av{E_\aaa E_\bbb^\dagger}\,.
\end{align}
Such matrix has the form,
\begin{align}
	\kbordermatrix{
		& \one    & \av{E_1} E_1^\dagger & \cdots    & \av{E_{N}} E_{N}^\dagger \\
		\one &  1   & \Gamma_{01}           &\cdots     & \Gamma_{0 N} \\
		\av{E_1^\dagger} E_1 & & \Gamma_{11}    & \hdots   & \Gamma_{1N}\\
		\vdots &       &   & \ddots    & \vdots\\
		\av{E_{N}^\dagger} E_{N}& &           &           & \Gamma_{NN}
	} \succeq 0 \,,
	\label{eq:lovasz-moments}
\end{align}
with $N=4^n-1$ and  $\Gamma_{\aaa0} = \Gamma_{\aaa\aaa}$,
which represents $\tr(E^\dagger_\aaa \varrho) \tr(E_\aaa \varrho)$.
Now note that if $\Gamma$ is a feasible solution of a semidefinite program
with objective function and constraints that depend only on the real part of $\Gamma$,
then $\operatorname{Re}(\Gamma) = (\Gamma + \Gamma^T)/2$
is also feasible with the same objective value.
Without loss of generality, we can therefore restrict $\Gamma$ to be real.
This leads to the constraint
\begin{equation}\label{eq:sim_anti_comm}
 \Gamma_{\aaa\bbb}=0 \quad\quad  \text{if} \quad\quad  E^\dagger_\aaa E_\bbb  + E_\bbb E^\dagger_\aaa = 0\,.
\end{equation}
This follows from the fact that when two Pauli strings
$E_\aaa$ and $E_\bbb$ anti-commute, then $E_
\aaa E_\bbb^\dagger$ is imaginary.

\subsection{Structure constraint}
When $E_\aaa$ and $E_\bbb$ commute,
the matrix $\Gamma$ satisfies an additional constraint
related to a multiplication of three terms:
$\tr(E^\dagger_\aaa \varrho)$,
$\tr(E_\bbb \varrho)$,
and $\tr(E_\aaa E^\dagger_\bbb \varrho)$.
For example, if
$E^\dagger_{\ccc'} = \omega_{\aaa\bbb}E_\aaa E^\dagger_{\bbb}$
and
$E_{\ddd'} = E^\dagger_\aaa$,
then
$\omega_{ab} E_{\ccc'}E^\dagger_{\ddd'} = E_\bbb$,
leading to the condition
\begin{align}
\av{E^\dagger_\aaa}\av{E_\bbb}\av{E_\aaa E^\dagger_\bbb} = \av{E^\dagger_{\ccc'}}\av{E_{\ddd'}}\av{E_{\ccc'} E^\dagger_{\ddd'}}\,.
\end{align}
Therefore, we can write such constraint as
\begin{align}\label{eq:structure_constraint}
	\Gamma_{\aaa\bbb}=\Gamma_{\ccc'\ddd'}\,, \quad \quad\quad  &\text{if} \quad\quad  E_\aaa E^\dagger_\bbb  - E^\dagger_\bbb E_\aaa = 0\quad \text{and} \quad\, \nn\\ \quad & (E^\dagger_\aaa,E_\bbb,\omega_{\aaa\bbb}E_\aaa E^\dagger_\bbb)\quad\text{a permutation of} \quad(E^\dagger_{\ccc'},E_{\ddd'},\omega_{\ccc'\ddd'}E_{\ccc'}E^\dagger_{\ddd'})\,,
\end{align}
where $(\omega_{\aaa\bbb}E_{\aaa}E^\dagger_{\bbb}) \in \EE_n$ and $\omega_{\aaa\bbb}\in \{\pm 1\}$, since when $E_\aaa$ and $E^\dagger_\bbb$ commute $\omega_{\aaa\bbb}$ must be real.
Note that Eq.~\eqref{eq:structure_constraint} includes $\Gamma_{\aaa0}=\Gamma_{\aaa\aaa}$ as a special case.

\subsection{Projector constraints}
As done for the matrix $M^{(\ell)}$ in Eq.~\eqref{eq:stpropM},
we relax the projector constraints $\tr(\rho^2) =1/K$ and $\tr(\rho^3) =1/K^2$ by expressing them, respectively,
through sums over elements of $\Gamma$,
\begin{align}\label{eq:projconst_lvl}
	\sum^{N}_{\aaa=0}\Gamma_{\aaa\aaa}=\frac{2^n}{K}\,, \quad\quad\quad  \sum^{N}_{\aaa,\bbb=0}\Gamma_{\aaa\bbb}=\frac{4^n}{K^2}\,.
\end{align}
Another constraint follows from
$\tr(E_a\varrho^2)=\frac{1}{K}\tr(E_a\varrho)$ as shown in Remark~\ref{rmk:alt_proj}: the sum $\sum^N_{b=0}\Gamma_{ab}$ approximates 
\begin{align}\label{eq:1-sum}
\sum^N_{b=0}\tr(E^\dagger_a\rho)\tr(E_b\rho)\tr(E_a E^\dagger_b\rho) &= \tr(\rho E^\dagger_a) \sum^N_{b=0} \tr((E_b\ot E_a E^\dagger_b)\rho^{\ot 2}) \nn \\
&=\tr(\rho E^\dagger_a) \tr( (\one \ot E_a)\sum^N_{b=0} (E_b\ot E^\dagger_b)\rho^{\ot 2})  \nn \\
&=\tr(\rho E^\dagger_a) \tr( (\varrho \ot \varrho E_a)(1,2)) \nn \\
&= 2^n \tr(\rho E^\dagger_a)\tr(\rho^2 E_a )\,,
\end{align}
where we used that $ \sum_{E_a \in \EE_n} E_a \ot E^\dag_a =2^n(1,2)$. Since $\tr(E_a\varrho^2)=(1/K)\tr(E_a\varrho)$, we obtain
 \begin{align}\label{eq:1-sum_p}
 \sum^{N}_{\bbb=0}\Gamma_{\aaa\bbb}=\frac{2^{n}}{K} \Gamma_{\aaa\aaa} \,.
\end{align}
Note that Eq.~\eqref{eq:1-sum_p} implies the first and second conditions of Eq.~\eqref{eq:projconst_lvl}, 
which corresponds to the projector constraint of degree two, $\tr(\rho^2)=1/K$, and three, $\tr(\rho^3)=1/K^2$, respectively.
The first is obtained by setting $a=0$ in the Eq.~\eqref{eq:1-sum_p} and noting that $\Gamma_{\aaa\aaa} = \Gamma_{\aaa0}$ and $\Gamma_{00} = 1$;
the second is obtained by summing Eq.~\eqref{eq:1-sum_p} over all values of $a$
and taking into count the first condition of Eq.~\eqref{eq:projconst_lvl}.

\subsection{Knill-Laflamme conditions}
Similar to Eq.~\eqref{eq:KLM_spo}, we can approximate the Knill-Laflamme conditions $KB_j(\varrho) = A_j(\varrho)$ if $0\leq j<\delta$ using elements of $\Gamma$
together with $KB_j(\varrho) \geq A_j(\varrho)$ for $j>\delta$ [see Remark~\ref{rmk:extraconst}],
\begin{align}\label{eq:klred}
	\frac{K}{2^n}\sum^n_{i=0} K_j(i;n) \!\!\sum_{\substack{E_{\aaa} \in \EE_n \\ \wt(E_\aaa)=i}} \Gamma_{\aaa0}
	\geq \sum_{\substack{E_\bbb \in \EE_n \\ \wt(E_\bbb)=j}} \Gamma_{0\bbb} \quad\quad \text{with equality for} \quad 1 <j < \delta  \,.
\end{align}
Note that we did not include the condition $KB_0(\varrho)=A_0(\varrho)$ because it is already approximated by $\sum^{N}_{\aaa=0}\Gamma_{\aaa\aaa}=2^n/K$.
This can be seen using Eq.~\eqref{eq:krawpoly}, which gives $K_0(i,n)=1$ for all~$i$. Therefore,  $K2^{-n}\sum^n_{i=0} A_i(\varrho) = A_0(\varrho)$. Since $A_0(\varrho)=1$ and $\Gamma_{\aaa0}=\Gamma_{\aaa\aaa}$, the trace of $\Gamma$ approximates $\sum^n_{j=0} A_j(\varrho)=2^n/K$.

\subsection{Stabilizer codes}

Let $S$ be the stabilizer group of a stabilizer code.
For a stabilizer code, $\Gamma$ represents $\chi \chi^T$ where $\chi$ is the incidence vector of the stabilizer group.
This can be seen in the following way:
given a code, $\Gamma$ has entries $\tr(E_\aaa^\dagger \varrho)\tr(E_\bbb \varrho)\tr(E_\aaa E^\dagger_\bbb \varrho)$.
But $\tr(E_\aaa \varrho) = 1$ if $E_\aaa \in S$, and $\tr(E_\aaa \varrho) = 0$ otherwise.
Thus $\Gamma_{\aaa\bbb} = 1$ if both $E_\aaa, E_\bbb \in S$ and zero otherwise.

We can use this fact in the following way.
For a positive semidefinite matrix, the determinant of every $2\times 2$ minor is non-negative,
leading to
$\Gamma_{\aaa\bbb} \leq\sqrt{\Gamma_{\aaa\aaa}\Gamma_{\bbb\bbb}}$.
Due to $\Gamma_{\aaa\bbb} \in \{0,1\}$ this implies that $\Gamma_{\aaa\bbb} \leq\sqrt{\Gamma_{\aaa\aaa}}$,
giving a constraint for stabilizer codes,
\begin{align}\label{eq:stab_code_Gamma}
	0 \leq \Gamma_{\aaa\bbb} \leq \Gamma_{\aaa\aaa}\,.
\end{align}

\subsection{Pure codes}
In the special case of pure codes, an extra constraint can be added.
Recall that a code is said to be pure if 
$\tr(E^\dagger_\aaa E_\bbb \Pi)=0$ for all $0 < \wt(E^\dagger_\aaa E_\bbb) <\delta$.
Because $\Gamma_{\aaa\bbb}=\av{E^\dagger_\aaa}\av{E_\bbb}\av{E_\aaa E_\bbb^\dagger}$, we have additionally that
\begin{align}\label{eq:extra_zeros}
	\Gamma_{\aaa\bbb}=0
	\quad\quad
	\text{if} \quad\quad 0 < \wt(E^\dagger_\aaa E_\bbb)<\delta\,, \quad 0 < \wt(E^\dagger_\aaa)<\delta\,, \quad \text{or} \quad  0 < \wt(E_\bbb)<\delta\,.
\end{align}
In particular, choosing $b=0$ (since $E_0= \one$) implies that $\Gamma_{\aaa\aaa}=\Gamma_{\aaa0}=0$ for all $\aaa$ with $0<\wt(E_\aaa)<\delta$.
Eq.~\eqref{eq:extra_zeros} strengthens Eq.~\eqref{eq:klred}.

\subsection{Main semidefinite program relaxation for qubits}

Let $\Gamma$ be indexed as in Eq.~\eqref{eq:relax} and
without loss of generality let $\Gamma$ be real.
Consider a SDP relaxation to Eq.~\eqref{eq:find}:
\begin{align}\label{eq:sdpgamma}
	\text{find} 		\quad & \Gamma \succeq 0\,, \nn \\
	\text{subject to} 	\quad
	& \Gamma_{00}=1 \,, \nn\\
	&	\Gamma_{\aaa\bbb}=\Gamma_{\ccc\ddd} \,\,\quad\text{if} \quad  E_\aaa E^\dagger_\bbb  = E^\dagger_\bbb E_\aaa
	\quad \text{and}
	\quad (E^\dagger_\aaa,E_\bbb,\omega_{\aaa\bbb}E_\aaa E^\dagger_\bbb)\quad\, \nn\\ \quad & \hspace{3.3cm} \text{a permutation of} \,\,(E^\dagger_{\ccc},E_{\ddd},\omega_{\ccc\ddd}E_{\ccc}E^\dagger_{\ddd})\,, \nn\\
	&\Gamma_{\aaa\bbb}=0 \quad\quad\quad \text{if}\quad  E^\dagger_\aaa E_\bbb + E_\bbb E^\dagger_\aaa = 0\,, \nn \\
	&\sum^{N}_{\aaa=0}\Gamma_{\aaa\aaa}=\frac{2^n}{K} \,, \nn \\
	& \sum^N_{\bbb=0}\Gamma_{\aaa\bbb}=\frac{2^{n}}{K} \Gamma_{\aaa\aaa}\,, \nn \\
	& \frac{K}{2^n}\sum^n_{i=0} K_j(i,n) \!\!\sum_{\substack{E_{\aaa} \in \EE_n \\ \wt(E_\aaa)=i}} \Gamma_{\aaa0}
	\quad \geq \quad
	\sum_{\substack{E_\bbb \in \EE_n \\ \wt(E_\bbb)=j}} \Gamma_{0\bbb}  \quad \quad \text{with equality for} \quad 0 <j < \delta \,.
\end{align}
The fact that $\Gamma \succeq 0$ implies that its diagonal elements are non-negative.
This recovers the non-negativity of $A_j(\varrho) = \sum_{E_\aaa \in \EE_n ,\, \wt(E_\aaa) = j} \langle E_\aaa^\dag \rangle \langle E_\aaa \rangle \geq 0$
corresponding to $\sum_{E_\aaa \in \EE_n ,\, \wt(E_\aaa) = j} \Gamma_{\aaa\aaa}$.
Eq.~\eqref{eq:sdpgamma} incorporates the Knill-Laflamme and the projector conditions and the structure relations and therefore,
it contains all constraints of the hierarchy in Eq.~\eqref{eq:code_SDP} for the indexing set $\av{E^\dagger_\aaa} E_\aaa$.

We can further constrain the SDP above by including [see Section~\ref{subsect:LP}]:

\begin{enumerate}
\item The shadow inequalities $S_j(\rho) \geq 0$ [Eq.~\eqref{eq:shadowexp}],
\begin{equation}\label{eq:shadowexp_x}
	2^{-n}\sum^n_{i=0} (-1)^i K_j(i;n) \sum_{\substack{E_\aaa \in \EE_n  \\ \wt(E_\aaa) = i}} \Gamma_{\aaa\aaa} \quad \geq \quad 0
\end{equation}
with equality for $K=1$ and $n-j$ odd~\cite[Theorem 12]{796376}.

\item For stabilizer codes add $0 \leq \Gamma_{\aaa\bbb} \leq \Gamma_{\aaa\aaa}$ [Eq.~\eqref{eq:stab_code_Gamma}] and \cite{681315, PhysRevA.69.052330}
\begin{align}
	\sum_{j\geq 0} \sum_{E_\aaa \in \EE_n ,\, \wt(E_\aaa) = 2j} \Gamma_{\aaa\aaa} =
	\begin{cases}
2^{n-\log_2(K)-1} \quad \text{(Type I)}\,, \\  2^{n-\log_2(K)} \quad \quad \!\text{(Type II)}\,.
	\end{cases}
\end{align}

\item For pure codes add [Eq.~\eqref{eq:extra_zeros}],
\begin{align}\label{eq:pure_code_conditions_Gamma}
\Gamma_{\aaa\bbb}=0
\quad\quad
\text{if} \quad\quad 0 < \wt(E^\dagger_\aaa E_\bbb)<\delta\,, \quad 0 < \wt(E^\dagger_\aaa)<\delta\,, \quad \text{or} \quad  0 < \wt(E_\bbb)<\delta\,.
\end{align}

\end{enumerate}

A relaxation of the complete hierarchy in Theorem~\ref{thm:code} is then:
\begin{corollary} If a $(\!(n,K,\delta)\!)_2$ quantum code exists,
then the SDP in Eq.~\eqref{eq:sdpgamma} is feasible.
\end{corollary}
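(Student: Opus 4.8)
The plan is to exhibit an explicit feasible point. Given a quantum code with parameters $(\!(n,K,\delta)\!)_2$ and code projector $\Pi$, I would set $\varrho = \Pi/K$ and define
$\Gamma_{\aaa\bbb} = \Re\big(\tr(E_\aaa^\dagger\varrho)\,\tr(E_\bbb\varrho)\,\tr(E_\aaa E_\bbb^\dagger\varrho)\big)$
for all $E_\aaa,E_\bbb\in\EE_n$, and then check that this $\Gamma$ satisfies every constraint of Eq.~\eqref{eq:sdpgamma}. Essentially all the required computations have already been carried out in the preceding subsections, so the proof amounts to collecting them.

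For positive semidefiniteness, I would note that the complex matrix $\widehat\Gamma_{\aaa\bbb} = \av{E_\aaa^\dagger}\av{E_\bbb}\av{E_\aaa E_\bbb^\dagger}$ is the principal submatrix of a genuine moment matrix $M^{(\ell)}$ — for any $\ell$ large enough that its index set contains all state monomials $\av{E_\aaa^\dagger} E_\aaa$ — obtained by restricting to those rows and columns. By the argument in Section~\ref{subsec:stpoly}, writing $\varrho = \sum_v p_v\dyad{\phi_v}$ gives $\widehat\Gamma = \sum_v p_v\phi_v^\dagger\phi_v\succeq 0$, and hence $\Gamma = (\widehat\Gamma+\widehat\Gamma^T)/2\succeq 0$, as already observed in Section~\ref{subsec:inter_relax}.

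The remaining constraints are linear and I would verify each directly. The normalization $\Gamma_{00}=1$ is immediate from $\tr(\varrho)=1$. The anticommutation zeros follow because $\tr(E_\aaa E_\bbb^\dagger\varrho)$ is purely imaginary whenever $E_\aaa^\dagger E_\bbb + E_\bbb E_\aaa^\dagger = 0$, so the product defining $\widehat\Gamma_{\aaa\bbb}$ is imaginary and is killed by symmetrization, exactly as in Eq.~\eqref{eq:sim_anti_comm}. The structure relations $\Gamma_{\aaa\bbb}=\Gamma_{\aaa'\bbb'}$ for commuting pairs come from the Pauli multiplication rules of Eq.~\eqref{eq:rel}, under which the unordered triple of values $\av{E_\aaa^\dagger}$, $\av{E_\bbb}$, $\av{\omega_{\aaa\bbb}E_\aaa E_\bbb^\dagger}$ is an invariant of the three state monomials being multiplied. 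For the two projector constraints — the crux of the calculation — I would use $E_\bbb^\dagger = E_\bbb$ and the Pauli expansion $\sum_{E_\bbb\in\EE_n}E_\bbb\otimes E_\bbb = 2^n(1,2)$ to get $\sum_{E_\bbb}\av{E_\bbb}\av{E_\aaa E_\bbb^\dagger} = 2^n\tr(E_\aaa\varrho^2)$; since $\varrho=\Pi/K$ obeys $\varrho^2=\varrho/K$, this equals $(2^n/K)\av{E_\aaa}$, and multiplying by $\av{E_\aaa^\dagger}$ and taking real parts yields $\sum_\bbb\Gamma_{\aaa\bbb}=(2^n/K)\Gamma_{\aaa\aaa}$, with the $\aaa=0$ case and $\Gamma_{\aaa0}=\Gamma_{\aaa\aaa}$ giving $\sum_\aaa\Gamma_{\aaa\aaa}=2^n/K$. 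Finally, since $\Gamma_{\aaa0}=\Gamma_{0\aaa}=\Gamma_{\aaa\aaa}=|\av{E_\aaa}|^2$, the weight-$i$ partial sums of the diagonal recover $A_i(\varrho)$, so the Krawtchouk inequality in Eq.~\eqref{eq:sdpgamma} is exactly $KB_j(\varrho)\geq A_j(\varrho)$ — true for all $j$ with equality for $0<j<\delta$ by Theorem~\ref{thm:Rains} and Eq.~\eqref{eq:enumcond}. The optional shadow inequalities, stabilizer type-I/II sum rules, and purity zeros are checked identically from Eq.~\eqref{eq:shadowexp} and Eq.~\eqref{eq:pure_Gamma}.

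I do not expect a genuine obstacle here: the statement is a relaxation claim and the work is bookkeeping. The two points needing care are confirming that the indexing set $\av{E_\aaa^\dagger} E_\aaa$ really sits inside a bona fide moment matrix $M^{(\ell)}$ — so that $\Gamma\succeq 0$ is inherited rather than assumed — and checking that passing from $\widehat\Gamma$ to its real part $\Gamma$ destroys none of the equality constraints; but every equality in Eq.~\eqref{eq:sdpgamma} involves only diagonal entries (already real) or manifestly real sums, and every zero constraint is precisely what symmetrization produces, so nothing is lost.
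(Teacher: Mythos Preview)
Your proposal is correct and follows essentially the same route as the paper. The paper does not give a separate proof of this corollary; it simply states the result after Sections~5.1--5.6 have already derived each constraint of Eq.~\eqref{eq:sdpgamma} from the assumption that $\Gamma$ is built out of a genuine code projector $\Pi$ (positivity in Section~\ref{subsec:inter_relax}, the anticommutation zeros in Eq.~\eqref{eq:sim_anti_comm}, the structure constraint in Eq.~\eqref{eq:structure_constraint}, the projector constraints in Eq.~\eqref{eq:projconst_lvl}, and the Knill--Laflamme inequalities in Eq.~\eqref{eq:klred}), so your plan of setting $\varrho=\Pi/K$, defining $\Gamma$ as the real part of the moment matrix entries, and checking each constraint in turn is exactly the verification that the paper spreads across those subsections.
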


\subsection{Lov\'asz bound for self-dual codes}
Recall that a quantum code with $K=1$ is called self-dual. By definition, such code is pure and we can apply the conditions of the previous section.
\begin{definition}[Confusability graph for self-dual codes]\label{def:pure_conf_graph}
Let $G$ be a graph whose vertices correspond to elements of the $n$-qubit Pauli basis~$\EE_n$.
Let two vertices $a$ and $b$ be connected, written $a \sim b$,
if $E_a,E_b \in \EE_n$ satisfy
$E_a E^\dag_b +  E_b E^\dag_a = 0$ or $0<\wt(E_a E^\dag_b)<\delta$;
and let $a \sim a $ if $0<\wt(E_a)<\delta$.
\end{definition}
This simply incorporates the two constraints of Eq.~\eqref{eq:sim_anti_comm} and \eqref{eq:extra_zeros}.
To see this, note that if a positive semidefinite matrix has an element in the diagonal equal to zero,
then all elements in the same row and column are zero, too.
With this notation, we relax the SDP from Eq.~\eqref{eq:sdpgamma}
[that include the pure code conditions in Eq.~\eqref{eq:pure_code_conditions_Gamma}]
for self-dual codes to:
\begin{align}\label{eq:sdpgamma_pure}
	\text{maximize}		\quad & \sum^N_{\aaa=1}\Gamma_{\aaa\aaa}\,, \nn \\
	\text{subject to} 	\quad & \Gamma_{00}=1 \,, \nn \\
	& \Gamma_{\aaa0}=\Gamma_{\aaa\aaa} \,, \nn \\
	& \Gamma \succeq 0 \,, \nn \\
	& \Gamma_{\aaa\bbb}=0 \quad  \text{if} \quad \aaa \sim \bbb \,.
\end{align}
Because of $a\sim a$ if and only if $a\sim 0$ the last constraint can be
restricted to $a,b \neq 0$ without changing the objective value.
Thus the SDP in Eq.~\eqref{eq:sdpgamma_pure}
is nothing other than the semidefinite program for the Lov\'asz theta number [Eq.~\eqref{eq:lovasz_SDP2}]
of the confusability graph $G$ with the vertex corresponding to the identity removed.
To see this, note that $\Gamma$ is indexed by $0,\dots, N$, while the sum in Eq.~\eqref{eq:sdpgamma_pure} starts at $1$,
corresponding to the sum over $N$ elements in Eq.~\eqref{eq:lovasz_SDP2}.
For a self-dual code to exist,
the constraint $\sum^{N}_{\aaa=0}\Gamma_{\aaa\aaa}={2^n/K}$ in Eq.~\eqref{eq:sdpgamma} must be met.
Thus the objective value of Eq.~\eqref{eq:sdpgamma_pure} must be larger or equal than $2^n-1$.
As a consequence, we have the following corollary:
\begin{corollary}\label{cor:lovasz}
	If a pure $(\!(n,1,\delta)\!)_2$ code exists, then $2^n \leq 1 + \vartheta(G')$,
	where $G'$ is the confusability graph $G$ [Definition~\ref{def:pure_conf_graph}]
	of a self-dual quantum code  with the identity vertex removed.
\end{corollary}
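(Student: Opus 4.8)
The plan is to construct, from a hypothetical $(\!(n,1,\delta)\!)_2$ code, an explicit feasible point of the semidefinite program~\eqref{eq:sdpgamma_pure} whose objective value equals $2^n-1$. Since that program is exactly the $\mathrm{SDP2}$ formulation~\eqref{eq:lovasz_SDP2} of $\vartheta(G')$ — using that $a\sim a$ if and only if $a\sim 0$, so the vertex $E_0=\one$ may be dropped and the zero pattern restricted to $a,b\neq 0$ without changing the optimum — its optimal value is $\vartheta(G')$, and exhibiting a feasible point with objective $2^n-1$ yields $\vartheta(G')\geq 2^n-1$, i.e.\ $2^n\leq 1+\vartheta(G')$.

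Concretely, I would take the rank-one projector $\Pi=\dyad{\psi}$ of the code (recall every $K=1$ code is pure) and set $\varrho=\Pi$. Define $\Gamma_{\aaa\bbb}=\Re\big(\tr(E_\aaa^\dagger\varrho)\tr(E_\bbb\varrho)\tr(E_\aaa E_\bbb^\dagger\varrho)\big)$ for $E_\aaa,E_\bbb\in\EE_n$, precisely the object of Section~\ref{subsec:inter_relax}. Then $\Gamma\succeq 0$ by the moment-matrix argument there together with Section~\ref{subsec:stpoly} (the underlying moment matrix indexed by $\av{E_\aaa^\dagger}E_\aaa$ is Hermitian positive semidefinite, and passing to the real part preserves this); moreover $\Gamma_{00}=\tr(\varrho)^2=1$, and $\Gamma_{\aaa0}=\Gamma_{\aaa\aaa}=\tr(E_\aaa\varrho)^2$ since $E_\aaa E_\aaa^\dagger=\one$ and the Pauli strings are Hermitian. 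It then remains to verify the zero pattern $\Gamma_{\aaa\bbb}=0$ whenever $\aaa\sim\bbb$ in Definition~\ref{def:pure_conf_graph}: if $E_\aaa E_\bbb^\dagger+E_\bbb E_\aaa^\dagger=0$ then $E_\aaa E_\bbb^\dagger$ is anti-Hermitian, so $\tr(E_\aaa E_\bbb^\dagger\varrho)$ is purely imaginary while $\tr(E_\aaa^\dagger\varrho),\tr(E_\bbb\varrho)\in\R$, and the real part vanishes; and if $0<\wt(E_\aaa E_\bbb^\dagger)<\delta$, $0<\wt(E_\aaa)<\delta$, or $0<\wt(E_\bbb)<\delta$, then purity~\eqref{eq:pure_Gamma} applied with one factor set to $\one$ forces $\tr(E_\aaa E_\bbb^\dagger\varrho)=0$, $\tr(E_\aaa\varrho)=0$, or $\tr(E_\bbb\varrho)=0$ respectively, hence $\Gamma_{\aaa\bbb}=0$; in particular the loop vertices with $0<\wt(E_\aaa)<\delta$ satisfy $\Gamma_{\aaa\aaa}=0$, so $\Gamma$ meets the requirement $M_{ij}=0$ on the edges (including loops) of $G'$.

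Finally I would compute the objective: completeness of the Pauli basis gives $\sum_{E\in\EE_n}|\tr(E\varrho)|^2=2^n\tr(\varrho^2)=2^n$ for the rank-one $\varrho$, so subtracting the identity term $\Gamma_{00}=1$ yields $\sum_{\aaa=1}^N\Gamma_{\aaa\aaa}=2^n-1$ (equivalently this is the constraint $\sum_{\aaa=0}^N\Gamma_{\aaa\aaa}=2^n/K$ of Eq.~\eqref{eq:projconst_lvl} with $K=1$). Hence $\Gamma$ is feasible for~\eqref{eq:sdpgamma_pure} with objective $2^n-1$, and the corollary follows. The only point that needs care is the zero-pattern bookkeeping above — checking that purity, phrased as $\Pi E_\aaa^\dagger E_\bbb\Pi=0$, genuinely kills the scalars $\tr(E_\aaa E_\bbb^\dagger\varrho)=\bra{\psi}E_\aaa E_\bbb^\dagger\ket{\psi}$ in the stated weight ranges, and that Hermiticity of the $E_\aaa$ makes $\tr(E_\aaa\varrho)$ real so that taking the real part removes exactly the anticommuting contributions and nothing else; everything else is immediate once~\eqref{eq:sdpgamma_pure} has been identified with the Lov\'asz $\mathrm{SDP2}$, as established in Section~\ref{sec:Lovasz}.
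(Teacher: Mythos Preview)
Your proof is correct and follows essentially the same approach as the paper: construct the moment matrix $\Gamma$ from the code state, verify it is feasible for the SDP~\eqref{eq:sdpgamma_pure}, and identify that SDP with the Lov\'asz $\mathrm{SDP2}$ for $G'$ to conclude $\vartheta(G')\geq 2^n-1$. The paper's argument is terser, relying on the preceding setup of Section~\ref{sec:Lovasz} to assert that the constraint $\sum_{\aaa}\Gamma_{\aaa\aaa}=2^n/K$ from Eq.~\eqref{eq:sdpgamma} must be met, whereas you spell out the verification of positivity, the zero pattern, and the trace explicitly --- but the logic is the same. (A trivial slip: $\Gamma_{00}=\tr(\varrho)^3$ rather than $\tr(\varrho)^2$ under the indexing of Eq.~\eqref{eq:relax}, but of course both equal~$1$.)
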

\begin{example}
For the confusability graph $G'$ of a $(\!(4,1,3)\!)_2$ code, the Lov\'asz theta number is $\vartheta(G')= 7$.
By Corollary~\ref{cor:lovasz}, $1+\vartheta(G') < 2^4$ implies that a $(\!(4,1,3)\!)_2$ code does not exist.
While this result was already known from the linear programming bounds enhanced with the shadow inequalities~[Eq.~\eqref{eq:lp}]~\footnote{
The weights $[1,0,0,12,3]$ are unique in being non-negative, invariant under the quantum MacWillams transformation, and satisfying the pure code constraint. However, they violate the shadow inequalities.},
we think that using Corollary~\ref{cor:lovasz} provides a conceptually simpler approach.
\end{example}

Note that $\Gamma$ from Eq.~\eqref{eq:sdpgamma_pure} has zeros
in all rows and columns with $0<\wt(E_\aaa)<\delta$.
The positivity of $\Gamma$ is then equivalent to the positivity of the same matrix without such rows and columns:
\begin{remark}\label{rmk:Gprime}
Let $G''$ be the subgraph of the confusability graph $G$ from Definition~\ref{def:pure_conf_graph},
whose vertices $\one$ and $E_\aaa\in \EE_n$ such that $\wt(E_\aaa)<\delta$ removed.
Then, $\vartheta(G'')=\vartheta(G')$.
\end{remark}

\subsection{Self-dual stabilizer codes}
	One can give the following argument on the appearance of the Lov\'asz theta number in the case of self-dual stabilizer codes.
	Recall that a stabilizer code corresponds to an abelian subgroup of the $n$-qubit Pauli group for which $-\one \not \in S$.
	A pure stabilizer code additionally satisfies $\wt(s_i)\geq \delta$ for all $s_i\in S \setminus \one$.

	Recall that $\vartheta(G)$ upper bounds the independence number $\a(G)$, i.e.,
	the maximum size of a set with pairwise disconnected vertices.
	Let $G'$ be the confusability graph of a self-dual code with the identity vertex removed.
	Then two vertices $a,b\in V$ are disconnected if and only if they satisfy all the following conditions,
	\begin{equation}
	E_\aaa^\dag E_\bbb -  E_\bbb E_\aaa^\dag=0 \quad \text{and}
	\quad \wt(E_\aaa^\dag), \wt(E_\bbb), \wt(E_\aaa^\dag E_\bbb)\geq \delta\,.
	\end{equation}
	Given $s_\aaa, s_\bbb \in S$, we also have $s_c=s_\aaa s_\bbb \in S$.
	Therefore, if $E_c=\pm E_\aaa^\dag E_\bbb$, then $a,b,c\in V$ do not share any edges.
	Denote by $H\subseteq V$ a subset of disconnected vertices of the graph $G'$ with maximum size $\a(G')$.
	Any $a,b\in H$ then satisfies
	\begin{equation}\label{eq:max_set_stab}
	  E_\aaa^\dag E_\bbb -  E_\bbb E_\aaa^\dag=0, \quad \wt(E_\aaa^\dag), \wt(E_\bbb)\geq \delta\,, \quad  \text{and} \quad  c \in H  \quad \text{if} \quad  E_c = \pm E_\aaa^\dag E_\bbb.
	\end{equation}

	The above equation allows us to state the following:
	given a self-dual stabilizer code $S$, one can construct a disconnected subset $H_S\subseteq V$ with size $2^n-1$ by identifying each vertex $a\in V$ with a Pauli error $E_a$ such that $E_a=\pm s_a\in S\setminus \one$.
	Conversely, given $H$, one can construct a subset of $\EE_n$ by identifying each vertex $a\in V$ with $E_a$. Using Eq.~\eqref{eq:max_set_stab}, it can be seen that all elements of this set has weight equal or larger than $\delta$ and up to real phases, it forms an abelian group if one includes $+\one$.
	But this is nothing else than a stabilizer group.
	Note that this stabilizer group can have at most $2^n-1$ nontrivial elements
	since the maximum size of a commuting Pauli subgroup is $2^n$.
	Therefore, a self-dual stabilizer code exists if and only if $\alpha(G')=2^n-1$.
	When $\alpha(G') < 2^n-1$, it is not possible to construct a stabilizer group with $2^n$ elements and therefore, a self-dual stabilizer code does not exist.
	The same reasoning applies to the subgraph $G''$ of Remark~\ref{rmk:Gprime}.
	
\section{Symmetry-reduction through the Terwilliger algebra}

We use the technique developed by Schrijver~\cite{1468304} and Gijswijt et al.~\cite{GIJSWIJT20061719}
for symmetry-reducing semidefinite programs that bound the parameters of binary and nonbinary codes respectively.
This is done by averaging the semidefinite program in Eq.~\eqref{eq:sdpgamma}
through operations that leave the distances between triples of Pauli strings invariant.
Then it can be block-diagonalized with the Terwilliger algebra of the Hamming association scheme.
As we will eventually deal with an alphabet comprised of the four Pauli matrices $\one, X, Y, Z$,
we are concerned with quaternary codes and thus follow the exposition of Ref.~\cite{GIJSWIJT20061719}.
We first present the general case, before explaining the averaging and symmetry-reduction of the moment matrix $\Gamma$ from Eq.~\eqref{eq:relax}.

\subsection{Automorphism groups}
Let $\by{q}=\{0,\dots,q-1\}$ be an alphabet and
$\by{q}^n$ be the set of strings of length $n$ formed from elements of $\by{q}$.
Denote the Hamming distance between two elements $\x,\y \in \by{q}^n$ by
$d(\x,\y)$ the number of positions in which $x$ and $y$ differ.
Then consider the set of automorphism $\Aut(q,n)$
of $\by{q}^n$ that preserve the Hamming distance.
It is known that $\Aut(q,n)$ consists of the set of permutations
that permute the $n$ coordinates
and that independently permute the alphabet at each coordinate~\cite{GIJSWIJT20061719}.

Let $\Aut_{0}(q,n)$ be the subset of $\Aut(q,n)$
that leaves the all-zero string invariant.
This consists of the set of permutations
that permute the $n$ coordinates
and independently permute the $q-1$ non-zero elements at each coordinate~\cite{GIJSWIJT20061719}.
The group $\Aut_{0}(q,n)$ is isomorphic to the wreath product $S_{q-1} \wr S_n$.

The equivalence class of a classical code is generated by $\Aut(q,n)$.
In turn, the equivalence class of a quantum code under local Clifford operations and permutations is generated by $\Aut_{0}(q,n)$~\footnote{Usually equivalence under local unitary operations and permutations are considered, but due to our indexing of the moment matrix, the subgroup of local Clifford operations and permutations suffices.}.
The reason is that the projector onto the code space has to remain positive-semidefinite under the equivalence operation,
and consequentially the identity matrix has to be a fixed point.

Let $R$ be a real matrix indexed by $\x,\y\in \boldsymbol{q}^n$.
Given an automorphism $\pi \in \Aut(q,n)$,
let it act on $R$ by permuting its rows and columns accordingly,
\begin{equation}
 \pi(R)_{\x\y} = R_{\pi^{-1}(\x)\pi^{-1}(\y)}\,.
\end{equation}
Note that if $R$ is positive semidefinite, then $\pi(R)$ is also positive semidefinite for any~$\pi$.
Matrices invariant under the action of $\Aut_0(q,n)$ are in the Terwilliger algebra,
while the commutative subalgebra invariant under $\Aut(q,n)$ is the Bose-Mesner algebra~\cite{GIJSWIJT20061719}.

\subsection{Nonbinary Terwilliger algebra}
Let $M_{i,j}^{t,p}$ a $q^n\times q^n$ matrix indexed by $\qn$ and defined by
\begin{equation}\label{eq:terw_basis}
	(M_{i,j}^{t,p})_{\x\y} =\begin{cases}
		1 & \text{ if } \quad
		|s(\x)|=i\,, \quad |s(\y)|=j\,, \quad \lvert s(\x)\cap s(\y)\rvert=t\,,\\
		& \quad\quad\,\, \lvert\{m \,:\, \x_{m}=\y_{m}
		\neq0\}\rvert=p \,,
		\vspace{0.2cm} \\
		0 & \text{ otherwise} \,,
	\end{cases}
\end{equation}
where $s(\x) = \{m \,|\, \x_m \neq 0 \}$ is the support of $\x$;
$|s(\x)|$ is its size (that is the number of non-zero coordinates of $x$);
and $\x_m$ is the element at coordinate $m$ of $\x$.

The range of tuples,
outside of which
$M_{i,j}^{t,p}$ necessarily vanishes, is
\begin{equation}\label{eq:I-range}
 \II(q,n)= \{ (i, j, t, p)\,:\, 0 \leq p \leq t \leq i,j \quad \text{and} \quad i+j \leq t+n\, \}\,.
\end{equation}
Ref.~\cite[Proposition 9]{GIJSWIJT20061719} shows that $M_{i,j}^{t,p}$ are pairwise orthogonal and $\left\langle M_{i,j}^{t,p},M_{i,j}^{t,p}\right\rangle=\gamma_{i,j}^{t,p}$,
where
\begin{align}\label{eq:gamma}
	\gamma_{i,j}^{t,p}
	&= (q-1)^{i+j-t}(q-2)^{t-p}\binom{n}{p,t-p,i-t,j-t}\,,
\end{align}
and
\begin{align}
\binom{n}{a_1,\dots,a_r}= \frac{n!}{a_1!\dots a_r!(n-\sum^r_{\ell=1} a_\ell)!}\,
\end{align}
is a multinomial coefficient.
The set of $M_{i,j}^{t,p}$ forms a basis for the Terwilliger algebra
$\mathcal{A}_{q,\,n}$ of the $q$-ary Hamming cube.
The Terwilliger algebra can be block-diagonalized~\cite{GIJSWIJT20061719},
allowing us to check the positivity of the blocks of an element in $\mathcal{A}_{q,\,n}$.

\begin{remark}\label{rmk:interp}
Let us give some motivation as to how the indexing with $(i,j,t,p)$ in Eq.~\eqref{eq:terw_basis}
will be used later.
Consider the three terms $E^\dag$, $F$, and $E F^\dag$
that appear in a single entry of the matrix $\Gamma$ [Eq.~\eqref{eq:relax}].
Now suppose that
$\wt(E) = i$ and $\wt(F) = j$.
Then, set $t = |s(E) \cap s(F)|$ and
$p = |\{m\,:\, (E)_m = (F)_m \neq \one \} $
the number of tensor factors (coordinates)
where both $E$ and $F$ have identical nontrivial Pauli elements.
Then the overlap between the supports of $E$ and $F$ is $t$,
and the number of places a nontrivial Pauli in $E$ cancels a nontrivial Pauli in $F$ is $p$.
The quantity $t-p$ is then the number of places that a nontrivial Pauli in $E$ does {\em not} cancel a nontrivial Pauli in $F$.
Note that if $E$ and $F$ commutes then $t-p$ is even and if $E$ and $F$ anticommutes then $t-p$ is odd.
Finally, $\wt(E F^\dagger) = i+j-t-p$.
\end{remark}

\subsection{Averaging over codewords}
\label{subsec:average_codewords}
We now consider different averaging methods, whose results lie in the Terwilliger algebra.
Recall that an automorphism $\pi \in \Aut(q,n)$
acts on a matrix $R$ by permuting its rows and columns,
$ \pi(R)_{\x\y} = R_{\pi^{-1}(\x)\pi^{-1}(\y)}$.
Now for $v \in \qn$, let $\sigma_{v}$ be a distance-preserving automorphism that additionally
satisfies $\sigma_v(v) = 0$.
With this, define $R^{v}$ as
\begin{equation}\label{eq:Rv}
	R^{v}=\frac{1}{\lvert\Aut_{0}(q,n)\rvert}
	\sum\limits_{\pi\in\Aut_{0}(q,n)}
	\pi\left(\sigma_{v}\!\left(R\right)\right)\,.
\end{equation}
The effect of Eq.~\eqref{eq:Rv} is to average $R$ over all distance-preserving automorphisms
that map $v$ to the zero string.

Let $C$ be a code with elements from $\qn$.
Then one can additionally average over the set of matrices $R^v$ where $v$ is a codeword:
\begin{equation}\label{eq:Rtilde}
	\tilde R(C) =\frac{1}{\left\lvert C \right\rvert}\sum\limits_{v \in C}R^{v}.
\end{equation}
The effect of Eq.~\eqref{eq:Rtilde} is to average $R$
over all distance-preserving automorphism that map codewords to the zero string.
Note that both $R^v$ and $\tilde R$ are invariant under
$\Aut_{0}(q,n)$.

In particular, this means we can express $\tilde R$ in a basis of the Terwilliger algebra $\mathcal{A}_{q,n}$,
\begin{equation}\label{eq:R_expansion}
	\tilde R=
	\sum\limits_{\left(i,j,t,p\right)\in \II(q,n)}x_{i,j}^{t,p}M_{i,j}^{t,p}\,,\quad\quad x_{i,j}^{t,p} \in \R\,.
\end{equation}

We now show how to write
$x_{i,j}^{t,p}$ in terms of elements from the matrix $R$.
Define $\lambda^{t,p}_{i,j} = \langle\, \tilde R,M_{i,j}^{t,p}\,\rangle$ and note that using Eq.~\eqref{eq:R_expansion}, we get that $\lambda^{t,p}_{i,j} =\gamma^{t,p}_{i,j} x^{t,p}_{i,j}$. We then develop $\lambda^{t,p}_{i,j}$ using the definition of $\tilde R$ in Eq.~\eqref{eq:Rtilde},
\begin{align}\label{eq:lambda_def_start}
	\lambda^{t,p}_{i,j}
	&=\frac{1}{\left\lvert C\right\rvert} \frac{1}{\lvert\Aut_{0}(q,n)\rvert}
	\sum\limits_{\pi\in\Aut_{0}(q,n)}
	\sum\limits_{v \in C}\Big\langle 	\pi\left(\sigma_{v}\!\left(R\right)\right),M_{i,j}^{t,p}\Big\rangle \nn\\
	&=\frac{1}{\left\lvert C\right\rvert} \left\langle 	 \sum\limits_{v \in C}\sigma_{v}\!\left(R\right),\frac{1}{\lvert\Aut_{0}(q,n)\rvert}\sum\limits_{\pi\in\Aut_{0}(q,n)}\pi^{-1}\left(M_{i,j}^{t,p}\right)\right\rangle \nn \\
	&=\frac{1}{\left\lvert C\right\rvert} \left\langle 	 \sum\limits_{v \in C}\sigma_{v}\!\left(R\right),M_{i,j}^{t,p}\right\rangle\,.
\end{align}
Here we used the fact that $M^{t,p}_{i,j}$ is invariant under the action of elements from $\Aut_0(q,n)$.
From the definition of $M^{t,p}_{i,j}$ [see Eq.~\eqref{eq:terw_basis}], one obtains
\begin{align}\label{eq:lambda_def}
	\lambda^{t,p}_{i,j}= \frac{1}{\left\lvert C\right\rvert} \!\!\!\! \sum\limits_{\substack{ \x,\y \in \by{q}^n\\ |s(\x)| = i
			\\ |s(\y)| = j
			\\ \lvert s(\x)\cap s(\y)\rvert = t
			\\ \lvert\{m\mid \x_{m}=\y_{m}\neq0\}\rvert = p \\}} \!\!\!\! \sum\limits_{v\in C} \sigma_{v}(R)_{\x\y} \,,
\end{align}
since the only elements $\sigma_{v}(R)_{\x \y}(M_{i,j}^{t,p})_{\x \y}$
not equal to zero are the ones satisfying
$|s(\x)|=i$, $|s(\x)|=j$, $\lvert s(\x)\cap s(\y)\rvert=t$ and $\lvert\{m \mid \x_{m}= \y_{m} \neq0\}\rvert=p$.

\begin{remark}
	Ref.~\cite[Proposition 9]{GIJSWIJT20061719} defines $\langle\, \tilde R,M_{i,j}^{t,p}\,\rangle =\frac{1}{|C|} \lambda^{t,p}_{i,j}$.
	However, we decide to define $\langle\, \tilde  R,M_{i,j}^{t,p}\,\rangle = \lambda^{t,p}_{i,j}$ since in the quantum case, $\tilde \Gamma$ is averaged over
	$\Aut_0(q,n)$ instead of averaging over both $\Aut_0(q,n)$ and codewords
	(see Proposition~ \ref{prop:terwilliger_main_reduction}).
	 \end{remark}
The matrix $\tilde R$ can be block-diagonalized. For this define the following coefficients~\cite[Eq.~(20) and~(27)]{GIJSWIJT20061719}:
\begin{align}\label{eq:alpha}
	\alpha(i,j,t,p,a,k)&= \beta_{i-a,j-a,k-a}^{n-a,t-a}\left(q-1\right)^{\frac{1}{2}\left(i+j\right)-t} \sum\limits_{g=0}^{p}\left(-1\right)^{a-g}\binom{a}{g}\binom{t-a}{p-g}\left(q-2\right)^{t-a-p+g}, \nn \\
	\beta_{i,j,k}^{m,t}&=\sum\limits_{u=0}^{m}\left(-1\right)^{t-u}\binom{u}{t}\binom{m-2k}{m-k-u}\binom{m-k-u}{i-u}\binom{m-k-u}{j-u}\,.
\end{align}
It can be shown that $\tilde R$ is positive semidefinite
if and only if
\begin{equation} \label{eq:psd_tildeR}
	\bigoplus_{\substack{a,k \in \N_0\\ 0\leq a\leq k\leq n+a-k}}
	\left(\sum\limits_{\substack{t,p\in \N_0 \\ 0 \leq p \leq t \leq i,j \\ i+j\leq t+n}} \alpha(i,j,t,p,a,k)x_{i,j}^{t,p}\right)_{i,j=k}^{n+a-k} \,.
\end{equation}
is positive semidefinite~\cite[Eq.~(43)]{GIJSWIJT20061719}.

\subsection{Averaging over the complement of a code}
\label{subsec:average_codewords_comp}

Similarly, one can average over the elements that are not in the code and obtain
\begin{equation}\label{eq:Rprime}
	\tilde R^c(C)=\frac{1}{q^n - \left\lvert C \right\rvert}\sum\limits_{v \in \by{q}^n \setminus \text{C}}R^{v}.
\end{equation}
The average over all $\sigma \in \Aut(q,n)$ can then be expressed as
\begin{align}\label{eq:Rbar}
	\overbar R= \frac{1}{q^n} \sum_{v\in \by{q}^n} R^v = \frac{1}{q^n} \left( |C|\tilde R  + (q^n-|C|)\tilde R^c \right) \,.
\end{align}
Since $\overbar R$ is invariant under all permutations $\sigma \in \Aut(q,n)$,
the matrix $\overbar{R}$ is an element of the Bose–Mesner algebra~\cite[Eq.~(40)]{GIJSWIJT20061719}.
Therefore, it can be written as
\begin{align}\label{eq:Rbar_BoseMesner}
	\overbar{R}= \sum^n_{k=0} y_k N_k \,,
\end{align}
where $y_k \in \R$ and $\{N_k\}_{k=0}^n$ is a basis for the Bose-Mesner algebra given by
\begin{equation}
 (N_k)_{\x\y} = \begin{cases}
               1 \quad \text{if }\,\, d(\x,\y) = k\,, \\
               0 \quad \text{otherwise}\,.
              \end{cases}
\end{equation}
Here $d(\x,\y)$ is the Hamming distance between $\x$ and $\y$.
Note that that the Terwilliger algebra $\mathcal{A}_{q,n}$ contains the Bose-Mesner algebra as a commutative subalgebra, with
\begin{align}\label{eq:bose_bases}
	N_k=\sum_{\substack{(i,j,t,p)\in \II(q,n)\\i+j-t-p=k}} M^{t,p}_{i,j}\,.
\end{align}

Suppose that $\tilde R^c_{\x0}=0$ for all $\x$.
This happens for example when $R=\chi \chi^T$ where $\chi$ is the incidence vector of a classical code.
Ref.~\cite[Proposition 8]{GIJSWIJT20061719} shows that
\begin{align}\label{eq:rtilde_c}
	\tilde R^c &=
	\frac{1}{q^n-|C|}(q^n \bar R - |C|\tilde R) \nn \\
	&=\frac{|C|}{q^n-|C|}\sum_{(i,j,t,p) \in \II(q,n)} (x^{0,0}_{i+j-t-p,0}-x^{t,p}_{i,j}) M^{t,p}_{i,j} \,.
\end{align}
Then, in analogy to Eq.~\eqref{eq:psd_tildeR},
$\tilde R^c \succeq 0$ is equivalent to
\begin{equation}\label{eq:psd_tildeR_c}
	\bigoplus_{\substack{a,k \in \N_0\\ 0\leq a\leq k\leq n+a-k}} \left(\sum\limits_{\substack{t,p\in \N_0 \\ 0 \leq p \leq t \leq i,j \\ i+j\leq t+n}}\alpha(i,j,t,p,a,k)(x_{i+j-t-p,0}^{0,0}-x_{i,j}^{t,p})\right)_{i,j=k}^{n+a-k} \succeq 0\,.
\end{equation}

When $R=\chi \chi^T$,
it follows from the diagonalization of $\overbar R$ that the $B_i$ enumerators are non-negative.
This, together with the non-negativity of the $A_i$ enumerator, gives the Delsarte bound~[see Eq.~\eqref{eq:lp_Delsarte_class}]
~\cite{10.1007/978-94-010-1826-5_7,GIJSWIJT20061719}.

\subsection{Symmetry reduction of $\Gamma$}\label{ref:sym_red}

A natural mapping identifying the matrices $I,Z,X,Y$ with $0,1,\a,\a^2 \in \GF(4)$ respectively, maps,
up to a phase, the multiplication of Pauli matrices to addition over $\GF(4)$~\cite{681315}.
That is, $E_{x+y} = \omega_{xy} E_x E_y$ with $\omega_{xy} \in \{\pm i, \pm 1\}$, and where
$E_{-x} = E_x^\dag$.
Therefore, we map $\EE_n$ to $\GF(4)^n$.
In particular, $E_0  = I^{\otimes n}$, where we write $E_0$ for $E_{0^n}$.
Indexing the matrix $\Gamma$ from Eq.~\eqref{eq:relax} with all elements from $\GF(4)^n$
(and eventually ${\by q}^n$ with $q=4$),
our aim is now to follow the averaging methods from
Sections~\ref{subsec:average_codewords} and \ref{subsec:average_codewords_comp}.

For any $v \in \GF(4)^n$,
define $\sigma_{v}$ to be a distance-preserving automorphism on $\GF(4)^n$.
It acts on $\Gamma$ as
\begin{align}\label{eq:action_sigma_v_on_code}
	\sigma_{v}\!\left(\Gamma\right)_{\x,\y} &=\Gamma_{{\x-v },{\y-v}} \nn\\
	&= \av{E_{\x-v}^\dagger} \av{E_{\y-v}} \av{E_{\x-v} E^\dagger_{\y-v}}  \nn\\
	&= 	\av{(E_{\x}  E_{v}^\dagger)^\dagger} \av{E_{\y}E_{v}^\dagger} \av{E_{\x} E_v^\dagger E_v E_{\y}^\dagger} \nn\\
	&= \av{E_{v} E_{\x}^\dagger} \av{E_{\y}E_{v}^\dagger} \av{E_{\x} E_{\y}^\dagger} \,,
\end{align}
where the third equation follows from $\bar{\omega}_{x(-v)} \omega_{y(-v)} \omega_{x(-v)} \bar{\omega}_{y(-v)} = 1$.
This maps the row and column indexed by $v$ to the one indexed by $0^n$ respectively.
We can then average $\Gamma$ over a subset of distance-preserving automorphisms on $\GF(4)^n$
such that it can be expanded in the Terwilliger algebra.

Note that the averaging of $\Gamma$
over distance-preserving automorphisms of $\GF(4)^n$ has the same effect as
averaging over distance-preserving automorphisms of
$\by{q}^n$ with $q=4$.
This is because the weight of $E_x^\dag E_y$
(corresponding to the distance between $x$ and $y$)
is not affected by any group structure of the underlying alphabet.
More formally, the weight is characterized by $\wt(E_x E_y^\dag) = i + j - t - p$ with $i,j,t,p$ defined in Remark~\ref{rmk:interp},
which only depends on the alphabet and not the group structure.
Thus, we can map $\GF(4)^n$ to $\by{4}^n$.

With some abuse of notation, we define in analogy to Eq.~\eqref{eq:Rv}
the matrix $\Gamma^{v}$ as
\begin{equation}\label{eq:Gv}
	\Gamma^{v}=\frac{1}{\lvert\Aut_{0}(4,n)\rvert}\sum\limits_{\pi\in\Aut_{0}(4,n)}\pi\left(\sigma_{v}\!\left(\Gamma\right)\right)\,.
\end{equation}
Similar to Eq.~\eqref{eq:Rtilde}, we also define
\begin{equation}\label{eq:Gtilde}
	\tilde \Gamma(Q) =\frac{1}{|Q|}\sum\limits_{v \in Q}\Gamma^{v}\,,
\end{equation}
where $Q$ is a subset of $\by{4}^n$. The effect of Eq.~\eqref{eq:Gtilde} is to average $\Gamma$
over all distance-preserving automorphisms that map an element of $Q$ to $0^n$.
Likewise, in analogy to Eq.~\eqref{eq:Rbar}
define the averaging over all elements from $\Aut(4,n)$ as
\begin{align}\label{eq:gamma_bar}
	\overbar\Gamma=& \frac{1}{4^n} \sum_{v\in \by{4}^n} \Gamma^v\,.
\end{align}

We now consider averaging over different subsets $Q$. We first take $Q=\by{4}^n$.
This is equivalent to averaging $\Gamma$ over all elements from $\Aut(4,n)$, which will lead to the quantum Delsarte bound.

\section{Quantum Delsarte bound}
\label{sec:Delsarte}

For classical codes,
the symmetrization over $\Aut(q,n)$ of the semidefinite program SDP1 for the Lov\'asz theta number [Eq.~\eqref{eq:lovasz_SDP1}], along
with the non-negativity of the matrix entries as an extra constraint, leads to the Delsarte bound
\cite{bachoc2010applicationssemidefiniteprogrammingcoding}.
The matrix used for the Lov\'asz theta bound is defined as $R=\chi \chi^T$, where $\chi$ is the indexing vector of a classical code.
Our aim is to recover a similar result for the Lov\'asz theta number for self-dual quantum codes.
For simplicity in the presentation, we here use the semidefinite program SDP2 [see Eq.~\eqref{eq:lovasz_SDP2}].
A key difference to the classical case is that the entries of the semidefinite variable from the quantum Lov\'asz theta bound
[see Corollary~\ref{cor:lovasz}] can be negative.
Due to this fact, we require a different extra constraint to the quantum Lov\'asz theta bound to recover the quantum Delsarte bound [see Eq.~\eqref{eq:lp_Delsarte}] after the symmetrization.

Suppose $\Gamma$ arises from a state.
Our aim is to show that the $A_j(\rho)$ enumerators emerge as elements from $\overbar\Gamma$ expressed in the Bose-Mesner algebra basis $\{N_k\}_{k=0}^n$ [see Eq.~\eqref{eq:bose_bases}].
Diagonalizing $\overbar \Gamma$, one then obtains the dual enumerator $B_j(\rho)$ by means of the quantum MacWilliams identity.
As in Eq.~\eqref{eq:Rbar_BoseMesner}, the matrix $\overbar \Gamma$ is part of the Bose-Mesner algebra and can be expanded
in terms of the basis $\{M^{t,p}_{i,j}\}_{(i,j,t,p) \in \II(4,n)}$ of the Terwilliger algebra,
\begin{align}\label{eq:gamma_bar_bose}
	\overbar{\Gamma}= \sum^n_{k=0} y_k N_k
	=  \quad \sum^n_{k=0} y_k  \sum_{\substack{(i,j,t,p)\in \II(4,n)\\k=i+j-t-p}} M^{t,p}_{i,j}
	\quad = \sum_{(i,j,t,p)\in \II(4,n)} y_{i+j-t-p} M^{t,p}_{i,j}\,,
\end{align}
where we have used the decomposition in Eq.~\eqref{eq:bose_bases}.
Given $\overbar \Gamma$, the goal is now to obtain $y_k$ in terms of elements from $\Gamma$.

Recall the matrices $M^{t,p}_{i,j}$ form an orthogonal basis, and that
$\left\langle M_{i,j}^{t,p},M_{i,j}^{t,p}\right\rangle=\gamma_{i,j}^{t,p}$,
where $\gamma_{i,j}^{t,p}$ is given by Eq.~\eqref{eq:gamma}.
Since all $M_{i,j}^{t,p}$ with $k=i+j-t-p$ have the same coefficient $y_k$,
it is enough to compute the inner product with a representative, e.g., $M_{k,0}^{0,0}$,
\begin{align}\label{eq:yk}
y_k
&=
\frac{1}{\gamma^{0,0}_{k,0}}\av{\overbar\Gamma, M^{0,0}_{k,0}}=\frac{1}{\gamma^{t,p}_{i,j}}\av{\overbar\Gamma, M^{t,p}_{i,j}} \quad\quad \text{for all} \quad k=i+j-t-p \,.
\end{align}
To determine $y_k$ it is therefore sufficient to compute $\av{\overbar\Gamma, M^{0,0}_{k,0}}$.
Following the procedure of Eq.~\eqref{eq:lambda_def_start} and \eqref{eq:lambda_def} and using the definition of $\overbar\Gamma$ in Eq.~\eqref{eq:gamma_bar}, one obtains
\begin{align}\label{eq:basis_gammaprimeprime}
	\av{\overbar\Gamma, M^{0,0}_{k,0}}
	&=\frac{1}{4^n\lvert\Aut_{0}(4,n)\rvert}
	\sum\limits_{\pi\in\Aut_{0}(4,n)}
	\sum\limits_{v \in \by{4}^n}\left\langle 	\pi\left(\sigma_{v}\!\left(\Gamma\right)\right),M_{k,0}^{0,0}\right\rangle \nn\\
	&= \left\langle \frac{1}{4^n}\sum\limits_{v \in \by{4}^n}\sigma_{v}\!\left(\Gamma\right),\frac{1}{\lvert\Aut_{0}(4,n)\rvert}\sum\limits_{\pi\in\Aut_{0}(4,n)}\pi^{-1}\left(M_{k,0}^{0,0}\right)\right\rangle \nn \\
	&= \left\langle 	 \frac{1}{4^n}\sum\limits_{v \in \by{4}^n}\sigma_{v}\!\left(\Gamma\right),M_{k,0}^{0,0}\right\rangle \nn \\
	& = \frac{1}{4^n}\sum\limits_{\substack{x\in \by{4}^n \\ \wt(\x)=k}}  \sum\limits_{v\in \by{4}^n}  \sigma_{v}(\Gamma)_{\x 0} \,.
\end{align}

We now show that $y_k$ is proportional to the $A_k(\rho)$ enumerator.
Recall the action of $\sigma_v$ on $\Gamma$ in Eq.~\eqref{eq:action_sigma_v_on_code}.
To this end, let us expand
for a quantum code $\varrho = \Pi /K$
the inner sum of Eq.~\eqref{eq:basis_gammaprimeprime},
\begin{align}\label{eq:av_0b}
\sum_{{v}\in \by{4}^n}\sigma_{v}\!\left(\Gamma\right)_{\x0} &= \sum_{E_{v}\in \EE_n} \Big( \tr(E_vE^\dagger_\x\varrho) \tr(E^\dagger_v\varrho) \Big)\tr( E_\x\varrho) \nn \\
&=  \tr\Big(\sum_{E_{v}\in \EE_n} (E_vE^\dagger_\x  \ot E^\dagger_v)  (\varrho \ot \varrho) \Big) \tr( E_\x \varrho) \nn \\
&= 2^n\tr\Big( (1,2) (E^\dagger_\x\varrho \ot \varrho) \Big) \tr( E_\x \varrho)   \nn\\
&= 2^n\tr(E^\dagger_\x \varrho^2) \tr(E_\x \varrho) \nn\\
&= \frac{2^n}{K}\tr(E^\dagger_\x \varrho) \tr(E_\x \varrho) =\frac{2^n}{K} \Gamma_{\x0}\,.
\end{align}
Here we used that $\varrho^2 = \varrho/K$ and $\tr\big((1,2) \varrho^{\ot 2}\big)= \tr(\varrho^2)$ where the swap operator is
$(1,2)=2^{-n}\sum_{E_v\in \EE_n} E_v \ot E^\dagger_v$.
Note that Eq.~\eqref{eq:av_0b} can also be seen from combining the constraints of Eqs.~\eqref{eq:structure_constraint}
and ~\eqref{eq:projconst_lvl} in the main semidefinite programming relaxation.

With this we can write Eq.~\eqref{eq:yk} as
\begin{align}\label{eq:y_k_A}
	y_k &=
	\frac{1}{4^n\gamma^{0,0}_{k,0}}
	\av{\overbar\Gamma, M^{0,0}_{k,0}} \nn\\
	&= \frac{1}{4^n\gamma^{0,0}_{k,0}} \sum\limits_{\substack{x\in \by{4}^n \\ \wt(\x)=k}}  \sum\limits_{v\in \by{4}^n}  \sigma_{v}(\Gamma)_{\x0} \nn\\
	&= \frac{1}{4^n\gamma^{0,0}_{k,0}}
	\frac{2^n}{K} \sum_{
	\substack{ E_\x \in \EE_n\\ \wt(E_\x)=k}}\Gamma_{\x0} \nn\\
 	&= (2^nK\gamma^{0,0}_{k,0})^{-1} A_k(\varrho) \,,
\end{align}
 since  $\sum_{x\in \by{4}^n, \,|s(x)|=k} \Gamma_{\x0} =A_k(\varrho)$ and we used Eq.~\eqref{eq:av_0b}.

We now diagonalize $\overbar\Gamma$.
Ref~\cite{10.1007/978-94-010-1826-5_7} shows that the eigenvalues of $N_k$ can be derived from the Krawtchouk polynomials $K_k(i;n)$ [see Eq.~\eqref{eq:krawpoly}]. Since $\{N_k\}^n_{k=0}$ is a commuting basis it can be simultaneous diagonalized, thus the eigenvalues of $\overbar\Gamma$ are $\sum^n_{k=0} y_k K_k(i;n)$.
The quaternary Krawtchouk polynomials satisfy the following relation~\cite[Theorem 17]{macwilliams1977theory},
\begin{align}
	3^i \binom{n}{i}K_k(i;n)=3^k \binom{n}{k}K_i(k;n) \,.
\end{align}
With Eq.~\eqref{eq:gamma}, one can express this as $\gamma^{0,0}_{i,0} K_k(i;n)= \gamma^{0,0}_{k,0}  K_i(k;n)$.
We then develop the eigenvalues of $\overbar\Gamma$ as follows
\begin{align}\label{eq:quantum_Delsarte_Bk}
	\sum^n_{k=0} y_k K_k(i;n) &= \frac{1}{\gamma^{0,0}_{i,0}}\sum^n_{k=0}   y_k \gamma^{0,0}_{k,0}K_i(k;n)  \nn\\
 	&=(2^nK\gamma^{0,0}_{k,0})^{-1} \sum^n_{k=0} K_i(k;n) A_k(\varrho) \,,
\end{align}
where we have used Eq.~\eqref{eq:y_k_A}.

Recall that  $\sum_{x\in \by{4}^n, \,|s(x)|=k} \Gamma_{\x0} =A_k(\varrho)$ for a state $\varrho = \Pi / K$.
By the quantum MacWilliams identity in Eq.~\eqref{eq:macexp},
it can be seen that Eq.~\eqref{eq:quantum_Delsarte_Bk}
is proportional to the dual enumerator $B_i(\varrho)$.
Thus both the $A_j$ and $B_j$ enumerators from averaging $\Gamma$ over $\Aut(q,n)$,
and diagonalizing the resulting $\overbar \Gamma$ in the basis of the Bose-Mesner algebra $\{N_k\}_{k=0}^n$.

However, we want to point out that Eq.~\eqref{eq:av_0b}
assumes that $\Gamma$ has been constructed from a state $\varrho = \Pi/K$, for which Eq.~\eqref{eq:av_0b} applies.
We impose this extra constraint on the Lov\'asz bound to obtain the quantum Delsarte bound:

\begin{theorem}\label{thm:Lovasz2Delsarte}
 For self-dual quantum codes, the averaging of the Lov\'asz bound from Corollary~\ref{cor:lovasz}
 over $\Aut(q,n)$ with the condition
 \begin{align}\label{eq:Delsarte_proof_extra_constraint}
	\sum_{v\in \by{4}^n} \sigma_{v}(\Gamma)_{x0} = 	 2^n \Gamma_{\x0} \quad \forall x \in \by{4}^n\,,
\end{align}
 yields the quantum Delsarte bound [Eq.~\eqref{eq:lp_Delsarte}].
\end{theorem}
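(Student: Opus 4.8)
The plan is to reverse the logic of Eqs.~\eqref{eq:av_0b}--\eqref{eq:quantum_Delsarte_Bk}. There the identity $\sum_{v\in\by{4}^n}\sigma_v(\Gamma)_{\x0}=2^n\Gamma_{\x0}$ (Eq.~\eqref{eq:av_0b} with $K=1$) is \emph{derived} from the hypothesis that $\Gamma$ comes from $\varrho=\Pi$; instead I would \emph{add} this identity --- which is precisely the constraint~\eqref{eq:Delsarte_proof_extra_constraint} --- as a linear constraint to the Lov\'asz SDP~\eqref{eq:sdpgamma_pure} for self-dual codes, and then average over $\Aut(4,n)$ as in Section~\ref{ref:sym_red}. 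Throughout I would track the quantities $A_k:=\sum_{\x\in\by{4}^n,\,\wt(\x)=k}\Gamma_{\x0}$, which by the constraint $\Gamma_{\x0}=\Gamma_{\x\x}$ also equal $\sum_{\wt(\x)=k}\Gamma_{\x\x}$, and show that the symmetry-reduced program is exactly the linear program~\eqref{eq:lp_Delsarte} in these variables.

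First I would record which constraints of~\eqref{eq:sdpgamma_pure} descend to the average and how they read in terms of the $A_k$. Since averaging over $\Aut(4,n)$ produces a matrix $\overbar\Gamma$ in the Bose--Mesner algebra [cf.~Eq.~\eqref{eq:Rbar_BoseMesner}], write $\overbar\Gamma=\sum_{k=0}^n y_kN_k$; averaging preserves positive semidefiniteness, so $\overbar\Gamma\succeq0$. The constraint $\Gamma_{00}=1$ gives $A_0=1$; non-negativity of the diagonal of $\Gamma$ gives $A_k\geq0$; and the loops $a\sim a$ for $0<\wt(E_a)<\delta$ in Definition~\ref{def:pure_conf_graph} force the $a$-th row and column of $\Gamma$ to vanish, so $A_k=0$ for $0<k<\delta$. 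The objective $\sum_{\aaa\geq1}\Gamma_{\aaa\aaa}$ equals $\sum_kA_k-1$, hence maximizing it amounts to maximizing $\sum_kA_k$. Crucially, the anti-commutation zeros of~\eqref{eq:sim_anti_comm} are \emph{not} invariant under $\Aut(4,n)$ --- its alphabet permutations mix commuting and anti-commuting pairs at each fixed Hamming distance --- so they are washed out by the average, which is exactly why the reduction can only produce a bound weaker than the Lov\'asz one.

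Next I would turn $\overbar\Gamma\succeq0$ into the Krawtchouk inequalities; this is where the added constraint~\eqref{eq:Delsarte_proof_extra_constraint} enters. Summing~\eqref{eq:Delsarte_proof_extra_constraint} over strings $\x$ of weight $k$ gives $\langle\Gamma,N_k\rangle=2^nA_k$, and using $\langle N_k,N_k\rangle=4^n\gamma^{0,0}_{k,0}$ one gets $y_k=(2^n\gamma^{0,0}_{k,0})^{-1}A_k$, reproducing Eq.~\eqref{eq:y_k_A} with no state assumption. The $N_k$ commute and their eigenvalues are the quaternary Krawtchouk polynomials $K_k(i)$, so the eigenvalues of $\overbar\Gamma$ are $\sum_k y_kK_k(i)$, $i=0,\dots,n$; invoking the reciprocity $\gamma^{0,0}_{i,0}K_k(i)=\gamma^{0,0}_{k,0}K_i(k)$ and then the quantum MacWilliams expansion~\eqref{eq:macexp}, these equal $(\gamma^{0,0}_{i,0})^{-1}B_i$ just as in Eq.~\eqref{eq:quantum_Delsarte_Bk}. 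Since $\gamma^{0,0}_{i,0}>0$, the condition $\overbar\Gamma\succeq0$ is therefore equivalent to $\sum_k K_j(k)A_k\geq0$ for all $j$. Together with $A_0=1$, $A_k\geq0$ with $A_k=0$ for $0<k<\delta$, and the objective $\max\sum_kA_k$, this is precisely the quantum Delsarte linear program~\eqref{eq:lp_Delsarte}.

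I expect the main difficulty to be conceptual rather than computational: one has to be careful that~\eqref{eq:Delsarte_proof_extra_constraint} must be imposed on $\Gamma$ and not on $\overbar\Gamma$ --- imposing it on the Bose--Mesner matrix $\sum_k y_kN_k$ would force every $y_k=0$ and trivialize the bound --- and to argue cleanly that this single linear constraint is the right surrogate for the hidden ``$\Gamma$ comes from $\varrho=\Pi$'' hypothesis of Eqs.~\eqref{eq:av_0b}--\eqref{eq:y_k_A}. One also has to verify that exactly the Hamming-distance-based constraints of~\eqref{eq:sdpgamma_pure}, and no others, survive passage to the Terwilliger/Bose--Mesner level, so that discarding the anti-commutation zeros is both forced and harmless for the reduction. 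The remaining bookkeeping --- the normalizations $\gamma^{0,0}_{k,0}$, the identity $\langle N_k,N_k\rangle=4^n\gamma^{0,0}_{k,0}$, and Krawtchouk reciprocity --- is routine given Section~\ref{subsec:average_codewords} and Ref.~\cite{GIJSWIJT20061719}.
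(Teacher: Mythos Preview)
Your proposal is correct and follows essentially the same approach as the paper: add the constraint~\eqref{eq:Delsarte_proof_extra_constraint} to the Lov\'asz SDP, average over $\Aut(4,n)$ to land in the Bose--Mesner algebra, express $y_k$ in terms of $A_k$, and read off the Krawtchouk inequalities from $\overbar\Gamma\succeq0$. Your computation $\langle\Gamma,N_k\rangle=2^nA_k$ via the substitution $a=x+u,\ b=u$ is a slightly more direct route to $y_k=(2^n\gamma^{0,0}_{k,0})^{-1}A_k$ than the paper's detour through $M^{0,0}_{k,0}$, but the argument is otherwise identical.
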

\begin{proof}
Note that if $\Gamma$ is constructed from a code $\varrho$, then Eq.~\eqref{eq:av_0b} shows that $\sum_{v\in \by{4}^n} \sigma_{v}(\Gamma)_{x0}=2^n \Gamma_{\x0} $ for all $x\in \by{4}^n $.
By including $\sum_{v\in \by{4}^n} \sigma_{v}(\Gamma)_{x0} =  2^n \Gamma_{\x0}$ in the program for the Lov\'asz theta number for self-dual quantum codes from Corollary~\ref{cor:lovasz}, we obtain,
\begin{align}\label{eq:lovasz_modf}
	\text{max} 		\quad & \sum_{x \in \EE_n \backslash \one}\Gamma_{xx} \nn\\
	\text{subject to} 	\quad
	&  \Gamma \succeq 0, \quad \Gamma_{00}=1\,,\quad \Gamma_{x0}=\Gamma_{xx} \,, \nn \\
	&  \Gamma_{xy}=0 \quad  \text{if} \quad x \sim y\,, \nn\\
	&\sum_{v\in \by{4}^n} \sigma_{v}(\Gamma)_{x0} = 	 2^n \Gamma_{\x0} \,.
\end{align}
Note that $\Gamma_{xx}=\Gamma_{x0}\geq 0$ since $\Gamma \succeq 0$.
Now average $\Gamma$ from  Eq.~\eqref{eq:lovasz_modf} over $\Aut(4,n)$ to obtain $\overbar \Gamma=\sum^n_{k=0} y_k N_k$.
Using Eq.~\eqref{eq:yk},
Eq.~\eqref{eq:basis_gammaprimeprime},
and the constraint
Eq.~\eqref{eq:Delsarte_proof_extra_constraint}
write $y_k$ as
\begin{align}
	y_k &=
\frac{1}{\gamma^{0,0}_{k,0}}
\av{\overbar\Gamma, M^{0,0}_{k,0}} \nn\\
&= \frac{1}{4^n\gamma^{0,0}_{k,0}} \sum\limits_{\substack{x\in \by{4}^n \\ \wt(\x)=k}}  \sum\limits_{v\in \by{4}^n}  \sigma_{v}(\Gamma)_{\x0} \nn \\
&= (2^n\gamma^{0,0}_{k,0})^{-1} \sum\limits_{\substack{x\in \by{4}^n \\ \wt(\x)=k}} \Gamma_{\x0} \,.
\end{align}
Define $a_j = \sum_{x\in \by{4}^n \,, \wt(\x)=j} \Gamma_{\x0}$,
which corresponds to the $A_j$ enumerator.
Note that $\bar\Gamma$ only depends on $a_j$.
The $\{a_j\}^n_{j=0}$ satisfy:
\begin{enumerate}
\item $a_0=\Gamma_{00}=1$,
\item $a_j = \sum_{x\in \by{4}^n \,, \wt(\x)=j} \Gamma_{\x0} \geq 0$, since $\Gamma_{x0}\geq 0$ \,,
\item $a_j = 0$ if $0 \leq j < \delta $, since $\Gamma_{\x0}=0$ if $0<\wt(\x)<\delta$.
\item $\sum^n_{j=0} a_j = \sum_{x \in \by{4}^n}\Gamma_{\x\x}=\sum_{x \in \by{4}^n}\Gamma_{\x0}$.
\end{enumerate}
It remains to use the fact that $\Gamma \succeq 0$.
Recall that $\Gamma \succeq 0$ implies $\overbar \Gamma \succeq 0$.
By Eq.~\eqref{eq:quantum_Delsarte_Bk}, $\overbar \Gamma$ has as eigenvalues,
\begin{align}
	(2^n\gamma^{0,0}_{j,0})^{-1} \sum^n_{j=0}K_i(j;n) A_j \,.
\end{align}
Therefore, $\overbar \Gamma \succeq 0$ if and only if $\sum^n_{j=0}K_i(j;n) A_j \geq0$ for all $0 \leq i \leq n$.
The combination with (1) - (4) above yields the quantum Delsarte bound for self-dual codes [Eq.~\eqref{eq:lp_Delsarte}],
\begin{align}
	\eta = \text{max} 	\quad & \sum^n_{j=0} a_j \nn\\
	\text{subject to} \quad   & a_0 = 1\,, \nn\\
	&a_j \geq 0 \quad \text{with equality for } \,\, 1<j<\delta \,,\nn\\
	&\sum^n_{i=0} K_j(i;n)a_i \geq 0 \quad \text{for}  \quad 0 \leq j \leq n\,.
\end{align}
 If $\eta < 2^n$ then a code with parameters $(\!(n,1,\delta)\!)_2$ does not exist.
 This ends the proof.
\end{proof}

\begin{remark}
 Naturally, this bound can be made stronger by formulating it
 as a feasibility program that includes the constraint $\sum_{j=0}^n a_j = 2^n$.
 We use this for excluding the existence of the $(\!(7,1,4)\!)_2$ code in Section~\ref{sec:7qb}.
\end{remark}

\section{Symmetry-reduced semidefinite programming bounds}
\label{sec:sym_red_Terwilliger}

We now aim to symmetry-reduce $\Gamma$
while retaining as many constraints from the original program [Eq.~\eqref{eq:sdpgamma}] as possible.
This is achieved by averaging $\Gamma$ over $\Aut_0(4,n)$,
which corresponds to Eq.~\eqref{eq:Gtilde} with $Q=\{0^n\}$. We therefore define
\begin{align}\label{eq:gammatilde_any}
	\tilde\Gamma = \Gamma(0^n) = \frac{1}{\lvert\Aut_{0}(4,n)\rvert}\sum\limits_{\pi\in\Aut_{0}(4,n)}\pi\left(\Gamma\right)\,.
\end{align}
In analogy to Eq.~\eqref{eq:Rbar}, define $\tilde \Gamma^c$ as
\begin{align}\label{eq:gammatildec_any}
	\tilde\Gamma^c =\frac{1}{4^n-2^n/K}\left(4^n \bar \Gamma - \frac{2^n}{K}\tilde \Gamma\right)\,,
\end{align}
where $\overbar\Gamma$ is stated in  Eq.~\eqref{eq:gamma_bar}.
We choose $2^n/K$ in the normalization of $\tilde \Gamma$,
which for stabilizer codes corresponds to the number of stabilizer elements.
In general however,
the matrix  $\tilde\Gamma^c$ does not necessarily come from averaging $\Gamma$ over the codewords of a code,
i.e., $\tilde\Gamma^c \neq \frac{1}{4^n - \left\lvert Q \right\rvert}\sum_{v \notin Q}\Gamma^{v}$ for some $Q \subseteq \by{4}^n$.
It is thus not immediate that $\tilde \Gamma^c$ is positive semidefinite.
We now show that $\tilde\Gamma^c \succeq 0$ for all quantum codes.
\begin{proposition}\label{prop:pos_cond_av}
For every quantum code $\tilde\Gamma\succeq0$ and $\tilde \Gamma^c \succeq 0$.
\end{proposition}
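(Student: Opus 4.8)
The plan is to exhibit both $\tilde\Gamma$ and $\tilde\Gamma^c$ as genuine averages of the positive semidefinite matrix $\Gamma$ under distance-preserving automorphisms, from which positivity is immediate since $\pi(\Gamma)\succeq 0$ whenever $\Gamma\succeq 0$. For $\tilde\Gamma$ this is essentially the definition: $\tilde\Gamma = \tfrac{1}{|\Aut_0(4,n)|}\sum_{\pi\in\Aut_0(4,n)}\pi(\Gamma)$ is a convex combination of PSD matrices, hence PSD. The real content is $\tilde\Gamma^c\succeq 0$, and here the key observation is the identity established earlier in Eq.~\eqref{eq:av_0b} (and more generally the combination of the structure constraint Eq.~\eqref{eq:structure_constraint} with the projector constraint Eq.~\eqref{eq:projconst_lvl}): for a quantum code $\varrho = \Pi/K$ one has $\sum_{v\in\by{4}^n}\sigma_v(\Gamma)_{\x\y} = \tfrac{2^n}{K}\Gamma_{\x\y}$ — not just for $\y = 0^n$ as used for the Delsarte bound, but for all pairs $(\x,\y)$, by the same swap-trick computation. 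In other words, $\overbar\Gamma = \tfrac{1}{4^n}\sum_{v}\Gamma^v$ and $\tilde\Gamma = \Gamma^{0^n}$ are related through $4^n\,\overbar\Gamma = \tfrac{2^n}{K}\,\tilde\Gamma + (\text{the rest})$, where ``the rest'' is itself a sum over $v\neq 0^n$ of terms $\Gamma^v$ weighted so that the total is again an average of PSD matrices.

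More concretely, I would proceed as follows. First, prove the matrix identity $\sum_{v\in\by{4}^n}\sigma_v(\Gamma)_{\x\y} = \tfrac{2^n}{K}\Gamma_{\x\y}$ for all $\x,\y$, repeating the calculation of Eq.~\eqref{eq:av_0b} with $E_\x^\dagger$ replaced by $E_\x^\dagger E_\y$ inside the trace: one gets $\sum_{v}\sigma_v(\Gamma)_{\x\y} = 2^n\tr(E_\x^\dagger E_\y\varrho^2)\,\cdot\,(\text{the product }\av{E_\x^\dagger}\av{E_\y})$ suitably arranged, and $\varrho^2 = \varrho/K$ collapses this to $\tfrac{2^n}{K}\Gamma_{\x\y}$. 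Second, average this identity over $\Aut_0(4,n)$ to get $\tfrac{2^n}{K}\tilde\Gamma = \sum_{v\in\by{4}^n}\Gamma^v = 4^n\,\overbar\Gamma$, so in fact $\tilde\Gamma = \tfrac{K\cdot 4^n}{2^n}\,\overbar\Gamma = K\,2^n\,\overbar\Gamma$. Third, substitute into the definition Eq.~\eqref{eq:gammatildec_any}:
\begin{align}\label{eq:gammac_vanishes}
\tilde\Gamma^c = \frac{1}{4^n - 2^n/K}\left(4^n\,\overbar\Gamma - \frac{2^n}{K}\,\tilde\Gamma\right) = \frac{1}{4^n - 2^n/K}\left(4^n\,\overbar\Gamma - 4^n\,\overbar\Gamma\right) = 0\,,
\end{align}
so $\tilde\Gamma^c$ is the zero matrix, trivially PSD. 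Meanwhile $\tilde\Gamma = K\,2^n\,\overbar\Gamma\succeq 0$ because $\overbar\Gamma$, being a convex combination over $v\in\by{4}^n$ of the PSD matrices $\Gamma^v = \tfrac{1}{|\Aut_0(4,n)|}\sum_\pi \pi(\sigma_v(\Gamma))$, is itself PSD.

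I expect the main obstacle to be getting the full-matrix identity $\sum_v\sigma_v(\Gamma)_{\x\y} = \tfrac{2^n}{K}\Gamma_{\x\y}$ cleanly, since Eq.~\eqref{eq:av_0b} is only stated for the column $\y = 0^n$ and one must track the phase factors $\omega_{\cdot\cdot}$ carefully when $E_\x$ and $E_\y$ do not commute with $E_v$; however, because the three-factor product $\av{E_v E_\x^\dagger}\av{E_\y E_v^\dagger}\av{E_\x E_\y^\dagger}$ that appears in $\sigma_v(\Gamma)_{\x\y}$ has the $E_v$-dependence confined to the first two factors, summing over $v$ again produces the swap operator $(1,2) = 2^{-n}\sum_{E_v} E_v\otimes E_v^\dagger$ acting on $\varrho\otimes\varrho$, and $\tr((1,2)(E_\x^\dagger\varrho\otimes\varrho)) = \tr(E_\x^\dagger\varrho^2)$ carries through verbatim. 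One should also double-check that the normalization $4^n - 2^n/K$ is nonzero (true whenever $K\geq 1$ and $n\geq 1$, since $2^n/K \leq 2^n < 4^n$), so that Eq.~\eqref{eq:gammac_vanishes} is legitimate. An alternative, if one prefers not to invoke $\varrho$ directly, is to derive $\tilde\Gamma = K 2^n\overbar\Gamma$ purely from the SDP constraints Eqs.~\eqref{eq:structure_constraint} and~\eqref{eq:projconst_lvl} on $\Gamma$, which is the more robust route since Proposition~\ref{prop:pos_cond_av} is meant to hold for any feasible $\Gamma$ of Eq.~\eqref{eq:sdpgamma}, not merely those literally arising from a state.
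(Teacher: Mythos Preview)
Your argument contains a genuine error in the central identity. You claim that $\sum_{v\in\by{4}^n}\sigma_v(\Gamma)_{\x\y} = \tfrac{2^n}{K}\Gamma_{\x\y}$ for \emph{all} pairs $(\x,\y)$, but this is false. Carrying out the swap-trick computation as you suggest (and as the paper does in Eq.~\eqref{eq:proof2}) yields
\[
\sum_{v}\sigma_v(\Gamma)_{\x\y}
= \frac{2^n}{K}\,\tr(E_\y E_\x^\dagger\varrho)\,\tr(E_\x E_\y^\dagger\varrho)\,,
\]
which is \emph{not} $\tfrac{2^n}{K}\Gamma_{\x\y}$: the factor $\tr(E_\y E_\x^\dagger\varrho)$ appears in place of the product $\tr(E_\x^\dagger\varrho)\tr(E_\y\varrho)$. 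The two expressions coincide only when $\y=0^n$, which is precisely the special case treated in Eq.~\eqref{eq:av_0b}. Your sketch ``one gets $2^n\tr(E_\x^\dagger E_\y\varrho^2)\cdot(\text{the product }\av{E_\x^\dagger}\av{E_\y})$'' is where the slip occurs: the $v$-dependence in $\sigma_v(\Gamma)_{\x\y}$ sits in the factors $\av{E_v E_\x^\dagger}$ and $\av{E_\y E_v^\dagger}$, so after summing over $v$ there is no surviving factor $\av{E_\x^\dagger}\av{E_\y}$.

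Consequently $\tilde\Gamma^c$ is \emph{not} zero, and its positivity is not automatic. The paper's proof handles this by writing $(2^nK)\overbar\Gamma - \tilde\Gamma$ as the $\Aut_0(4,n)$-average of the matrix $L$ with entries
$L_{\x\y} = \big[\tr(E_\y E_\x^\dagger\varrho) - \tr(E_\x^\dagger\varrho)\tr(E_\y\varrho)\big]\tr(E_\x E_\y^\dagger\varrho)$,
and then shows $L\succeq 0$ via a Schur-product decomposition $L = M\circ G$ with both factors positive semidefinite (the positivity of $M$ uses the symmetric/antisymmetric projectors on $\varrho^{\otimes 2}$). This Schur-product step is the missing ingredient in your approach.
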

\begin{proof}
The matrix $\tilde\Gamma$ is positive semidefinite by construction in Eq.~\eqref{eq:gammatilde_any},
being a sum over positive semidefinite matrices.

We now consider $\tilde \Gamma^c$.
We first show that a matrix with the entries
	\begin{align}\label{eq:pos_ent}
		L_{xy} =
		\big[\tr(E_\y  E^\dagger_\x \varrho) - \tr( E^\dagger_\x  \varrho)
		\tr(E_\y  \varrho) \big]
		\tr(E_\x  E^\dagger_\y \varrho) \,,
	\end{align}
	is positive semidefinite.
	To see this, write $L = M \circ G$ where $\circ$ is the Schur product, that is,
	$L_{xy} = M_{xy} G_{xy}$, and where
	\begin{align}
		M_{xy} &=\tr(E_\y  E^\dagger_\x \varrho) - \tr( E^\dagger_\x  \varrho)
		\tr(E_\y  \varrho) \,,\nn\\
		G_{xy} &= \tr(E_\x  E^\dagger_\y \varrho) \,.
	\end{align}
	The Schur product theorem states that if $A \succeq 0$ and $B\succeq 0$,
	then $A\circ B \succeq 0$.

	As $G$ has the form of a moment matrix it is clear that $G \succeq 0$.
	Let us show that also $M\succeq 0$. This can be seen by noting that $M$ is a covariance matrix,
	\begin{align}
		M_{xy} &= \big\langle \big( E_y - \av{E_y}_{\varrho} \big) \big( E^\dagger_x - \av{E^\dagger_x}_{\varrho} \big) \big\rangle_\varrho \nn \\
		&= \av{ E_y E^\dagger_x}_\varrho - \av{ E_y} \av{E^\dagger_x}_\varrho \,,
	\end{align}
	where we used that all observables are Hermitian.
	Thus $M\succeq 0$ and also $L = M \circ G\succeq 0$.

	We now show that
	\begin{align}\label{eq:L_average}
		(2^nK)\overbar\Gamma - \tilde\Gamma = \frac{1}{\lvert\Aut_{0}(4,n)\rvert}\sum_{\pi \in \Aut_0(4,n)} \pi(L)\,.
	\end{align}
	The expression in Eq.~\eqref{eq:L_average}, being a sum over positive semidefinite matrices,
	is positive semidefinite and thus proves that $\tilde \Gamma^c \succeq 0$.
	To see that Eq.~\eqref{eq:L_average} holds,
	write $\overbar\Gamma$ using Eq.~\eqref{eq:gamma_bar} together with Eq.~\eqref{eq:Gv} as
	\begin{align}\label{eq:proof1}
		\overbar\Gamma&= \frac{1}{4^n\lvert\Aut_{0}(4,n)\rvert}
		\sum_{v\in \by{4}^n}
		\sum\limits_{\pi\in\Aut_{0}(4,n)}
		\pi\left(\sigma_{v}\!\left(\Gamma\right)\right)\,\nn \\
		&=\frac{1}{4^n\lvert\Aut_{0}(4,n)\rvert}
		\sum\limits_{\pi\in\Aut_{0}(4,n)}\pi
		\left(\sum_{v\in \by{4}^n}\sigma_{v}\!\left(\Gamma\right)\right)\,.
	\end{align}
	Because $\overbar\Gamma$ arises from a quantum code $\varrho = \Pi/K$,
	we can develop the inner sum of Eq.~\eqref{eq:proof1}
	as in Eq.~\eqref{eq:av_0b},
	\begin{align}\label{eq:proof2}
		\sum_{v\in \by{4}^n}\sigma_{v}\!\left(\Gamma\right)_{\x\y}  &= \sum_{E_{v}\in \EE_n} \Big( \tr(E_vE^\dagger_\x \varrho) \tr(E_\y E^\dagger_v\varrho)  \Big)\tr(E_\x  E^\dagger_\y \varrho) \nn \\
		&= \frac{2^n}{K}\tr(E_\y  E^\dagger_\x \varrho) \tr(E_\x  E^\dagger_\y \varrho) \,.
	\end{align}
	Using Eq.~\eqref{eq:proof1} together with Eq.~\eqref{eq:proof2}, the matrix $\bar\Gamma$ has entries
	\begin{align}\label{eq:proof3}
		\overbar\Gamma_{xy}&=\frac{1}{2^nK\lvert\Aut_{0}(4,n)\rvert}
		\sum\limits_{\pi\in\Aut_{0}(4,n)}\pi
		\left(S\right)\,,
	\end{align}
	where 	$S_{xy}= \tr(E_\y  E^\dagger_\x \varrho)\tr(E_\x  E^\dagger_\y \varrho)$.
	With this notation, the matrix $L$ can be expressed as $L=S-\Gamma$.
	Using the definition Eq.~\eqref{eq:gammatilde_any} of $\tilde\Gamma$ and the identity Eq.~\eqref{eq:proof3} for  $\bar\Gamma$,
	proves that Eq.~\eqref{eq:L_average} holds:
	\begin{align}\label{eq:L_average2}
		&\frac{1}{\lvert\Aut_{0}(4,n)\rvert}\sum_{\pi \in \Aut_0(4,n)} \pi(L)  \nn\\
		&= 	\frac{1}{\lvert\Aut_{0}(4,n)\rvert}\sum_{\pi \in \Aut_0(4,n)} \pi(S)
		\quad - \quad
		\frac{1}{\lvert\Aut_{0}(4,n)\rvert}\sum_{\pi \in \Aut_0(4,n)} \pi(\Gamma) \nn \\
		&= (2^nK)\overbar\Gamma - \tilde\Gamma.
	\end{align}
	Recall that since $L$ is positive semidefinite,
	then $\pi(L)$ is also positive semidefinite for any $\pi \in \Aut(4,n)$.
	Eq.~\eqref{eq:L_average2} is a sum over positive semidefinite matrices and thus
	$(2^nK)\overbar\Gamma - \tilde\Gamma\succeq 0$.
	A multiplication by $\big(K(2^n - 1/K)\big)^{-1}$ yields Eq.~\eqref{eq:gammatildec_any}. This proves the claim.
\end{proof}

\subsection{Positive semidefinite constraints in the Terwilliger algebra}
To make use of Proposition~\ref{prop:pos_cond_av} in a symmetry-reduced semidefinite program,
we need to express the constraints in terms of the Terwilliger algebra.
\begin{proposition}\label{prop:terwilliger_main_reduction}
	The matrices $\tilde\Gamma$ and $\tilde\Gamma^c$ expand in the Terwilliger algebra as
	\begin{align}
		\tilde\Gamma& =\sum_{(i,j,t,p)\in \II(q,n)} x^{t,p}_{i,j} M^{t,p}_{i,j} \label{eq:gammatilde_anycode} \,,\\
		\tilde\Gamma^c &=\frac{2^n}{K}\sum_{(i,j,t,p)\in \II(q,n)} (x^{0,0}_{i+j-t-p}-x^{t,p}_{i,j}) M^{t,p}_{i,j} \label{eq:gammatilde_complement_anycode} \,.
	\end{align}
	Here $x^{t,p}_{i,j}=\frac{\lambda^{t,p}_{i,j}}{\gamma^{t,p}_{i,j}}$ with $\gamma^{t,p}_{i,j}$ defined in Eq.~\eqref{eq:gamma} and
	\begin{align}\label{eq:lambda_anycode}
		\lambda^{t,p}_{i,j}= \!\!\!\! \sum\limits_{
			\substack{x,y \in \by{4}^n \\ |s(\x)| = i
				\\ |s(\y)| = j
				\\ \lvert s(\x)\cap s(\x)\rvert = t
				\\ \lvert\{m \,\mid\, \x_{m}=\y_{m}\neq0\}\rvert = p \\}} \!\!\!\!\!\! \Gamma_{\x\y}  \,.
	\end{align}
\end{proposition}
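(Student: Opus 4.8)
The plan is to follow, coordinate by coordinate, the classical argument of Gijswijt et al.~\cite{GIJSWIJT20061719} developed in Section~\ref{subsec:average_codewords}. By construction $\tilde\Gamma$ and $\tilde\Gamma^c$ are invariant under $\Aut_0(4,n)$ (the first as an average over that group in Eq.~\eqref{eq:gammatilde_any}, the second since it is a linear combination of $\tilde\Gamma$ and the $\Aut(4,n)$-invariant matrix $\overbar\Gamma$), so both lie in the Terwilliger algebra $\mathcal{A}_{4,n}$ and expand in the orthogonal basis $\{M^{t,p}_{i,j}\}_{(i,j,t,p)\in\II(4,n)}$. Since $\Gamma$ may be taken real (as noted before Eq.~\eqref{eq:sim_anti_comm}), the Hilbert--Schmidt inner product here is entrywise summation, and the pairwise orthogonality with $\langle M^{t,p}_{i,j},M^{t,p}_{i,j}\rangle=\gamma^{t,p}_{i,j}$ [Eq.~\eqref{eq:gamma}] shows that the coordinate of any $\Aut_0(4,n)$-invariant $R$ along $M^{t,p}_{i,j}$ is $\langle R,M^{t,p}_{i,j}\rangle/\gamma^{t,p}_{i,j}$. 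Thus the whole statement reduces to evaluating two inner products.

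For $\tilde\Gamma$ I would move the averaging off $\Gamma$ and onto the basis matrix, exactly as in the chain Eq.~\eqref{eq:lambda_def_start}: each $M^{t,p}_{i,j}$ is itself $\Aut_0(4,n)$-invariant, so $\langle\tilde\Gamma,M^{t,p}_{i,j}\rangle=\langle\Gamma,M^{t,p}_{i,j}\rangle$. Reading the support pattern off the definition Eq.~\eqref{eq:terw_basis}, the entries $(\x,\y)$ with $(M^{t,p}_{i,j})_{\x\y}\neq 0$ are exactly those with $|s(\x)|=i$, $|s(\y)|=j$, $|s(\x)\cap s(\y)|=t$ and $|\{m:\x_m=\y_m\neq 0\}|=p$, so $\langle\Gamma,M^{t,p}_{i,j}\rangle=\lambda^{t,p}_{i,j}$ in the notation of Eq.~\eqref{eq:lambda_anycode}. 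Dividing by $\gamma^{t,p}_{i,j}$ gives $x^{t,p}_{i,j}=\lambda^{t,p}_{i,j}/\gamma^{t,p}_{i,j}$ and hence Eq.~\eqref{eq:gammatilde_anycode}. As flagged in the remark above, there is no $1/|C|$ prefactor and no extra sum over codewords compared with the classical case, precisely because $\tilde\Gamma$ is averaged over $\Aut_0(4,n)$ alone.

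For $\tilde\Gamma^c$ I would not touch it directly but work from its definition Eq.~\eqref{eq:gammatildec_any}, $\tilde\Gamma^c=(4^n-2^n/K)^{-1}\big(4^n\overbar\Gamma-\tfrac{2^n}{K}\tilde\Gamma\big)$, and expand $\overbar\Gamma$. The matrix $\overbar\Gamma$ of Eq.~\eqref{eq:gamma_bar} is invariant under all of $\Aut(4,n)$, hence lies in the Bose--Mesner subalgebra, so $\overbar\Gamma=\sum_{k=0}^n y_kN_k=\sum_{(i,j,t,p)\in\II(4,n)}y_{i+j-t-p}M^{t,p}_{i,j}$ by the inclusion Eq.~\eqref{eq:bose_bases}. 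To identify $y_k$ I would take the inner product with the representative $M^{0,0}_{k,0}$ and unwind it as in Eq.~\eqref{eq:basis_gammaprimeprime} to $y_k=(4^n\gamma^{0,0}_{k,0})^{-1}\sum_{\x:\wt(\x)=k}\sum_{v\in\by{4}^n}\sigma_v(\Gamma)_{\x 0}$, and then use the identity Eq.~\eqref{eq:av_0b} --- equivalently, the structure constraint Eq.~\eqref{eq:structure_constraint} together with $\sum_\bbb\Gamma_{\aaa\bbb}=\tfrac{2^n}{K}\Gamma_{\aaa\aaa}$ of Eq.~\eqref{eq:projconst_lvl} and $\Gamma=\Gamma^T$, $\Gamma_{\aaa 0}=\Gamma_{\aaa\aaa}$ --- to collapse the inner sum to $\tfrac{2^n}{K}\Gamma_{\x 0}$. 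Since $\sum_{\x:\wt(\x)=k}\Gamma_{\x 0}=\lambda^{0,0}_{k,0}$, this yields $y_k=(2^nK)^{-1}x^{0,0}_{k,0}$, so that $4^n\overbar\Gamma=\tfrac{2^n}{K}\sum_{(i,j,t,p)}x^{0,0}_{i+j-t-p,0}M^{t,p}_{i,j}$. Substituting this and $\tfrac{2^n}{K}\tilde\Gamma=\tfrac{2^n}{K}\sum_{(i,j,t,p)}x^{t,p}_{i,j}M^{t,p}_{i,j}$ into Eq.~\eqref{eq:gammatildec_any}, the factor $\tfrac{2^n}{K}$ pulls out of the difference and collecting the remaining scalar coefficients gives the expansion Eq.~\eqref{eq:gammatilde_complement_anycode}.

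The two ``push the average onto $M$'' manipulations and the orthogonality bookkeeping are routine given Section~\ref{subsec:average_codewords}. The delicate point is the $y_k$ computation for $\tilde\Gamma^c$: one must check that $\sum_{v\in\by{4}^n}\sigma_v(\Gamma)_{\x 0}=\tfrac{2^n}{K}\Gamma_{\x 0}$ really follows from the SDP constraints imposed on $\Gamma$ and not only from $\Gamma$ arising from a genuine code, and one must track the $2^n/K$ normalization consistently through Eq.~\eqref{eq:gammatildec_any} --- this is where our averaging convention diverges from \cite[Proposition~8]{GIJSWIJT20061719}, and where Proposition~\ref{prop:pos_cond_av}, whose proof already verifies the matrix-level translate-and-average identity Eq.~\eqref{eq:proof2}, does most of the work.
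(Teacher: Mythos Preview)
Your proposal is correct and follows essentially the same route as the paper: expand $\tilde\Gamma$ in the Terwilliger basis by pushing the $\Aut_0(4,n)$-average onto $M^{t,p}_{i,j}$ (the chain Eq.~\eqref{eq:lambda_def_start}--\eqref{eq:lambda_def} specialized to the singleton $\{0^n\}$), and handle $\tilde\Gamma^c$ via its definition Eq.~\eqref{eq:gammatildec_any} together with the Bose--Mesner expansion of $\overbar\Gamma$ and the identity Eq.~\eqref{eq:av_0b} for $y_k$. Your remark that $\sum_v\sigma_v(\Gamma)_{\x 0}=\tfrac{2^n}{K}\Gamma_{\x 0}$ can be read off the SDP constraints (structure plus projector) rather than only from $\Gamma$ coming from a code is exactly the observation the paper makes after Eq.~\eqref{eq:av_0b}, and your caution about tracking the $2^n/K$ normalization through Eq.~\eqref{eq:gammatildec_any} is well placed --- the overall positive scalar in front of $\sum(x^{0,0}_{i+j-t-p,0}-x^{t,p}_{i,j})M^{t,p}_{i,j}$ is immaterial for the PSD constraint that is actually used downstream.
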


\begin{proof}
	The matrix $\tilde\Gamma$ is invariant under $\Aut_0(q,n)$,
	and consequentially it can be expanded in the
	basis elements $M^{t,p}_{i,j}$ of the Terwilliger algebra.
	With the definition from Eq.~\eqref{eq:lambda_def}, the parameters $\lambda^{t,p}_{i,j}$
	are then given by Eq.~\eqref{eq:lambda_anycode} and $x^{t,p}_{i,j}=\frac{\lambda^{t,p}_{i,j}}{\gamma^{t,p}_{i,j}}$.

	We can express $\tilde\Gamma^c$ from Eq.~\eqref{eq:gammatildec_any}
	using Eq.~\eqref{eq:gamma_bar_bose},
	\begin{align}\label{eq:gammatilde_complementary_proof_anycode}
		\tilde\Gamma^c =\sum_{(i,j,t,p)\in \II(4,n)} (y_{i+j-t-p}-\frac{2^n}{K}x^{t,p}_{i,j}) M^{t,p}_{i,j}\,.
	\end{align}
	With this, Eq.~\eqref{eq:y_k_A} with $\Gamma_{xx}=\Gamma_{x0}$ shows that
	\begin{align}\label{eq:y_k_proof_anycode}
		y_k=(2^nK\gamma^{0,0}_{k,0})^{-1}\sum_{\substack {x\in \by{4}^n \\ |s(x)|=k}} \Gamma_{x0}=
		\frac{2^n}{K} x^{0,0}_{k,0}\,.
	\end{align}
	The combination of Eq. ~\eqref{eq:gammatilde_complementary_proof_anycode} with Eq. ~\eqref{eq:y_k_proof_anycode} results in Eq. ~\eqref{eq:gammatilde_complement_anycode}. This ends the proof.
\end{proof}

\subsection{Stabilizer codes}

Consider the case of a stabilizer code with stabilizer group $S$. Then we can recover the Propositions~\ref{prop:pos_cond_av} and~\ref{prop:terwilliger_main_reduction} in an alternative fashion for the matrices
\begin{align}
	\tilde \Gamma(\mathcal{S}) &=
	\frac{1}{|\mathcal{S}|}\sum\limits_{v\in \mathcal{S}}\Gamma^{v}  \label{eq:gamma_stab_first}\,, \\
	\tilde\Gamma^c(\mathcal{S})&=
	\frac{1}{4^n-|\mathcal{S}|}
	\sum\limits_{v \in \mathcal{S}\setminus \by{4}^n}\Gamma^{v}
	= \frac{1}{4^n-|\mathcal{S}|}
	\Big(4^n\overbar\Gamma-|\mathcal{S}|\tilde \Gamma(\mathcal{S})\Big) \label{eq:gamma_c_stab_first} \,,
\end{align}
that is, by replacing $Q=\{0^n\}$ with $Q=\mathcal{S}$
where $\mathcal{S}$ contains the unsigned Pauli strings of $S$ mapped to $4^n$.

Then as in the classical case discussed in Sections~\ref{subsec:average_codewords} and~\ref{subsec:average_codewords_comp}, these matrices are both positive semidefinite and invariant under $\Aut_{0}(4,n)$.
They can thus be expressed in a basis of the Terwilliger algebra $\mathcal{A}_{4,n}$,
recovering the same constraints on
$x^{t,p}_{i,j}$ as in the general case.
This follows from the fact that by the definition of $\Gamma^v$ in Eq.~\eqref{eq:Gv},
it holds that $\Gamma^v=\Gamma^{0}$ for any $v\in \mathcal{S}$ leading to $\tilde\Gamma=\Gamma^0$,
and that for a stabilizer code $|S|=|\mathcal{S}|=2^n/K$.

Let us explain this in more detail:
Consider Eq.~\eqref{eq:Gv},
\begin{align}
	\tilde \Gamma(\mathcal{S}) &=
\frac{1}{|\mathcal{S}|}\sum\limits_{v\in \mathcal{S}}\Gamma^{v}
=\frac{1}{|\mathcal{S}|} \frac{1}{\lvert\Aut_{0}(4,n)\rvert}\sum\limits_{v\in \mathcal{S}}	\sum\limits_{\pi\in\Aut_{0}(4,n)}\pi\left(\sigma_{v}\!\left(\Gamma\right)\right)\,.
\end{align}
Because $s\Pi = \Pi s=\Pi$ for all $s\in S$, we have for any $v \in \mathcal{S}$ that
\begin{align}
	\sigma_v(\Gamma)_{\x\y} &= \tr(E^\dagger_\x\varrho E_v) \tr(E_\y E^\dagger_v\varrho)\tr(E_\x E^\dagger_\y \varrho ) \nn \\
	&= \tr(E^\dagger_\x \varrho ) \tr(E_\y \varrho)\tr(E_\x E^\dagger_\y \varrho ) =\Gamma_{\x\y} \,.
\end{align}
This leads to
\begin{align}
\tilde \Gamma(\mathcal{S})
= \frac{1}{\lvert\Aut_{0}(4,n)\rvert} 	\sum\limits_{\pi\in\Aut_{0}(4,n)}\pi\left(\Gamma \right)
= \tilde \Gamma(0^n)\,,
\end{align}
which recovers the general case [Eq.~\eqref{eq:gammatilde_any}].
In analogy to Proposition~\eqref{prop:terwilliger_main_reduction},
\begin{align}
	\tilde\Gamma(\mathcal{S})&=\sum_{(i,j,t,p) \in \II(4,n)} x^{t,p}_{i,j} M^{t,p}_{i,j} \label{eq:gamma_stab} \,, \\
	\tilde\Gamma^c (\mathcal{S})&=\frac{|\mathcal{S}|}{4^n-|\mathcal{S}|}
	\sum_{(i,j,t,p) \in \II(4,n)} (x^{0,0}_{i+j-t-p}-x^{t,p}_{i,j}) M^{t,p}_{i,j} \label{eq:gamma_c_stab_second} \,,
\end{align}
since for a stabilizer code $|S|=|\mathcal{S}|=2^n/K$.

\subsection{Symmetry-reduced constraints}\label{subsect:sdp_aver}
We now translate all constraints on
$\Gamma$ into constraints on $x^{t,p}_{i,j}$.
Similar to the classical case in Ref.~\cite[Eq.~(49)]{GIJSWIJT20061719}, one has:
\begin{proposition} \label{prop:Terwilliger_extra_constraints}
With $\tilde \Gamma$ and $\tilde \Gamma^c$ expanded as in Proposition~\ref{prop:terwilliger_main_reduction},
the $x^{t,p}_{i,j}$ satisfy:
\begin{alignat}{2}
	\operatorname{(i)}& \quad x^{00}_{00}=1\,, \nn \\
	\operatorname{(ii)}& \quad x^{t,p}_{i,j}=0 \quad\quad\,\, \text{if} \quad t-p \quad \text{is odd}\,, \nn\\
	\operatorname{(iii)}& \quad x^{t,p}_{i,j}=x^{t',p'}_{i',j'} \quad \text{if} \quad  t-p=t'-p' \quad \text{is even} \quad \text{and}\nn \\
	&\quad\quad\quad\quad\quad\quad\,\, (i,j,i+j-t-p)\quad \text{a permutation of} \quad(i',j',i'+j'-t'-p')\,,\nn \\
	\operatorname{(iv)}& \quad \sum^{n}_{i=0}\gamma^{0,0}_{i,0}\x^{0,0}_{i,0} =\frac{2^n}{K} \,,
	\nn \\ 
	\operatorname{(v)}& \sum_{\substack{(i,j,t,p)\in \II(4,n)\\k=i+j-t-p}} \!\!\! \gamma^{t,p}_{i,j}x^{t,p}_{i,j} =\frac{2^n}{K} \gamma^{0,0}_{k,0} x^{0,0}_{k,0}.
\end{alignat}
\end{proposition}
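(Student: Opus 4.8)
The plan is to push each constraint that Eq.~\eqref{eq:sdpgamma} imposes on $\Gamma$ through the identity $\lambda^{t,p}_{i,j}=\langle\Gamma,M^{t,p}_{i,j}\rangle=\sum_{(\x,\y)}\Gamma_{\x\y}$, the sum running over the index set carved out in Eq.~\eqref{eq:terw_basis} (i.e.\ the definition Eq.~\eqref{eq:lambda_anycode}), together with $x^{t,p}_{i,j}=\lambda^{t,p}_{i,j}/\gamma^{t,p}_{i,j}$ from Proposition~\ref{prop:terwilliger_main_reduction} and the dictionary of Remark~\ref{rmk:interp}: a pair $(\x,\y)$ in the block $(i,j,t,p)$ has $\wt(E_\x E_\y^\dagger)=i+j-t-p$, and $E_\x,E_\y$ commute precisely when $t-p$ is even. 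Items (i), (ii) and (iv) are then immediate. For (i), the block $(0,0,0,0)$ is the single pair $(0^n,0^n)$ and $\gamma^{0,0}_{0,0}=1$, so $x^{0,0}_{0,0}=\Gamma_{00}=1$. For (ii), if $t-p$ is odd then every $(\x,\y)$ in the block has $E_\x,E_\y$ anticommuting, hence $\Gamma_{\x\y}=0$ by the anticommutation constraint, so $\lambda^{t,p}_{i,j}=0$. For (iv), $\sum_{i}\gamma^{0,0}_{i,0}x^{0,0}_{i,0}=\sum_{i}\lambda^{0,0}_{i,0}=\sum_{i}\sum_{|s(\x)|=i}\Gamma_{\x 0}=\sum_{\x}\Gamma_{\x 0}=\sum_{\x}\Gamma_{\x\x}=\tfrac{2^n}{K}$, using the structure relation $\Gamma_{\x 0}=\Gamma_{\x\x}$ and the diagonal-sum constraint of Eq.~\eqref{eq:sdpgamma}.

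The substance is (iii). For commuting $E_\x,E_\y$ the structure constraint Eq.~\eqref{eq:structure_constraint} makes $\Gamma_{\x\y}$ depend only on the unordered triple $\{E_\x,E_\y,\omega_{\x\y}E_\x E_\y^\dagger\}$ of Pauli strings. Writing $A=E_\x$, $B=E_\y$, $C=\omega_{\x\y}E_\x E_\y^\dagger$ so that $C=\pm AB$, I would inspect the triple $(A_m,B_m,C_m)$ coordinate by coordinate: at each position either all three are trivial, or exactly one is trivial, or all three are nontrivial and pairwise distinct. Letting $n_A,n_B,n_C$ count the positions where exactly one of them is trivial (named by the trivial factor) and $n_{ABC}$ the positions where all three are nontrivial, one gets $\wt A=n_B+n_C+n_{ABC}$ and cyclically, and $t-p=n_{ABC}$; hence the weight multiset $\{i,j,i+j-t-p\}=\{\wt A,\wt B,\wt C\}$ is permuted exactly as $(n_A,n_B,n_C)$ is permuted, while $t-p$ stays fixed. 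A short computation further shows $\gamma^{t,p}_{i,j}$ from Eq.~\eqref{eq:gamma} is a symmetric function of $n_A,n_B,n_C$ and of $n_{ABC}$, hence invariant under these permutations. It then remains to realize the generating transpositions of $S_3$ by explicit bijections of blocks: $(\x,\y)\mapsto(\y,\x)$ interchanges the roles of $i$ and $j$, and $(\x,\y)\mapsto(c,\y)$ with $E_c=\omega_{\x\y}E_\x E_\y^\dagger$ interchanges $i$ with $i+j-t-p$; both maps are bijections (any two sides of the ``triangle'' $C=\pm AB$ determine the third), both preserve commutativity, and both leave the unordered Pauli triple unchanged, so they preserve the summand $\Gamma_{\x\y}$ and the block size. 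Composing them yields all of $S_3$, giving $\lambda^{t,p}_{i,j}=\lambda^{t',p'}_{i',j'}$ and $\gamma^{t,p}_{i,j}=\gamma^{t',p'}_{i',j'}$ whenever $t-p=t'-p'$ is even and $(i,j,i+j-t-p)$ is a permutation of $(i',j',i'+j'-t'-p')$, which is (iii).

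For (v) I would rewrite both sides as sums of entries of $\Gamma$. Using Eq.~\eqref{eq:bose_bases} and $d(\x,\y)=\wt(E_\x E_\y^\dagger)$, the left side is $\sum_{i+j-t-p=k}\gamma^{t,p}_{i,j}x^{t,p}_{i,j}=\sum_{i+j-t-p=k}\lambda^{t,p}_{i,j}=\langle\Gamma,N_k\rangle=\sum_{d(\x,\y)=k}\Gamma_{\x\y}$, while the right side is $\tfrac{2^n}{K}\gamma^{0,0}_{k,0}x^{0,0}_{k,0}=\tfrac{2^n}{K}\lambda^{0,0}_{k,0}=\tfrac{2^n}{K}\sum_{|s(\x)|=k}\Gamma_{\x 0}$. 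So (v) is the identity $\sum_{d(\x,\y)=k}\Gamma_{\x\y}=\tfrac{2^n}{K}\sum_{|s(\x)|=k}\Gamma_{\x 0}$, which follows from Eq.~\eqref{eq:av_0b}, $\sum_{v}\sigma_{v}(\Gamma)_{\x 0}=\tfrac{2^n}{K}\Gamma_{\x 0}$, by summing over all $\x$ with $|s(\x)|=k$ and turning the sum over $v$ into a translation of the index pair; and Eq.~\eqref{eq:av_0b} itself follows from the anticommutation, structure and projector row-sum constraints of Eq.~\eqref{eq:sdpgamma} (in $\sum_{w}\Gamma_{\x+w,w}$ the anticommuting terms vanish, the structure constraint identifies $\Gamma_{\x+w,w}$ with $\Gamma_{\x w}$ for commuting pairs, and $\sum_{w}\Gamma_{\x w}=\tfrac{2^n}{K}\Gamma_{\x\x}$). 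The main obstacle I anticipate is the combinatorics in (iii): isolating exactly which permutations of the weight triple the structure constraint produces and checking that $\gamma^{t,p}_{i,j}$ respects precisely those; the remaining items are rearrangements of sums of entries of $\Gamma$.
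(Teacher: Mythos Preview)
Your proof is correct and tracks the paper's argument closely for (i)--(iv); in particular, your coordinate-wise bookkeeping $n_A,n_B,n_C,n_{ABC}$ for (iii) makes explicit the symmetry of $\gamma^{t,p}_{i,j}$ that the paper simply asserts. For (v) you take a slightly different but equivalent route: the paper first reads the row-sum constraint $\sum_y\Gamma_{xy}=\tfrac{2^n}{K}\Gamma_{xx}$ as the identity $\sum_{i=k}\gamma^{t,p}_{i,j}x^{t,p}_{i,j}=\tfrac{2^n}{K}\gamma^{0,0}_{k,0}x^{0,0}_{k,0}$ and then applies (iii) to permute this into the $i+j-t-p=k$ form, whereas you recognise the left side of (v) as $\langle\Gamma,N_k\rangle$ and establish it directly via the translation identity $\sum_w\Gamma_{x+w,w}=\tfrac{2^n}{K}\Gamma_{x0}$, which (as the paper itself remarks after Eq.~\eqref{eq:av_0b}) follows from the structure and projector constraints alone. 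Your route has the small advantage that (v) then does not depend on (iii).
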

\begin{proof}
Recall from
Eq.~\eqref{eq:lambda_anycode} that
\begin{align}\label{eq:x_anycode}
	\gamma^{t,p}_{i,j}x^{t,p}_{i,j}= \!\!\!\!\!\!\!\!\!\!\!\!
	\sum\limits_{\substack{x,y \in \by{4}^n \\ |s(\x)| = i
			\\ |s(\y)| = j
			\\ \lvert s(\x)\cap s(\x)\rvert = t
			\\ \lvert\{m \,\mid\, \x_{m}=\y_{m}\neq0\}\rvert = p \\}} \!\!\!\!\!\! \Gamma_{\x\y}  \,.
\end{align}
\noindent
(i) From $\Gamma_{00}=1$, $(M_{0,0}^{0,0})_{00} = 1$ and is zero otherwise, and from $\gamma^{0,0}_{0,0}=1$ it follows that $x^{0,0}_{0,0}=1$.

\noindent
(ii) Remark~\ref{rmk:interp} shows that, given $E_\x,E^\dagger_\y\in\EE_n$, the value $t-p$ counts the number of places in which
a nontrivial Pauli element in $E_\x$ does not cancel a nontrivial Pauli element in $E^\dagger_\y$.
As a consequence,
if $t-p$ is even, then $E_\x$ and $E^\dagger_\y$ commute;
if $t-p$ is odd, then $E_\x$ and $E^\dagger_\y$ anticommute.
Since $\Gamma_{\x\y}=0$ if $E_\x E^\dagger_\y+E^\dagger_\y E_\x=0$,
Eq.~\eqref{eq:x_anycode} then leads to $x^{t,p}_{i,j}=0$ if $t-p$ is odd. This proves condition (ii).

\noindent
(iii)
Eq.~\eqref{eq:structure_constraint} shows that
\begin{align}\label{eq:condition}
	\Gamma_{\x\y}=\Gamma_{\x'\y'}\,, \quad \quad\quad  &\text{if} \quad\quad  E_\x E^\dagger_\y  - E^\dagger_\y E_\x = 0 \quad \text{and}
	\nn\\&
	(E^\dagger_\x,E_\y,\omega_{\x\y}E_\x E^\dagger_\y)\quad \text{a permutation of} \quad(E^\dagger_{\x'},E_{\y'},\omega_{\x'\y'}E_{\x'}E^\dagger_{\y'})\,,
\end{align}
Recall from Remark~\ref{rmk:interp} that $\wt(E_{\x}E^\dagger_{\y})=i+j-t-p$ and note that $t-p=t'-p'$ for any permutation from Eq.~\eqref{eq:condition}.
Therefore if $t-p=t'-p'$ is even and $(i,j,i+j-t-p)$ is a permutation of $(i',j',i'+j'-t'-p')$, then
\begin{align}
\gamma^{t,p}_{i,j}x^{t,p}_{i,j}\quad=
\sum\limits_{\substack{ \x,\y \in \by{4}^n \\ |s(\x)| = i
		\\ |s(\y)| = j
		\\ \lvert s(\x)\cap s(\x)\rvert = t
		\\ \lvert\{m \,\mid\, \x_{m}=\y_{m}\neq0\}\rvert = p \\}} \!\!\!\!\!\! \Gamma_{\x\y} = \sum\limits_{\substack{\x',\y' \in \by{4}^n \\ |s(\x')| = i'
		\\ |s(\y')| = j'
		\\ \lvert s(\x')\cap s(\x')\rvert = t'
		\\ \lvert\{m \,\mid\, \x'_{m}=\y'_{m}\neq0\}\rvert = p' \\}} \!\!\!\!\!\! \Gamma_{\x'\y'}
		\quad =\quad  \gamma^{t',p'}_{i',j'}x^{t',p'}_{i',j'}\,.
\end{align}
Furthermore, one can check [Eq.~\eqref{eq:gamma}]
that $\gamma^{t,p}_{i,j}=\gamma^{t',p'}_{i',j'}$ which leads to condition (iii).

For the last two conditions we derive the projector constraints in Eq.~\eqref{eq:projconst_lvl} in terms of $x^{t,p}_{i,j}$ using that $\Gamma_{\x0}=\Gamma_{\x\x}$:

\noindent
(iv) Recall that $\sum_{\x \in \by{4}^n}\Gamma_{\x0}=\frac{2^n}{K}$ as shown in Eq.~\eqref{eq:projconst_lvl}. With this, Eq.~\eqref{eq:x_anycode}
leads to condition (iv).

\noindent
(v) Here we use the constraint $\sum_{\x \in \by{4}^n}\Gamma_{\x\y}=\frac{2^n}{K}\Gamma_{\x0}$ [Eq.~\eqref{eq:projconst_lvl}] in combination of conditions (ii)-(iii).
First, we write such constraint in terms of $x^{t,p}_{i,j}$ by using Eq.~\eqref{eq:condition}. This leads to
\begin{align}\label{eq:proofxijtp}
	\sum_{\substack{(i,j,t,p)\in \II(4,n) \\ i=k}} \gamma^{t,p}_{i,j} x^{t,p}_{i,j}=\frac{2^n}{K}\gamma^{0,0}_{k,0}x^{0,0}_{k,0}\,.
\end{align}
Condition (ii)-(iii), show that  $x^{t,p}_{i,j} = x^{t',p'}_{i',j'}$
if  $t-p=t'-p'$, $i=i'+j'-t'-p'$, and $j=j'$.
For this case also $\gamma^{t,p}_{i,j}  = \gamma^{t',p'}_{i',j'}$
by Eq.~\eqref{eq:gamma} and thus,
$\gamma^{t,p}_{i,j} x^{t,p}_{i,j} = \gamma^{t',p'}_{i',j'} x^{t',p'}_{i',j'}$. This allows us to write Eq.~\eqref{eq:proofxijtp} as
\begin{align}
	\sum_{\substack{(i,j,t,p)\in \II(4,n) \\ i=k}} \gamma^{t,p}_{i,j} x^{t,p}_{i,j} = \sum_{\substack{(i',j',t',p')\in \II(4,n) \\ i'+j'-t'-p'=k}} \gamma^{t',p'}_{i',j'} x^{t',p'}_{i',j'} =\frac{2^n}{K}\gamma^{0,0}_{k,0}x^{0,0}_{k,0} \,,
\end{align}
leading to condition (v).
This ends the proof.
\end{proof}

The $x^{t,p}_{i,j}$ is further constrained by the Knill-Laflamme conditions:.
\begin{proposition}\label{prop:kl_cond_sdp}
The Knill-Laflamme conditions as stated in Eq.~\eqref{eq:klred} imply that
	\begin{align}\label{eq:klred_xijtp}
\frac{K}{2^n}\sum^n_{i=0} K_j(i;n) \gamma^{0,0}_{i,0} x^{0,0}_{i,0}\quad \geq \quad \gamma^{0,0}_{j,0} x^{0,0}_{j,0} \quad  \text{with equality for} \quad  0<j < \delta \,,
	\end{align}
According to Eq.~\eqref{eq:extra_zeros}, pure codes additionally satisfy
	 \begin{align}\label{eq:klred_pure_xijtp}
	 	x^{t,p}_{i,j}=0 \quad \text{if} \quad \{i,j,i+j-t-p\} \cap \{1,\cdots, \delta-1\} \neq \emptyset\,,
	 \end{align}
which is the condition from Ref.~\cite[Eq.~(49)(iv)]{GIJSWIJT20061719}.
\end{proposition}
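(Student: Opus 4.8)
The plan is to read off both claims from the dictionary
$\gamma^{t,p}_{i,j}\,x^{t,p}_{i,j}=\sum_{\x,\y}\Gamma_{\x\y}$
established in the proof of Proposition~\ref{prop:Terwilliger_extra_constraints} (Eq.~\eqref{eq:x_anycode}, equivalently Eq.~\eqref{eq:lambda_anycode}), where the sum ranges over pairs $\x,\y\in\by{4}^n$ with $|s(\x)|=i$, $|s(\y)|=j$, $|s(\x)\cap s(\y)|=t$ and $|\{m:\x_m=\y_m\neq0\}|=p$. So the task is purely one of translating the constraints on $\Gamma$ recorded in Eq.~\eqref{eq:klred} and Eq.~\eqref{eq:extra_zeros} into constraints on the coefficients $x^{t,p}_{i,j}$.

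First I would specialize the dictionary to $j=0$. The unique $\y\in\by{4}^n$ with $|s(\y)|=0$ is $\y=0^n$, and for that choice the conditions $|s(\x)\cap s(\y)|=t$ and $|\{m:\x_m=\y_m\neq0\}|=p$ force $t=p=0$; hence
\[
\gamma^{0,0}_{i,0}\,x^{0,0}_{i,0}=\sum_{\substack{\x\in\by{4}^n\\|s(\x)|=i}}\Gamma_{\x0}=\sum_{\substack{E_\aaa\in\EE_n\\\wt(E_\aaa)=i}}\Gamma_{\aaa0}\,,
\]
which is exactly the (approximate) enumerator $A_i$. Substituting this identity, together with the analogous one for $j$ and the symmetry $\Gamma_{0\bbb}=\Gamma_{\bbb0}$ ($\Gamma$ being real by the reduction preceding Eq.~\eqref{eq:sim_anti_comm}), into the Knill--Laflamme inequality Eq.~\eqref{eq:klred} turns its left side $\tfrac{K}{2^n}\sum_i K_j(i;n)\sum_{\wt(E_\aaa)=i}\Gamma_{\aaa0}$ into $\tfrac{K}{2^n}\sum_i K_j(i;n)\gamma^{0,0}_{i,0}x^{0,0}_{i,0}$ and its right side $\sum_{\wt(E_\bbb)=j}\Gamma_{0\bbb}$ into $\gamma^{0,0}_{j,0}x^{0,0}_{j,0}$, which is precisely Eq.~\eqref{eq:klred_xijtp}. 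The only bookkeeping point is the equality range: Eq.~\eqref{eq:klred} records equality for $1<j<\delta$, whereas the boundary case $j=0$ is, as noted there, the trace constraint $\sum_\aaa\Gamma_{\aaa\aaa}=2^n/K$, that is, condition~(iv) of Proposition~\ref{prop:Terwilliger_extra_constraints}; together these yield equality on the full range $0<j<\delta$ as stated.

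For the pure-code assertion I would argue term by term in the same sum. Fix $(i,j,t,p)\in\II(4,n)$. By Remark~\ref{rmk:interp}, every pair $\x,\y$ contributing to $\gamma^{t,p}_{i,j}x^{t,p}_{i,j}$ satisfies $\wt(E_\x)=i$, $\wt(E_\y)=j$ and $\wt(E^\dagger_\x E_\y)=i+j-t-p$ (using that the $n$-qubit Pauli operators are Hermitian, so $E^\dagger_\x E_\y=E_\x E^\dagger_\y$). Hence, if $\{i,j,i+j-t-p\}\cap\{1,\dots,\delta-1\}\neq\emptyset$, then Eq.~\eqref{eq:extra_zeros} forces $\Gamma_{\x\y}=0$ for every such pair, so $\gamma^{t,p}_{i,j}x^{t,p}_{i,j}=0$; since $\gamma^{t,p}_{i,j}>0$ on $\II(4,n)$ by the explicit formula Eq.~\eqref{eq:gamma} (for $q=4$ the factors $3^{i+j-t}$, $2^{t-p}$ and the multinomial coefficient are all positive), we conclude $x^{t,p}_{i,j}=0$, which is Eq.~\eqref{eq:klred_pure_xijtp} and coincides with Ref.~\cite[Eq.~(49)(iv)]{GIJSWIJT20061719}.

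There is no genuine obstacle here; the proof is a change-of-variables exercise. The points requiring care are: identifying the weight of the third factor $E_\x E^\dagger_\y$ correctly as $i+j-t-p$ via Remark~\ref{rmk:interp}, so that Eq.~\eqref{eq:extra_zeros} may be invoked for all three of $i$, $j$ and $i+j-t-p$; verifying $\gamma^{t,p}_{i,j}\neq0$ on $\II(4,n)$ so that dividing through is legitimate; and reconciling the stated equality range $0<j<\delta$ with the $1<j<\delta$ of Eq.~\eqref{eq:klred} by appealing to condition~(iv).
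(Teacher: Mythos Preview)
Your proposal is correct and follows essentially the same approach as the paper: both use the dictionary $\gamma^{t,p}_{i,j}x^{t,p}_{i,j}=\sum_{\x,\y}\Gamma_{\x\y}$ from Eq.~\eqref{eq:x_anycode} to translate Eq.~\eqref{eq:klred} and Eq.~\eqref{eq:extra_zeros} into constraints on the $x^{t,p}_{i,j}$. You are in fact more thorough than the paper, supplying the justification that $\gamma^{t,p}_{i,j}>0$ on $\II(4,n)$, the weight identification via Remark~\ref{rmk:interp}, and the discussion of the equality range; the paper's proof omits these details but follows the identical logic.
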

\begin{proof}
Eq.~\eqref{eq:x_anycode} shows that $\gamma^{0,0}_{j,0} x^{0,0}_{j,0} =\sum_{\x \in \by{4}^n,\, \wt(E_\x )=j} \Gamma_{\x0}$. This allows to write Eq.~\eqref{eq:klred} as Eq.~\eqref{eq:klred_xijtp}. By Eq.~\eqref{eq:x_anycode}, the condition for pure codes stated in Eq.~\eqref{eq:extra_zeros} can be written as $x^{t,p}_{i,j}=0$ for $0<i<\delta$ or $0<j<\delta$ or $0<i+j-t-p<\delta$. But this is nothing else than Eq.~\eqref{eq:klred_pure_xijtp}. This ends the proof.
\end{proof}
Note that, the constraint $\sum^{n}_{i=0}\gamma^{0,0}_{i,0}x^{0,0}_{i,0} =\frac{2^n}{K}$ is equivalent to $\frac{K}{2^n}\sum^n_{i=0} K_0(i;n) \gamma^{0,0}_{i,0} x^{0,0}_{i,0}=\gamma^{0,0}_{0,0} x^{0,0}_{0,0}$, since $K_0(i;n)=1$ for all $i$ and $\gamma^{0,0}_{0,0} x^{0,0}_{0,0}=1$.
It is thus possible to include either one or the other, according to convenience.

\begin{remark}\label{rmk:stab_codes_constraint}
	For stabilizer codes, one also has the condition in Ref.~\cite[Eq.~(49)(ii)]{GIJSWIJT20061719} which holds for classical codes,
	\begin{equation}
		0 \leq x_{i,j}^{t,p} \leq x_{i,0}^{0,0}\,.
	\end{equation}
\end{remark}
\begin{proof}

	Eq.~\eqref{eq:stab_code_Gamma} shows that stabilizer codes satify $ 0\leq\Gamma_{xy} \leq \Gamma_{xx}$.
	Recall that $\tilde\Gamma \propto \sum_{\pi\in \Aut(4,n)} \pi(\Gamma)$ [see Eq.~\eqref{eq:gammatilde_any}].
	Thus the same relation holds after averaging and $0 \leq \tilde\Gamma_{xy}\leq \tilde\Gamma_{xx}$
	since $\pi$ permutes rows and columns of $\Gamma$.

	Because $\tilde \Gamma$ is spanned by $M_{i,j}^{t,p}$, which have entries in $\{0,1\}$,
	is follows that $0\leq x^{t,p}_{i,j} \leq {x^{i,i}_{i,i}}$.
	According to condition (iii) from Proposition~\ref{prop:Terwilliger_extra_constraints}, $x^{i,i}_{i,i}=x^{0,0}_{i,0}$ and thus $0 \leq \x^{t,p}_{i,j} \leq x^{0,0}_{i,0}$.
\end{proof}

\subsection{Symmetry-reduced SDP bound}\label{subsec:sym_red_sdp}

We now state the symmetry reduced version of the main semidefinite programming bound in Eq.~\eqref{eq:sdpgamma}.
It includes all constraints on $x^{t,p}_{i,j}$
derived in the previous sections [Propositions~\ref{prop:pos_cond_av} to \ref{prop:kl_cond_sdp}].

\begin{theorem}\label{prop:sym_red_sdp}
If a $(\!(n,K,d)\!)_2$ quantum code exists, then the following semidefinite program is feasible:
\begin{align}\label{eq:sdpx}
	\textnormal{find} 		\quad & x^{t,p}_{i,j}  \nn \\
	\textnormal{subject to} 	\quad
	&  x^{0,0}_{0,0}=1, \nn \\ &x^{t,p}_{i,j}=0 \quad\quad\quad \text{if}\quad  t-p \quad  \text{is odd}\,,\nn \\
	& x^{t,p}_{i,j}=x^{t',p'}_{i',j'}\,\,\,\, \quad \text{if} \quad  t-p=t'-p' \quad \text{is even} \quad \text{and} \nn \\
	&\quad\quad\quad\quad\quad\quad\,\, (i,j,i+j-t-p) \quad\text{a permutation of} \quad(i',j',i'+j'-t'-p')\,, \nn \\
	& \sum^n_{i=0} \gamma^{0,0}_{i,0} x^{0,0}_{i,0}=\frac{2^n}{K}, \quad \sum_{\substack{(i,j,t,p)\in \II(4,n) \\ k=i+j-t-p}}  \gamma^{t,p}_{i,j}x^{t,p}_{i,j}=\frac{2^n}{K}  \gamma^{0,0}_{k,0}x^{0,0}_{k,0}  \,, \nn\\
	& K2^{-n}\sum^n_{i=0} K_j(i;n) \gamma^{0,0}_{i,0} x^{0,0}_{i,0}\quad \geq \quad \gamma^{0,0}_{j,0} x^{0,0}_{j,0} 
	\quad \text{with equality for}  \quad 0 <j < \delta  \,, \nn \\
	&\bigoplus_{\substack{a,k \in \N_0\\ 0\leq a\leq k\leq n+a-k}} \left(\sum\limits_{\substack{t,p\in \N_0 \\ 0 \leq p \leq t \leq i,j \\ i+j\leq t+n}}\alpha(i,j,t,p,a,k)x_{i,j}^{t,p}\right)_{i,j=k}^{n+a-k}   \succeq 0\,,\nn \\
	&\bigoplus_{\substack{a,k \in \N_0\\ 0\leq a\leq k\leq n+a-k}} \left(\sum\limits_{\substack{t,p\in \N_0 \\ 0 \leq p \leq t \leq i,j \\ i+j\leq t+n}}\alpha(i,j,t,p,a,k)(x_{i+j-t-p,0}^{0,0}-x_{i,j}^{t,p})\right)_{i,j=k}^{n+a-k} \succeq 0\,,
\end{align}
where $\alpha(i,j,t,p,a,k)$ is given by Eq.~\eqref{eq:alpha}.
\end{theorem}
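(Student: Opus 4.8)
The plan is to assemble the symmetry-reduced program in Eq.~\eqref{eq:sdpx} directly from the earlier results, checking that every listed constraint is a valid consequence of the main relaxation Eq.~\eqref{eq:sdpgamma} after the averaging $\Gamma \mapsto \tilde\Gamma$ over $\Aut_0(4,n)$. First I would recall that if a $(\!(n,K,\delta)\!)_2$ code exists, then by the Corollary following Eq.~\eqref{eq:sdpgamma} the matrix $\Gamma$ from Eq.~\eqref{eq:relax} is feasible for the main SDP~\eqref{eq:sdpgamma}; moreover, because all objective/constraint data depend only on $\Re(\Gamma)$, we may take $\Gamma$ real. Applying Proposition~\ref{prop:terwilliger_main_reduction}, $\tilde\Gamma$ and $\tilde\Gamma^c$ expand in the Terwilliger basis $\{M^{t,p}_{i,j}\}$ with coefficients $x^{t,p}_{i,j} = \lambda^{t,p}_{i,j}/\gamma^{t,p}_{i,j}$ given by Eq.~\eqref{eq:lambda_anycode}.

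Next I would go constraint by constraint. The normalization $x^{0,0}_{0,0}=1$, the parity vanishing $x^{t,p}_{i,j}=0$ for $t-p$ odd, the permutation symmetry of the $x^{t,p}_{i,j}$, and the two projector-type linear constraints (iv) and (v) are exactly conditions (i)--(v) of Proposition~\ref{prop:Terwilliger_extra_constraints}, so these transfer verbatim. The Knill--Laflamme inequality/equalities in the fourth line of Eq.~\eqref{eq:sdpx} are precisely Eq.~\eqref{eq:klred_xijtp} of Proposition~\ref{prop:kl_cond_sdp}, obtained by rewriting Eq.~\eqref{eq:klred} through the identity $\gamma^{0,0}_{j,0}x^{0,0}_{j,0} = \sum_{\x:\wt(E_\x)=j}\Gamma_{\x 0}$. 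For the two block-diagonal positive-semidefiniteness constraints, I would invoke Proposition~\ref{prop:pos_cond_av}, which guarantees $\tilde\Gamma\succeq 0$ and $\tilde\Gamma^c\succeq 0$ for any quantum code, and then apply the block-diagonalization of the Terwilliger algebra recorded in Eqs.~\eqref{eq:psd_tildeR} and~\eqref{eq:psd_tildeR_c} from Ref.~\cite{GIJSWIJT20061719}; the expansion~\eqref{eq:gammatilde_complement_anycode} shows the coefficients entering the second block are $x^{0,0}_{i+j-t-p,0}-x^{t,p}_{i,j}$, matching the last line of Eq.~\eqref{eq:sdpx} (the positive prefactor $2^n/K$ in Eq.~\eqref{eq:gammatilde_complement_anycode} does not affect positivity and can be dropped).

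The only genuinely nontrivial ingredient — and hence the step I would flag as the main obstacle — is the positive semidefiniteness of $\tilde\Gamma^c$. Unlike the classical situation, $\tilde\Gamma^c$ is \emph{not} of the form $\tfrac{1}{4^n-|Q|}\sum_{v\notin Q}\Gamma^v$ for any codeword set $Q$, so its positivity is not automatic from being an average of psd matrices; this is exactly why Proposition~\ref{prop:pos_cond_av} needed the Schur-product argument with the auxiliary matrices $M^\pm$ and $L$. Since that proposition is already established earlier, here I would simply cite it, but I would make explicit in the write-up that this is where the ``code comes from a state'' hypothesis $\varrho = \Pi/K$ is used (via Eq.~\eqref{eq:av_0b}/\eqref{eq:proof2}), so that the reduction is sound precisely for the feasibility question at hand. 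A minor point to handle cleanly is that Propositions~\ref{prop:terwilliger_main_reduction}--\ref{prop:kl_cond_sdp} are stated for general $q$ or for $q=4$ interchangeably; I would fix $q=4$ throughout, consistent with the $\GF(4)$ identification of Section~\ref{ref:sym_red}.

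In summary, the proof is: (1) existence of the code $\Rightarrow$ $\Gamma$ feasible for Eq.~\eqref{eq:sdpgamma}, taken real; (2) form $\tilde\Gamma$ and $\tilde\Gamma^c$ and expand in the Terwilliger basis via Proposition~\ref{prop:terwilliger_main_reduction}; (3) collect the linear constraints from Propositions~\ref{prop:Terwilliger_extra_constraints} and~\ref{prop:kl_cond_sdp}; (4) collect the two psd block constraints from Proposition~\ref{prop:pos_cond_av} together with the block-diagonalization~\eqref{eq:psd_tildeR}--\eqref{eq:psd_tildeR_c}. Then the $x^{t,p}_{i,j}$ so defined furnish a feasible point of Eq.~\eqref{eq:sdpx}, which is the claim; contrapositively, infeasibility of Eq.~\eqref{eq:sdpx} certifies non-existence of the code.
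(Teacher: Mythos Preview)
Your proposal is correct and matches the paper's approach exactly: the paper does not give a separate proof of Theorem~\ref{prop:sym_red_sdp} but simply states it as the assembly of Propositions~\ref{prop:pos_cond_av}--\ref{prop:kl_cond_sdp}, precisely as you outline. Your identification of $\tilde\Gamma^c\succeq 0$ (Proposition~\ref{prop:pos_cond_av}) as the only nontrivial ingredient, and the reason why (it is not an average of psd matrices over a codeword set), is also on point.
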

As with the previous results, Theorem~\ref{prop:sym_red_sdp}
applies equally to stabilizer and non-stabilizer codes.
Furthermore, it is at least as strong as the linear programming bounds from Eq.~\eqref{eq:lp} without the shadow inequalities $S_j(\rho) \geq 0$.
Note that $\gamma_{j,0}^{0,0} x_{j,0}^{0,0}$ approximates $A_j(\varrho)$.
This allows us to add the rest of linear programming constraints as done in Eq.~\eqref{eq:sdpgamma}.
The SDP above can then be further restricted by adding:
\begin{enumerate}
	\item[(i)] The shadow inequalities [Eq.~\eqref{eq:shadowexp}] leads to
	\begin{equation}
2^{-n}\sum^n_{i=0} (-1)^i K_j(i;n) \gamma_{i,0}^{0,0} x_{i,0}^{0,0} \quad \geq \quad 0\,,
	\end{equation}
	with equality for $K=1$ and $n-j$ odd~\cite[Theorem 12]{796376}.

	\item[(ii)] For stabilizer codes, $0 \leq x_{i,j}^{t,p} \leq x_{i,0}^{0,0}$ [see Remark~\ref{rmk:stab_codes_constraint}] and
	\begin{align}
		\sum_{j\geq 0} \gamma_{2j,0}^{0,0} x_{2j,0}^{0,0}=
		\begin{cases}
			2^{n-\log_2(K)-1} \quad \text{(Type I)\,,} \\  2^{n-\log_2(K)} \quad \quad \!\text{(Type II)\,.}
		\end{cases}
	\end{align}
	\item [(iii)] For pure codes [see Eq.~\eqref {eq:klred_pure_xijtp}],
	\begin{align}
		x^{t,p}_{i,j}=0 \quad \text{if} \quad \{i,j,i+j-t-p\} \cap \{1,\cdots, \delta-1\}\neq \emptyset\,,
	\end{align}

\end{enumerate}

Therefore, SDP~\eqref{eq:sdpx}
is at least as strong as the linear programming bounds in Eq.~\eqref{eq:lp_Delsarte}.

We conclude this section by pointing out that the main semidefinite bound in Eq.~\eqref{eq:sdpgamma} with scaling of $O(4^n)$ is now symmetry reduced to Eq.~\eqref{eq:sdpx} with scaling of $O(n^4)$.
This fact allows us to handle codes with larger $n$.

\section{Non-existence of quantum codes}
\label{sec:appli_codes}
\subsection{The Lov\'asz bound refutes a $(\!(7,1,4)\!)_2$ code}
\label{sec:7qb}

Using the feasibility version of the symmetry reduced  Lov\'asz theta number SDP for self-dual quantum codes [Eq.~\eqref{eq:sdpgamma_pure}],
we prove the non-existence of a $(\!(7,1,4)\!)_2$ code.
The feasibility version includes as an extra condition the constraint $\sum_{x \in \by{4}^n} \Gamma_{xx} = 2^n$:
\begin{align}\label{eq:sdpgamma_pure_feas}
	\text{find}		\quad & \Gamma \nn \\
	\text{subject to} 	\quad & \Gamma_{00}=1 \,, \nn \\
	& \Gamma_{\x0}=\Gamma_{\x\x} \,, \nn \\
	& \Gamma \succeq 0 \,, \nn \\
	& \Gamma_{\x\y}=0 \quad  \text{if} \quad \x \sim \y 
	\,, \nn \\
	& \tr(\Gamma)=2^n\,.
\end{align}
Following Section~\ref{subsec:sym_red_sdp} we symmetry reduce the SDP in Eq.~\eqref{eq:sdpgamma_pure_feas}
as done for the main semidefinite programming bound [Eq.~\eqref{eq:sdpgamma}].
First apply conditions (i) and (iii) from Proposition~\eqref{prop:Terwilliger_extra_constraints}
to map $\Gamma_{00}=1$ and $\Gamma_{\x0}=\Gamma_{\x\x}$ to $x^{0,0}_{i,0}$ and $x^{0,0}_{i,0}=x^{i,i}_{i,i}$, respectively.
Then consider the block-diagonalization of  $\tilde\Gamma \succeq 0$ as done with $\tilde R$ [Eq.~\eqref{eq:psd_tildeR}].
Recall that $\Gamma_{xy}=0$ if $x\sim y$.
This implies that $x^{t,p}_{i,j}=0$
if $t - p$ is odd or $\{i,j,i+j-t-p\} \cap \{1,\cdots, \delta-1\} \neq \emptyset$.
Here we have used the condition for pure codes [Eq.~\eqref{eq:klred_pure_xijtp}]
and anti-commutator condition (ii) of Proposition~\eqref{prop:Terwilliger_extra_constraints}.
Finally, $\tr(\Gamma)=1$ is equivalent to $\sum_{x \in \by{4}^n} \Gamma_{x0} = 2^n$
due to $\Gamma_{\x0}=\Gamma_{\x\x}$.
This implies $\sum^{n}_{i=0}\x^{0,0}_{i,0} \gamma^{0,0}_{i,0}=\frac{2^n}{K}$
by condition (iv) in Proposition~\ref{prop:Terwilliger_extra_constraints}.
Thus the symmetry reduced version of Eq.~\eqref{eq:sdpgamma_pure_feas} reads:
\begin{align}\label{eq:red_lovasz}
	\text{find} 		\quad & x^{t,p}_{i,j}
	\nn \\
	\text{subject to} 	\quad
	& x^{0,0}_{0,0}=1\,, \nn \\
	& x^{0,0}_{i,0}= x^{i,i}_{i,i}\,, \nn \\
	& \bigoplus_{\substack{a,k \in \N_0\\ 0\leq a\leq k\leq n+a-k}}\left(\sum\limits_{\substack{t,p\in \N_0 \\ 0 \leq p \leq t \leq i,j \\ i+j\leq t+n}}\alpha(i,j,t,p,a,k)x_{i,j}^{t,p}\right)_{i,j=k}^{n+a-k} \succeq 0 \nn \\
	 &  x^{t,p}_{i,j}=0 \quad \text{if}\quad  t-p \quad  \text{is odd} \nn \\
	 & \hspace{1.75cm} \quad \text{or} \quad \{i,j,i+j-t-p\} \cap \{1,\cdots, \delta-1\} \neq \emptyset \,, \nn \\
	&\sum^n_{i=0} \gamma^{0,0}_{i,0} x^{0,0}_{i,0}=2^n \,,
\end{align}
where $\alpha(i,j,t,p,a,k)$ is given by Eq.~\eqref{eq:alpha}.
\begin{remark}
 Here we only took $\tilde \Gamma\succeq 0$ into account, leaving out the constraint $\tilde \Gamma^c\succeq 0$.
\end{remark}

Weak duality allows to prove infeasibility of a primal problem from its dual.
Appendix~\ref{app:dual}, Proposition~\ref{prop:lovasz_dual} shows that the program dual to that of Eq.~\eqref{eq:weak} is:
\begin{align}\label{eq:dual_sol}
	\alpha \quad = \quad \max_{Y^{(a,k)}\,,\, {w}} \quad
	&   (2^n-1)w - y_{0,0}^{0,0} \nn \\
	\text{subject to}  \quad & Y^{(a,k)} \succeq 0\,, \nn \\
	& y^{t,p}_{i,j} = \frac{1}{\gamma^{t,p}_{i,j}}\sum^{\min(i,j)}_{k=0} \sum^k_{a=\max(i,j)+k-n} \alpha(i,j,t,p,a,k) Y^{(a,k)}_{i-k,j-k}\,, \nn \\
	& y_{i,i}^{i,i} +  w + 2 y_{i,0}^{0,0} =0
	\quad\quad\quad\text{if} \quad \delta \leq i \leq n \,, \nn \\
	&  y^{t,p}_{i,j}=0
	\hspace{8em}  \text{if} \quad i,j\neq 0\,, \quad t-p \quad  \text{is even } \quad  \nn \\
	&\hspace{11.6em}\quad \text{and} \quad i + j - t - p \geq \delta \,.
\end{align}
Here the dual variables are real matrices $Y^{(a,k)}$ of size $n+a-2k$ with $ 0\leq a\leq k\leq n+a-k$.

If $\alpha>0$, then there do not exist a set of
$x^{t,p}_{i,j}$ satisfying the constraints from Eq.~\eqref{eq:red_lovasz},
and thus no corresponding code exists.
This is the case for the $(\!(7,1,4)\!)_2$ code:
\begin{observation}\label{prop:infeas_cert}
	The matrices $Y^{(a,k)}$ written in Appendix~\ref{app:inf} are a solution for Eq.~\eqref{eq:dual_sol} for $n=7$ and $\delta=4$ with $\alpha =0.58>0$,  maximum constraint violation of
	$\mathcal{O}(10^{-16})$, and minimum eigenvalue $\lambda_{\min}\approx 0.013$.
	This provides an infeasibility certificate for the primal program and shows that a $(\!(7,1,4)\!)_2$ quantum code does not exist.
\end{observation}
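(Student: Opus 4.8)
The plan is to treat this as a verification statement: I will confirm that the data written in Appendix~\ref{app:inf} is a genuine feasible point of the dual program~\eqref{eq:dual_sol} with strictly positive objective, and then run the standard weak-duality argument. First I would fix the implication chain that makes a dual certificate meaningful. If a $(\!(7,1,4)\!)_2$ code existed, then by Corollary~\ref{cor:lovasz} the matrix $\Gamma$ built from the corresponding state (with $K=1$) would be feasible for~\eqref{eq:sdpgamma_pure}, equivalently for the feasibility form~\eqref{eq:sdpgamma_pure_feas}; averaging $\Gamma$ over $\Aut_0(4,n)$ and block-diagonalising through the Terwilliger algebra — as carried out for the main bound in Proposition~\ref{prop:sym_red_sdp} and specialised to the Lov\'asz feasibility SDP in Section~\ref{subsec:sym_red_sdp} — preserves positive semidefiniteness and every listed linear relation, so the symmetry-reduced program~\eqref{eq:red_lovasz} with $n=7$, $\delta=4$ would also be feasible. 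It therefore suffices to show that~\eqref{eq:red_lovasz} is infeasible.

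To certify this I would invoke the feasibility-to-infeasibility mechanism recorded in the semidefinite programming preliminaries: for a feasibility program (zero cost matrix) any dual point with strictly positive objective violates weak duality~\eqref{eq:weak}, which rules out primal feasibility. The dual of~\eqref{eq:red_lovasz} is given in Appendix~\ref{app:dual}, Proposition~\ref{prop:lovasz_dual}, as the program~\eqref{eq:dual_sol} in the variables $Y^{(a,k)}$ and $w$. The verification then reduces to four finite checks on the matrices of Appendix~\ref{app:inf}: (i) form $y^{t,p}_{i,j} = (\gamma^{t,p}_{i,j})^{-1}\sum_{a,k}\alpha(i,j,t,p,a,k)\,Y^{(a,k)}_{i-k,j-k}$ and read off $w$; (ii) check $Y^{(a,k)}\succeq 0$ for each admissible $(a,k)$ with $0\le a\le k\le n+a-k$, where the reported $\lambda_{\min}\approx 0.013$ leaves margin; (iii) verify the equalities $y^{i,i}_{i,i}+w+2y^{0,0}_{i,0}=0$ for $4\le i\le 7$ and $y^{t,p}_{i,j}=0$ whenever $i,j\neq 0$, $t-p$ is even, and $i+j-t-p\ge 4$; and (iv) evaluate the objective $(2^7-1)w - y^{0,0}_{0,0}$ and confirm it equals $0.58>0$. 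Each of these is elementary arithmetic on fixed finite-size matrices.

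The only genuine obstacle is rigour: the certificate is numerical, with residual constraint violations of order $10^{-16}$, so as written it certifies infeasibility of a slightly perturbed dual program rather than the exact one. I would close this gap in one of two standard ways. Either one rounds the floating-point solution to an exact rational (or low-degree algebraic) point; this succeeds precisely because the strict positivity $\lambda_{\min}\approx 0.013$ and the strict objective gap $\alpha=0.58$ both dwarf the reported perturbations, so an exact dual-feasible point with positive objective survives the rounding. Or one proves an a priori stability estimate: the primal feasible set of~\eqref{eq:red_lovasz} is bounded — the equality $\sum_{i=0}^n \gamma^{0,0}_{i,0}x^{0,0}_{i,0}=2^n$ together with $\tilde\Gamma\succeq 0$ bounds every $x^{t,p}_{i,j}$ — so a dual point within $10^{-16}$ of the constraint set and with objective at least, say, $0.5$ still forces a strict violation of the perturbed weak-duality inequality. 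Once rigour is secured, weak duality shows that~\eqref{eq:red_lovasz}, hence~\eqref{eq:sdpgamma_pure_feas}, is infeasible, and therefore no $(\!(7,1,4)\!)_2$ quantum code exists.
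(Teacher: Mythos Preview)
Your proposal is correct and matches the paper's approach, which is simply to exhibit the certificate in Appendix~\ref{app:inf} and appeal to weak duality~\eqref{eq:weak}; there is no separate proof in the paper beyond the Observation itself and the explicit dual data. One clarification on your rigour discussion: the matrices in Appendix~\ref{app:inf} are already written with exact rational entries, so the rounding step you propose has effectively been carried out by the authors; the residual $\mathcal{O}(10^{-16})$ violation is presumably floating-point error in the authors' verification rather than in the certificate, and a fully rigorous proof would replace that numerical check by an exact-arithmetic evaluation of the $y^{t,p}_{i,j}$, the equality constraints, the eigenvalues (or principal minors) of each $Y^{(a,k)}$, and the objective.
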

Note that the dual variable $w$ is determined by the matrices $Y^{(a,k)}$ since $w=-y^{i,i}_{i,i}-y^{0,0}_{i,0}$ if $\delta \leq i\leq n$.
In particular, $w=-y^{n,n}_{n,n}-y^{0,0}_{n,0}$\,.

\subsection{SDP refutations for the $(\!(8,9,3)\!)_2$ and $(\!(10,5,4)\!)_2$ codes}
\label{sec:sym_red_numerics}

We first remark the following:
in principle, adding constraints in the primal program makes a semidefinite program stronger.
From a numerical perspective however this can be disadvantageous:
the more constraints appear in the primal program,
the larger is the span of the dual feasible region.
As a consequence, the dual program can numerically be more costly to solve.
For obtaining a nice infeasibility certificate it can thus help to reduce the number of constraints.

For refuting the $(\!(8,9,3)\!)_2$ and $(\!(10,5,4)\!)_2$ codes,
the following relaxation of the SDP in Eq.~\eqref{eq:sdpx} of Theorem~\ref{prop:sym_red_sdp} suffices:
\begin{align}\label{eq:sdpx_relax}
	\textnormal{find} 		\quad & x^{t,p}_{i,j}  \nn \\
	\textnormal{subject to} 	\quad
	&  x^{0,0}_{0,0}=1, \nn \\ &x^{t,p}_{i,j}=0 \quad\quad\quad \text{if}\quad  t-p \quad  \text{is odd}\,,\nn \\
	& x^{t,p}_{i,j}=x^{t',p'}_{i',j'}\,, \quad \text{if} \quad  t-p=t'-p' \quad \text{is even} \quad \text{and} \nn \\
	&\quad\quad\quad\quad\quad\quad\,\, (i,j,i+j-t-p) \quad\text{a permutation of} \quad(i',j',i'+j'-t'-p')\,, \nn \\
	& \sum^n_{i=0} \gamma^{0,0}_{i,0} x^{0,0}_{i,0}=\frac{2^n}{K}, \quad \sum_{\substack{(i,j,t,p)\in \II(4,n) \\ k=i+j-t-p}}  \gamma^{t,p}_{i,j} x^{t,p}_{i,j}=\frac{2^n}{K} \gamma^{0,0}_{k,0} x^{0,0}_{k,0}   \,, \nn\\
	& K2^{-n}\sum^n_{i=0} K_j(i,n) \gamma^{0,0}_{i,0} x^{0,0}_{i,0}\quad = \quad \gamma^{0,0}_{j,0} x^{0,0}_{j,0} \quad\quad \text{if} \quad 0<j<\delta \,, \nn \\
	&\bigoplus_{\substack{a,k \in \N_0\\ 0\leq a\leq k\leq n+a-k}} \left(\sum\limits_{\substack{t,p\in \N_0 \\ 0 \leq p \leq t \leq i,j \\ i+j\leq t+n}}\alpha(i,j,t,p,a,k)x_{i,j}^{t,p}\right)_{i,j=k}^{n+a-k}   \succeq 0\,,
\end{align}
where	$\alpha(i,j,t,p,a,k)$ is given by Eq.~\eqref{eq:alpha}.
The SDP of Eq.~\eqref{eq:sdpx_relax} is obtained from the
SDP in Eq.~\ref{prop:sym_red_sdp}
by removing the constraint corresponding to $\tilde \Gamma^c \succeq 0$ and
to only take the equality conditions
in $K2^{-n}\sum^n_{i=0} K_j(i,n) \gamma^{0,0}_{i,0} x^{0,0}_{i,0} \, \geq \,\gamma^{0,0}_{j,0} x^{0,0}_{j,0}$.

Appendix~\ref{app:dual}, Proposition~\ref{prop:dual_all} shows that the dual program of Eq.~\eqref{eq:sdpx_relax} is:
\begin{align}\label{eq:dual_solall}
	\alpha \quad = \quad \max_{Y^{(a,k)}, Q_i, C_i} \quad &   (KD_0-C_0) + \left(\frac{2^n}{K} -1\right)Q_0 - {y^{0,0}_{0,0}} \quad &  \nn \\
	\text{subject to}  \quad & Y^{(a,k)} \succeq0,\nn \\
	& y^{t,p}_{i,j}\ = \frac{1}{\gamma^{t,p}_{i,j}} \sum^{\min(i,j)}_{k=0} \sum^k_{a=\max(i,j)+k-n} \alpha(i,j,t,p,a,k) Y^{(a,k)}_{i-k,j-k}\,, \nn \\
	&D_i = 2^{-n} \sum^{\delta-1}_{j=0} K_j(i,n)  C_{j} \,, \nn \\
	&(KD_i-C_i)+ \left(\frac{2^n}{K}-2\right)Q_i - 2y^{0,0}_{i,0}
	-Q_0  -y^{i,i}_{i,i}=0
	\quad \text{if} \quad 0 <i <\delta \,,\nn \\
	& KD_i + \left(\frac{2^n}{K}-2\right)Q_i - 2y^{0,0}_{i,0}
	-Q_0  -y^{i,i}_{i,i}=0
	\quad\quad\quad\quad \text{if} \quad \delta \leq i \leq n \,, \nn \\
	& \sum_{\substack{ (i',j',t',p')\in \II(4,n) \\ (t-p)=(t'-p') \\ (i',j',i'+j'-t'-p') \, \text{a} \\
			\, \text{permutation of} \,  (i,j,i+j-t-p) \\ }} \!\!\!\!\!\!\!\!\!\!  
	\left(y^{t',p'}_{i',j'}+ Q_{i'+j'-t'-p'}\right) = 0 \quad  \hspace{3.3em} \text{if} \quad (t-p) \quad \text{even}  \nn \\ &  \hspace{6cm}  \quad \text{and} \quad i,j,i+j-t-p \neq 0 \,.
\end{align}
Here the dual variables are real matrices $Y^{(a,k)}$ of size $n+a-2k$ with $ 0\leq a\leq k\leq n+a-k$ and
$Q_i, C_i \in \R$ with $0\leq i\leq n$.

\begin{observation}\label{prop:infeas_cert2}
The SDP from Eq.~\eqref{eq:dual_solall} has solutions with $\alpha > 0$
for the parameters $(\!(8,9,3)\!)_2$ and $(\!(10,5,4)\!)_2$. In particular,
for the $(\!(8,9,3)\!)_2$, there is a dual solution with $\alpha \approx 0.989$, maximum constraint violation of $\mathcal{O}(10^{-13})$,
and minimum eigenvalue $\lambda_{\min} \approx 0.0010$.
For $(\!(10,5,4)\!)_2$, there is a dual solution with $\alpha\approx1$, maximum constraint violation of $\mathcal{O}(10^{-13})$ and minimum eigenvalue $\lambda_{\min}\approx 0.050$.
As a consequence, codes with parameters $(\!(8,9,3)\!)_2$ and $(\!(10,5,4)\!)_2$ do not exist. The infeasibility certificates can be found online at \url{https://github.com/ganglesmunne/SDP_bounds_for_quantum_codes} [Ref. \cite{Munné2024github}].
\end{observation}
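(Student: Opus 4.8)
The plan is to exhibit, for each of the two parameter sets, an explicit dual-feasible point of the semidefinite program in Eq.~\eqref{eq:dual_solall} whose objective value $\alpha$ is strictly positive. By weak duality this certifies that the primal relaxation Eq.~\eqref{eq:sdpx_relax} is infeasible, and since a $(\!(n,K,\delta)\!)_2$ code yields a feasible point of Eq.~\eqref{eq:sdpx_relax} (it is the symmetry reduction of Eq.~\eqref{eq:sdpgamma}, which in turn relaxes Eq.~\eqref{eq:find}; cf.\ Theorem~\ref{prop:sym_red_sdp} and ultimately Theorem~\ref{thm:code}), this rules out the codes.

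Concretely, the first step is to instantiate Eq.~\eqref{eq:sdpx_relax} for $(n,K,\delta)=(8,9,3)$ and $(10,5,4)$: enumerate the index set $\II(4,n)$, precompute the weights $\gamma^{t,p}_{i,j}$ of Eq.~\eqref{eq:gamma}, the block-diagonalization coefficients $\alpha(i,j,t,p,a,k)$ of Eq.~\eqref{eq:alpha}, and the quaternary Krawtchouk values $K_j(i;n)$. One then feeds the resulting (small, $O(n^4)$-sized) data to an interior-point SDP solver and extracts approximate dual variables: positive semidefinite blocks $Y^{(a,k)}$ of size $n+a-2k$ together with scalars $Q_i,C_i$, arranged so that the objective $(KD_0-C_0)+(2^n/K-1)Q_0-y^{0,0}_{0,0}$ comes out positive, numerically $\alpha\approx0.989$ for $(\!(8,9,3)\!)_2$ and $\alpha\approx1$ for $(\!(10,5,4)\!)_2$.

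The main obstacle is turning the numerical output into a rigorous certificate. A raw solver solution satisfies the linear equalities of Eq.~\eqref{eq:dual_solall} only up to rounding and has blocks $Y^{(a,k)}$ only numerically positive semidefinite. To make the argument robust I would (i) report the maximum residual over all equality constraints, which for the solutions we use is of order $\mathcal{O}(10^{-13})$, and (ii) report the smallest eigenvalue $\lambda_{\min}$ over all blocks $Y^{(a,k)}$, which is bounded away from zero ($\lambda_{\min}\approx10^{-3}$ for $(\!(8,9,3)\!)_2$, $\lambda_{\min}\approx5\cdot10^{-2}$ for $(\!(10,5,4)\!)_2$). Since $\alpha$ is of order $1$, far exceeding any slack the tiny residuals could contribute to the weak-duality inequality, a small perturbation of the dual point that exactly enforces the equalities still keeps the blocks positive semidefinite and $\alpha>0$. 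If a fully exact proof is desired, the alternative is to rationalize the solver output, snap each $Y^{(a,k)}$ to a nearby rational positive semidefinite matrix (verifying positivity via exact $LDL^{\mathsf T}$ factorization), and solve the residual linear system for $Q_i,C_i$ over $\Q$; I expect this to succeed, and it is the step most at risk if the optimal face is touched.

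Finally, with a dual-feasible $(Y^{(a,k)},Q_i,C_i)$ in hand one invokes weak duality exactly as in Eq.~\eqref{eq:weak}. The primal relaxation Eq.~\eqref{eq:sdpx_relax} is a feasibility program (objective $C=0$), whose only inhomogeneous datum is the normalization $x^{0,0}_{0,0}=1$; a dual objective $\alpha>0$ therefore contradicts the weak-duality bound $0=\tr(CX)\ge\sum_i y_ib_i$, so Eq.~\eqref{eq:sdpx_relax} has no solution. Combined with the first paragraph, this establishes that quantum codes with parameters $(\!(8,9,3)\!)_2$ and $(\!(10,5,4)\!)_2$ do not exist.
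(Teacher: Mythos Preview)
Your proposal is correct and follows essentially the same approach as the paper: solve the dual SDP numerically (the paper uses PICOS with CVXOPT), report $\alpha$, the maximum constraint residual, and $\lambda_{\min}$, and invoke weak duality to conclude primal infeasibility and hence non-existence. One minor imprecision: the primal Eq.~\eqref{eq:sdpx_relax} has several inhomogeneous constraints (not only $x^{0,0}_{0,0}=1$ but also the $\sum \gamma^{0,0}_{i,0}x^{0,0}_{i,0}=2^n/K$ and row-sum constraints), which is why the dual objective contains the $(2^n/K-1)Q_0$ and $KD_0-C_0$ terms; this does not affect your weak-duality argument.
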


\section{Discussion}

We provided a complete semidefinite programming hierarchy for the existence of quantum codes,
and relaxations that recover a Lov\'asz bound for self-dual quantum codes, a quantum Delsarte bound,
and the quantum version of Schrijver and Gijswijt's semidefinite programming bounds that use the Terwilliger algebra.
With this we show that $(\!(7,1,4)\!)_2$, $(\!(8, 9, 3)\!)_2$, and $(\!(10, 5, 4)\!)_2$ codes do not exist.
While the non-existence of $(\!(7,1,4)\!)_2$ was known previously through analytical methods~\cite{PhysRevLett.118.200502},
to our knowledge the latter two bounds are new~\cite[Table III]{Rigby_2019}.

An interesting aspect are
the differences between the classical and quantum versions of these semidefinite programming bounds.
These are perhaps best illustrated in the Lov\'asz bound [Eqs.~\eqref{eq:lovasz_SDP1}, ~\eqref{eq:lovasz_SDP2}]:
in the classical case the Hamming distance indicates which entries in $M = \chi \chi^T$ are set to zero.
For quantum codes the anti-commutation relations between pairs of observables yield additional constraints:
$\Gamma_{ab}$ is set to zero
if the Pauli strings $E_a$, $E_b$ anticommute, which can be captured by a graph with loops.
In particular, this leads to different quantum Lov\'asz and quantum Delsarte bounds
than in the classical case of quaternary codes.

A further interesting aspect is the use of the quantum MacWilliams identity to recover
expressions of the form $\tr(E \rho E^\dag \varrho)$
that, in principle, are outside of the state polynomial optimization framework.
One can hope that this opens the door to the application of state polynomial optimization
to optimize over arbitrary expressions in observables and states.
A possible approach is to use higher-order generalizations of quantum MacWilliams identities
such as presented in Ref.~\cite{TRINKER20112283}.

We note that our hierarchy also applies to the existence of absolutely maximally entangled and $m$-uniform states, these corresponding to self-dual codes~\cite{Rather_2022, 1751-8121-51-17-175301}.
With small modifications, the hierarchy also lower bounds the average purities of $n$-qubit states,
such as found in the Meyer-Wallach~\cite{doi:10.1063/1.1497700, DBLP:journals/qic/Brennen03, PhysRevA.69.052330},
potential of multipartite entanglement~\cite{Facchi_2010},
and concentratable entangled measures~\cite{PhysRevResearch.6.023019}.

Finally, we want to highlight that our approach is formally dimension-free.
That is, on can target scenarios where the Hilbert space is not known.
This could be interesting for the study of quantum version of the quantum zero-error communication capacity of a quantum graph~\cite{Duan_2013}.

 We end with some open questions:
 \begin{itemize}

 \item Can one extend the Lov\'asz and Delsarte bounds for self-dual quantum codes to codes with $K>1$?
 This can possibly achieved by purification of the code with a reference system.

 \item The shadow bounds provide powerful constraints, e.g., these alone refute the existence of a $(\!(4,1,3)\!)_2$ code.
 Is there a formulation of the generalized shadow bounds that can be useful as part of an SDP hierarchy~\cite{817508}?

 \item Does our main SDP relaxation imply the shadow bounds?
		If not, are they included at some finite level of the hieararchy?

 \item Can our SDP bounds be strengthened using the techniques developed by Laurent~\cite{Laurent2007},
 the split Terwilliger algebra~\cite{Tseng2023}, or by 4-point bounds~\cite{6142090, Litjens2017}?

 \item Can similar methods be established for other types of codes~\cite{ErrorCorrectionZoo},
 e.g. subsystem codes~\cite{Aly2006},
 hybrid quantum-classical codes~\cite{8006823, 9320518},
 quantum MDS codes~\cite{Huber_2020}
 codes over arbitrary quantum channels~\cite{9741787},
 or entanglement-assisted codes~\cite{lai2023semidefinite}?
 Likewise, can LP and SDP bounds be derived for quantum spherical codes~\cite{Jain2024} by analyzing the Johnson scheme as in the classical case~\cite{Delsarte1977, Bachoc2008}?

 \item The Hamming and Singleton bounds can be recovered from the both the classical Delsarte bound~\cite{Delsarte1972}.
 In the quantum case, the quantum Singleton bound can be obtained,
 as well as a quantum Hamming bound for non-degenerate codes~\cite{761270}.
 It is an open question whether a quantum Hamming bound applies also to degenerate codes,
 with recent results suggesting that it does~\cite{10083270, nemec2023hamminglikebounddegeneratestabilizer}.
 Can the SDP techniques introduced in this paper be used to prove that the quantum Hamming bound must apply to any quantum code?

 \item In the case of pure codes
 the Knill-Laflamme constraints on the variables $x_{i,j}^{t,p}$ in Proposition~\ref{prop:kl_cond_sdp}, Eq.~\eqref{eq:klred_pure_xijtp}
 appear to be stronger than those of the linear programming bounds that make use of the quantum MacWilliams identity.
 Does there exist a quantum MacWilliams-type identity also for the $x_{i,j}^{t,p}$
 (e.g. similar to Ref.~\cite{TRINKER20112283}),
 allowing for finer-grained constraints on the minimum distance of impure codes?

 \item How can this be generalized to qudit codes;
 can a characterization of quantum algebraic tori be used in place of the result by Gastineau-Hills~\cite{panov1998quantum}?

 \end{itemize}

\subsection{Related work}
Lai et al.~\cite{lai2023semidefinite} have introduced an SDP relaxation in the context of entanglement-assisted codeword stabilized codes. When restricted to the case with no pre-shared entanglement, their bounds showed no advantage over the LP bounds.
Berta et al.~\cite{BertaBorderiFawziScholz2021}
presented a SDP hierarchy to determine the existence of a quantum code
based on the symmetric extension hierarchy~\cite{PhysRevLett.88.187904};
see also the follow up work~\cite{kossmann2025approximatequantumerrorcorrection}.
This hierarchy is also complete but does not easily allow for numerically tractable relaxations.
In the classical case, a complete hierarchy for the existence of codes can be
formulated using $0-1$ programming
for maximum independent sets~\cite{10.1137/0801013,lasserre2001explicit}.

\subsection*{Data and code availability}

To solve the SDPs from Eq.~\eqref{eq:dual_sol} and Eq.~\eqref{eq:dual_solall}, we used the Python API  PICOS~\cite{PICOS} together with the SDP solver CVXOPT~\cite{cvx}.
The program and the SDP solutions can be found online in \url{https://github.com/ganglesmunne/SDP_bounds_for_quantum_codes} [Ref. \cite{Munné2024github}].
To obtain solutions with some rational values we run the SDP iteratively,
where in each iteration an an entry of $Y^{(a,k)}$ was forced to be equal to a rounded previous entry.
This procedure leads to the rational solution in Appendix~\ref{app:inf}.

\subsection*{Conflict of interest}

The authors declare no conflict of interest.
		
\bibliographystyle{abbrv}
\bibliography{current_bib}

\appendix

\section{Dual programs}\label{app:dual}
	
	Here we derive the dual of the primal programs from Section~\ref{sec:appli_codes}
	to obtain infeasibility certificates for codes with parameters $(\!(7,1,4)\!)_2$,  $(\!(8,9,3)\!)_2$ and $(\!(10,5,4)\!)_2$.
	
	Following Boyd and Vandenberghe~\cite[Section 4.6.2, Example 5.12]{boyd2004convex}, we first explain such procedure for the case when the primal in its standard form [see Eq.~\eqref{eq:primal}] which reads
	\begin{align}\label{eq:primal_app}
		\text{minimize} & \quad \tr(CX) \nn \\
		\quad \text{subject to} & \quad \tr(A_i X)=b_i \quad \text{for} \quad i=1,\dots, p \,,\nn \\
		& \quad X\succeq 0 \,.
	\end{align}
	The transition between a primal and the dual problems is done via their Lagrangian:
	associate with each equality constraint $\tr(A_i X)=b_i$ a dual variable $y_i\in \R$ and with the semidefinite constraint $X\succeq 0$ a dual semidefinite variable $Y\succeq 0$. The Lagrangian is then defined as
	\begin{align}\label{eq:lagran}
		\mathcal{L}(X,y_i,Y)&= \tr(CX) - \tr(YX) + \sum^p_{i=1} y_i \big(\tr(A_i X)- b_i\big)   \nn \\
		&= \tr\Big(\big(C-Y + \sum^p_{i=1} y_iA_i \big) X \Big) - \sum_{i=1}^p y_i b_i\,,
	\end{align}
	so the dual objective function is
	\begin{align}
		\inf_X \mathcal{L}(X,y_i,Y) =
		\begin{cases} -\sum^p_{i=1} y_i b_i \quad \text{if}\quad  \sum^p_{i=1} y_iA_i-Y+C=0\,, \\ -\infty \quad\quad\quad\quad \text{otherwise}\,.
		\end{cases}
	\end{align}
	The dual problem can then be expressed as
	\begin{align}
		\text{maximize} \quad & -\sum^p_{i=1} y_i b_i \nn \\
		\text{subject to} \quad &  C+\sum^p_{i=1} y_iA_i=Y \quad \text{for} \quad  i=1,\dots,p \,,\nn \\
		\quad & Y\succeq 0 \,.
	\end{align}
	By eliminiating the slack variable $Y$ and redefining $y_i$ with  $-y_i$, one obtains
	\begin{align}
		\text{maximize} \quad & \sum^p_{i=1} y_i b_i \nn \\
		\text{subject to} \quad &  C-\sum^p_{i=1} y_iA_i\succeq0 \quad \text{for} \quad  i=1,\dots,p \,.
	\end{align}

	\subsection{Dual of the Lov\'asz bound}
	We use this to write the dual program of
	the feasibility version of the symmetry reduced quantum Lov\'asz theta number for self-dual codes,
	\begin{proposition}\label{prop:lovasz_dual}
	The dual program of Eq.~\eqref{eq:red_lovasz} is
			\begin{align}\label{eq:dual_sol_app_prop}
			\alpha \quad = \quad \max_{Y^{(a,k)}\,,\, {w}} \quad
			&   (2^n-1)w - y_{0,0}^{0,0} \nn \\
			\text{subject to}  \quad & Y^{(a,k)} \succeq 0\,, \nn \\
			& y^{t,p}_{i,j} = \frac{1}{\gamma^{t,p}_{i,j}} \sum^{\min(i,j)}_{k=0} \sum^k_{a=\max(i,j)+k-n} \alpha(i,j,t,p,a,k) Y^{(a,k)}_{i-k,j-k}\,, \nn \\
			& y_{i,i}^{i,i} +  w + 2 y_{i,0}^{0,0} =0
			\quad\quad\quad\text{if} \quad \delta \leq i \leq n \,, \nn \\
			&  y^{t,p}_{i,j}=0
			\hspace{8em}  \text{if} \quad i,j\neq 0\,, \quad t-p \quad  \text{is even } \quad  \nn \\
			&\hspace{11.6em}\quad \text{and} \quad i + j - t - p \geq \delta \,.
		\end{align}
		where $Y^{(a,k)}$ with $0\leq a\leq k\leq n+a-k$
		are real square matrices of size $(n+a-2k)$.
	\end{proposition}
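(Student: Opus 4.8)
The plan is a direct application of semidefinite programming Lagrangian duality, following Boyd and Vandenberghe~\cite{boyd2004convex} as recalled above. First I would put the primal feasibility problem Eq.~\eqref{eq:red_lovasz} into standard form: its only semidefinite variable is the block-diagonal matrix $\bigoplus_{a,k} Z^{(a,k)}$ whose blocks are the Terwilliger block-diagonalization images $Z^{(a,k)}_{i-k,j-k}=\sum_{t,p}\alpha(i,j,t,p,a,k)\,x^{t,p}_{i,j}$ from Ref.~\cite{GIJSWIJT20061719} (the blocks are labelled by $0\le a\le k\le n+a-k$ and have size $n+a-2k$), while the remaining constraints are affine equalities on the coordinates $x^{t,p}_{i,j}$: the normalization $x^{0,0}_{0,0}=1$, the identification $x^{0,0}_{i,0}=x^{i,i}_{i,i}$, the trace condition $\sum_{i=0}^n\gamma^{0,0}_{i,0}x^{0,0}_{i,0}=2^n$, and the vanishing conditions for $t-p$ odd or $\{i,j,i+j-t-p\}\cap\{1,\dots,\delta-1\}\neq\emptyset$.

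Next I would introduce the dual variables: a positive semidefinite block $Y^{(a,k)}\succeq 0$ paired with each $Z^{(a,k)}\succeq 0$, a real scalar $w$ paired with the trace constraint, and further real scalars for the other equalities. The key computation is that the pairing $\sum_{a,k}\langle Y^{(a,k)},Z^{(a,k)}\rangle$ equals $\sum_{t,p,i,j}\gamma^{t,p}_{i,j}\,y^{t,p}_{i,j}\,x^{t,p}_{i,j}$ once one sets $y^{t,p}_{i,j}:=\frac{1}{\gamma^{t,p}_{i,j}}\sum_{a,k}\alpha(i,j,t,p,a,k)\,Y^{(a,k)}_{i-k,j-k}$; this is precisely the adjoint of the block-diagonalization map, and the summation ranges $0\le k\le\min(i,j)$, $\max(i,j)+k-n\le a\le k$ are exactly those for which $Y^{(a,k)}$ has an $(i-k,j-k)$ entry. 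Forming the Lagrangian $\mathcal L$ and minimizing over the unconstrained coordinates $x^{t,p}_{i,j}$ yields $-\infty$ unless the coefficient of each $x^{t,p}_{i,j}$ vanishes (stationarity), in which case $\inf_x\mathcal L$ equals minus the weighted sum of the right-hand sides; since only $x^{0,0}_{0,0}=1$ and $\sum\gamma^{0,0}_{i,0}x^{0,0}_{i,0}=2^n$ have nonzero right-hand sides, the dual objective reduces to an expression in $w$ and $y^{0,0}_{0,0}$.

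Then I would read off the stationarity conditions component by component. For a coordinate that Eq.~\eqref{eq:red_lovasz} sets to zero, its equality multiplier is a free real that simply absorbs the term, giving no restriction on the $Y^{(a,k)}$. By condition (iii) of Proposition~\ref{prop:Terwilliger_extra_constraints} together with the symmetry of $\tilde\Gamma$, the coordinates $x^{0,0}_{i,0}$, $x^{0,0}_{0,i}$ and $x^{i,i}_{i,i}$ coincide; for $\delta\le i\le n$ none of them is zeroed, and their common stationarity condition reads $2\gamma^{0,0}_{i,0}y^{0,0}_{i,0}+\gamma^{i,i}_{i,i}y^{i,i}_{i,i}=w\,\gamma^{0,0}_{i,0}$, which collapses to $y^{i,i}_{i,i}+2y^{0,0}_{i,0}+w=0$ after using the identity $\gamma^{i,i}_{i,i}=\gamma^{0,0}_{i,0}$ from Eq.~\eqref{eq:gamma} and flipping the sign of $w$. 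For the remaining coordinates, namely those with $i,j\neq 0$, $t-p$ even and $i+j-t-p\ge\delta$ which are neither zeroed nor of the triple form and do not enter the trace constraint, stationarity forces $\gamma^{t,p}_{i,j}y^{t,p}_{i,j}=0$, i.e.\ $y^{t,p}_{i,j}=0$. The multiplier of $x^{0,0}_{0,0}=1$ is eliminated through its own stationarity condition in terms of $w$ and $y^{0,0}_{0,0}$, and substituting back turns the dual objective into $(2^n-1)w-y^{0,0}_{0,0}$. Weak duality Eq.~\eqref{eq:weak} finishes the argument: a feasible point of Eq.~\eqref{eq:red_lovasz} would force this dual maximum to be $\le 0$, so any dual point with $(2^n-1)w-y^{0,0}_{0,0}>0$ certifies infeasibility.

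\textbf{Main obstacle.} The individual computations are routine, so the real work is bookkeeping: tracking which coordinates $x^{t,p}_{i,j}$ are identified (via the permutation rule of Proposition~\ref{prop:Terwilliger_extra_constraints}(iii) and the symmetry of $\tilde\Gamma$), which are zeroed, and which are genuinely free; recording the adjoint of the Gijswijt block-diagonalization with the correct index ranges and normalizing factors $\gamma^{t,p}_{i,j}$; and verifying the $\gamma$-identities such as $\gamma^{i,i}_{i,i}=\gamma^{0,0}_{i,0}$ that make the stationarity equations collapse to the stated form. Getting the factor $2$ and the sign of $w$ right in $y^{i,i}_{i,i}+2y^{0,0}_{i,0}+w=0$ is the most error-prone point.
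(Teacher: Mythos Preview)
Your overall plan---Lagrangian duality, adjoint of the Gijswijt block-diagonalization, reading off stationarity conditions coordinate by coordinate, and the $\gamma^{i,i}_{i,i}=\gamma^{0,0}_{i,0}$ simplification---is exactly the route the paper takes.

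There is however one genuine missing step. You take the zero conditions of Eq.~\eqref{eq:red_lovasz} as given, namely $x^{t,p}_{i,j}=0$ whenever $t-p$ is odd or $\{i,j,i+j-t-p\}\cap\{1,\dots,\delta-1\}\neq\emptyset$. With these, a coordinate with (say) $0<i<\delta$, $j\ge\delta$, $t-p$ even and $i+j-t-p\ge\delta$ is \emph{zeroed}, so its multiplier is slack and there is no constraint on the corresponding $y^{t,p}_{i,j}$. Your description of the ``remaining coordinates'' as ``those with $i,j\neq 0$, $t-p$ even and $i+j-t-p\ge\delta$'' therefore does not follow from the original primal; carried through, your Lagrangian would produce a dual with $y^{t,p}_{i,j}=0$ only when additionally $i,j\ge\delta$, which is strictly weaker than the constraint stated in the proposition.

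The paper fixes this by first passing to an \emph{equivalent} primal (Eq.~\eqref{eq:red_relax}) in which the zero condition is relaxed to just ``$t-p$ odd or $0<i+j-t-p<\delta$''. Equivalence is argued via a $2\times 2$ positive-semidefinite minor: once $x^{i,i}_{i,i}=\tilde\Gamma_{xx}=0$ the constraint $\tilde\Gamma\succeq 0$ already forces $\tilde\Gamma_{xy}=x^{t,p}_{i,j}=0$ for all $j,t,p$, so the extra zero conditions in Eq.~\eqref{eq:red_lovasz} are redundant. Only after this reformulation does the set of genuinely free coordinates become exactly $\{(i,j,t,p):i,j\neq 0,\ t-p\text{ even},\ i+j-t-p\ge\delta\}$, and the dual you then obtain coincides with the one in the proposition. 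You should insert this reformulation step at the start; the rest of your bookkeeping is fine.
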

	\begin{proof}
	
	For the derivation of the dual program,
	we slightly modify the conditions from Eq.~\eqref{eq:red_lovasz} when $x^{t,p}_{i,j}=0$  to obtain the equivalent formulation:
	\begin{align}\label{eq:red_relax}
		\text{find} 		\quad & x^{t,p}_{i,j}  
		\nn \\
		\text{subject to} 	\quad
		& x^{0,0}_{0,0}=1,\quad  x^{0,0}_{i,0}= x^{i,i}_{i,i}, \quad \sum^n_{i=0} \gamma^{0,0}_{i,0} x^{0,0}_{i,0}=2^n,
		\nn \\
		\quad &  x^{t,p}_{i,j}=0 \quad\quad \text{if}\quad  t-p \quad  \text{is odd} \quad \text{or} \quad 0 < i + j - t - p < \delta  \,,
		\nn \\
		\quad 
		& \bigoplus_{\substack{a,k \in \N_0\\ 0\leq a\leq k\leq n +a-k}}\left(\sum\limits_{\substack{t,p\in \N_0 \\ 0 \leq p \leq t \leq i,j \\ i+j\leq t+n}}\alpha(i,j,t,p,a,k)x_{i,j}^{t,p}\right)_{i,j=k}^{n+a-k} \succeq 0 \,.
	\end{align}
	where $\alpha(i,j,t,p,a,k)$ is given by Eq.~\eqref{eq:alpha}.
	Note that the condition $x^{t,p}_{i,j}=0$ if $0 < i+j-t-p <\delta$ is equivalent to the pure code condition from Eq.~\eqref{eq:klred_pure_xijtp}.
	This can be seen from that fact that when $j=0$, it simplifies to $x^{0,0}_{i,0}=x^{i,i}_{i,i}=0$ if $0 < i <\delta$.
	But $x_{i,i}^{i,i} = 0$ implies that $x_{i,j}^{t,p} = 0$ for any $j,t,p$:
	Consider $\x \in \by{4}^n$ with $s(\x)=i$.
	Then $\tilde\Gamma_{\x\x}=x^{i,i}_{i,i}=0$. The constraint $\tilde\Gamma \succeq 0$ implies that $\tilde\Gamma_{\x\y}=x^{t,p}_{i,j}=0$ for any $\y$ with $|s(\y)| = j$, $\lvert s(\x)\cap s(\y)\rvert = t$ and $\lvert\{m \,\mid\, \x_{m}=\y_{m}\neq0\}\rvert = p$.
	This is due to the fact that the $2\times 2$ minor of $\tilde \Gamma$ involving $x$ and $y$ must be positive-semidefinite, and thus if $\tilde \Gamma_{xx} = 0$ then also $\tilde \Gamma_{xy} = 0$.

	Consider now the Lagrangian of SDP~\eqref{eq:red_relax}.
	Associate with each equality constraint a dual variable
	 ${u},{v_i},{w},r^{t,p}_{i,j}\in \R$
	 and with each semidefinite constraint a dual semidefinite variable $Y^{(a,k)}\succeq0$.
	 Eq.~\eqref{eq:red_relax} is a feasibility problem and so the objective function equals zero.
	 Then
	\begin{align}\label{eq:Lagrange_Lovasz}
		\mathcal{L} &= \left(1-x^{0,0}_{0,0}\right) {u}+ \sum^n_{i=1} \left( x^{0,0}_{i,0}-x^{i,i}_{i,i}\right) {v_i} \nn \\
		&\quad
		+ \left(2^n-\sum^n_{i=0}\gamma^{0,0}_{i,0} x^{0,0}_{i,0}\right){w} \quad
		+ \sum_{\substack{ (i,j,t,p)\in \II(4,n) \\ (t-p) \,\text{odd} \text{ or} \\ 0<i+j-t-p<\delta}} x^{t,p}_{i,j} r^{{t,p}}_{i,j}
		\nn \\
		&\quad-
		\sum_{(i,j,t,p)\in \II(4,n)} \left(\sum^{\min(i,j)}_{k=0} \sum^k_{\substack{a=\max(i,j)\\+k-n}} \!\!\!\!\!\! \alpha(i,j,t,p,a,k) Y^{(a,k)}_{i-k,j-k}\right)x^{t,p}_{i,j}\,.
	\end{align}
	For the last term, we have developed the trace of each block of the direct sum in Eq.~\eqref{eq:red_relax} multiplied by the dual variable $Y^{(a,k)}$ to obtain
	\begin{align}\label{eq:dual_semidefinite1}
\sum_{\substack{a,k \in \N_0\\ 0\leq a\leq k\leq n +a-k}}\left(\sum^{n+a-k}_{i,j=k} \sum\limits_{\substack{t,p\in \N_0 \\ 0 \leq p \leq t \leq i,j \\ i+j\leq t+n}}\alpha(i,j,t,p,a,k)x_{i,j}^{t,p} Y^{(a,k)}_{i-k,j-k}\right) \,,
	\end{align}
	where we have used the fact that $Y^{(a,k)}$ is symmetric. Define the following set corresponding
	to the sums appearing in
	Eq.~\eqref{eq:dual_semidefinite1},
	\begin{align}\label{eq:R_set1}
		\mathcal{R}= \big\{ (i,j,t,p,a,k)\,\big|\,\, &k \leq i,j \leq n+a-k\,, \quad 0 \leq a\leq k\leq n +a-k\,, \nn \\
		  & i+j\leq t+n \quad \text{and} \quad 0 \leq p \leq t \leq i,j \big\}\,.
	\end{align}
	Eq.~\eqref{eq:dual_semidefinite1} then simplifies to
	$\sum_{(i,j,t,p,a,k)\in \mathcal{R}} \alpha(i,j,t,p,a,k) x^{t,p}_{i,j}  Y^{(a,k)}_{i-k,j-k}$.
	Note that $\mathcal{R}$ is equivalently defined by
	Eq.~\eqref{eq:dual_semidefinite1} is then simplify to $\sum_{(i,j,t,p,a,k)\in \mathcal{R}} \alpha(i,j,t,p,a,k) x^{t,p}_{i,j}  Y^{(a,k)}_{i-k,j-k}$. Note that the set $\mathcal{R}$ is equivalently defined by
	\begin{align}
		\mathcal{R}=\big\{ (i,j,t,p,a,k)\,\big|\,\, &(i,j,t,p)\in \II(4,n)\,, \quad 0 \leq k \leq \min(i,j) \,, \nn \\ & \max(i,j) \leq n+a-k\,, \quad \text{and} \quad a \leq k \big\}\,,
	\end{align}
	with $\II(4,n)$ given by Eq.~\eqref{eq:I-range}.
	To see this note that the first two inequalities in Eq.~\eqref{eq:R_set1} combine to
	$0\leq a \leq k \leq i,j \leq n+a-k$ from which $i,j\leq n$ follows.
	Then the remaining condition $0\leq p\leq t\leq i,j$ is captured by $\II(4,n)$.
	Therefore, $\sum_{(i,j,t,p,a,k)\in \mathcal{R}} \alpha(i,j,t,p,a,k) x^{t,p}_{i,j}  Y^{(a,k)}_{i-k,j-k}$ can also be written as
	\begin{align}\label{eq:dual_semidefinite2}
	\sum_{(i,j,t,p)\in \II(4,n)} \left(\sum^{\min(i,j)}_{k=0} \sum^k_{\substack{a=\max(i,j)+k-n}} \!\!\!\!\!\! \alpha(i,j,t,p,a,k) Y^{(a,k)}_{i-k,j-k}\right)x^{t,p}_{i,j}\,.
	\end{align}
	
	Define now $y^{t,p}_{i,j}\ = \sum^{\min(i,j)}_{k=0} \sum^k_{a=\max(i,j)+k-n} \alpha(i,j,t,p,a,k) Y^{(a,k)}_{i-k,j-k}$
	which is the inner term of the last expression in Eq.~\eqref{eq:Lagrange_Lovasz}. Since $Y^{(a,k)}_{i-k,j-k}$ and $\alpha(i,j,t,p,a,k)$ are symmetric functions with respect to $i$ and $j$, then also $y^{t,p}_{i,j}$ is symmetric.
	Factorize the primal variables in the Lagrangian of Eq.~\eqref{eq:Lagrange_Lovasz},
	\begin{align}
		\mathcal{L}&=({u}+2^n{w}) 
		+ x^{0,0}_{0,0}\Big( - {u} - {w} - y^{0,0}_{0,0}\Big) 
		+\sum^{\delta-1}_{i=1} x^{0,0}_{i,0}\Big({v_i}  -\gamma^{0,0}_{i,0} {w} +r^{0,0}_{i,0}+r^{0,0}_{0,i} - 2y^{0,0}_{i,0} \Big) \nn \\
		&\quad 		+\sum^{n}_{i=\delta} x^{0,0}_{i,0}\Big( {v_i} -\gamma^{0,0}_{i,0} {w}  - 2y^{0,0}_{i,0}\Big)
		+\sum^{n}_{i=1} x^{i,i}_{i,i}\Big(- {v_i} - y^{i,i}_{i,i}\Big) \nn \\
		&\quad +\sum_{\substack{(i,j,t,p)\in \II(4,n) \\ i,j\neq 0\\ (t-p) \,\text{odd} \text{ or} \\ 0<i+j-t-p<\delta}} x^{t,p}_{i,j}\Big(r^{t,p}_{i,j}-y^{t,p}_{i,j}\Big)
		- \sum_{\substack{(i,j,t,p)\in \II(4,n) \\ i,j\neq 0\\ (t-p) \text{ even and} \\ i+j-t-p \geq \delta}} x^{t,p}_{i,j}y^{t,p}_{i,j} \,.
	\end{align}
	Here we used that $\gamma^{0,0}_{0,0}=1$, $x^{00}_{i0}=x^{00}_{0i}$, and $y^{00}_{i0}=y^{00}_{0i}$, to decompose
	
	\begin{align}\label{eq:clar_fact_7}
	\sum_{\substack{ (i,j,t,p)\in \II(4,n) \\ (t-p) \,\text{odd} \text{ or} \\ 0<i+j-t-p<\delta}} x^{t,p}_{i,j} r^{{t,p}}_{i,j}
	&= \sum^{\delta-1}_{i=1}x^{0,0}_{i,0} (r^{{0,0}}_{i,0}+r^{{0,0}}_{0,i}) +\sum_{\substack{(i,j,t,p)\in \II(4,n) \\ i,j\neq 0\\ (t-p) \,\text{odd} \text{ or} \\ 0<i+j-t-p<\delta}} x^{t,p}_{i,j} r^{{t,p}}_{i,j} \,, \\
	\sum_{(i,j,t,p)\in \II(4,n)} x^{t,p}_{i,j} y^{t,p}_{i,j}&= x^{0,0}_{0,0} y^{0,0}_{0,0} +  \sum^{n}_{i=1}2x^{0,0}_{i,0} y^{0,0}_{i,0} + \!\!\!\!\!\!\!\!\sum_{\substack{(i,j,t,p)\in \II(4,n) \\ i,j\neq 0\\ (t-p) \,\text{odd} \text{ or} \\ 0<i+j-t-p<\delta}} x^{t,p}_{i,j} y^{{t,p}}_{i,j} + \!\!\!\!\!\!\!\!\sum_{\substack{(i,j,t,p)\in \II(4,n) \\ i,j\neq 0,\,\\ (t-p) \text{ even and} \\ i+j-t-p \geq \delta, \\ }} x^{t,p}_{i,j}y^{t,p}_{i,j} \,.
	\end{align}

	The dual objective function is then 
		\begin{align}\label{eq:dualconst}
		\inf_{x^{t,p}_{i,j}} \mathcal{L}&=({u}+2^nw) \quad \text{if} \quad
		\begin{cases}
		 (1)\quad 0=-{u}- {w}-y^{0,0}_{0,0}\,, \\
		 (2)\quad 0={v_i}  -\gamma^{0,0}_{i,0} {w} +r^{0,0}_{i,0}+r^{0,0}_{0,i} - 2y^{0,0}_{i,0}
		  &\text{if} \quad 0 <i <\delta \,,\\
		(3)\quad 0={v_i} -\gamma^{0,0}_{i,0} {w}  - 2y^{0,0}_{i,0}
		&\text{if} \quad \delta \leq i \leq n \,,\\
		(4)\quad 0= -{v_i}
		-y^{i,i}_{i,i}
		&\text{if} \quad 0 < i \leq n \,,\\
		(5)\quad 0=r^{t,p}_{i,j} - y^{t,p}_{i,j}
		&\text{if} \quad (t-p) \,\, \text{odd} \quad \text{or} \\ \quad   &\quad 0<i+j-t-p<\delta \quad \\ & \quad\quad \text{with} \quad  i,j\neq 0 \,, \\
		(6)\quad 0=y^{t,p}_{i,j}
		&\text{if} \quad i,j\neq 0, \quad (t-p) \,\, \text{even } \\
		&\quad \text{and} \quad   \delta  \leq i+j-t-p\,.
			\end{cases}
	\end{align}
	Otherwise, it tends to $-\infty$.
	Note that conditions (2) and (5) appearing in Eq.~\eqref{eq:dualconst},
	\begin{align}\label{eq:dual_redun}
		{v_i}  -\gamma^{0,0}_{i,0} {w} +r^{0,0}_{i,0}+r^{0,0}_{0,i} - 2y^{0,0}_{i,0}=0
		\quad &\text{if} \quad 0 <i <\delta \,, \nn \\
		r^{t,p}_{i,j} - y^{t,p}_{i,j}=0	 \quad &\text{if} \quad (t-p) \,\, \text{odd} \quad \nn \\ &\quad\quad \text{or} \quad 0<i+j-t-p<\delta \quad \text{with} \quad  i,j\neq 0 \,.
	\end{align}
	do not constrain the objective function.
	This can be seen from the fact that the variables $r^{t,p}_{i,j}$ do not appear in the objective function and neither in the rest of constraints. Thus, they act as slack variables which can always take values so that the above two conditions are satisfied.
	
	All dual variables can be expressed in terms of the semidefinite dual constraints $Y^{(a,k)}$ and $w$
	using the remaining constraints in Eq.~\eqref{eq:dualconst}:
	use condition (1) to eliminate $u$ and condition (4) to eliminate $v_i$.

	The dual problem can then be expressed as Eq.~\eqref{eq:dual_sol_app_prop}.
	In particular, the objective function reads $u+2^nw = (2^n-1)w - y_{0,0}^{0,0}$,
	and the remaining constraints (3) and (6) read:
	\begin{align}
	 0 &= y_{i,i}^{i,i} + \gamma_{i,0}^{0,0} w + 2 y_{i,0}^{0,0} \quad &\text{if}\quad &\delta \leq i \leq n \nn\\
	 0 &=y^{t,p}_{i,j}		&\text{if} \quad
	   &i,j\neq 0, \quad (t-p) \,\, \text{even}\,, \quad   i+j-t-p\geq \delta \,.
	\end{align}
	
	The dual program then reads
	\begin{align}
		\alpha \quad = \quad \max_{Y^{(a,k)}\,,\, {w}} \quad
		&   (2^n-1)w - y_{0,0}^{0,0} \nn \\
		\text{subject to}  \quad & Y^{(a,k)} \succeq 0\,, \nn \\
		& y^{t,p}_{i,j} = \sum^{\min(i,j)}_{k=0} \sum^k_{a=\max(i,j)+k-n} \alpha(i,j,t,p,a,k) Y^{(a,k)}_{i-k,j-k}\,, \nn \\
		& y_{i,i}^{i,i} + \gamma_{i,0}^{0,0} w + 2 y_{i,0}^{0,0} =0
		\quad\quad\text{if} \quad \delta \leq i \leq n \,, \nn \\
		&  y^{t,p}_{i,j}=0
		\hspace{8.5em}  \text{if} \quad i,j\neq 0\,, \quad t-p \quad  \text{is even } \quad  \nn \\
		&\hspace{11.6em}\quad \text{and} \quad i + j - t - p \geq \delta \,.
	\end{align}
	Finally, one can check from Eq.~\eqref{eq:gamma} that $\gamma^{0,0}_{i,0}=\gamma^{i,i}_{i,i}$.
	Thus, by mapping the variable $y^{t,p}_{ij} \mapsto \gamma^{t,p}_{i,j}y^{t,p}_{ij}$, the constraints in the SDP above simplify to Eq.~\eqref{eq:dual_sol_app_prop} since  $\gamma^{t,p}_{ij}$ factorizes.
	This ends the proof.
	\end{proof}

	\subsection{Dual of the symmetry-reduced SDP}
	
	\begin{proposition}\label{prop:dual_all}
			Given the relaxation of symmetry reduced SDP in Eq.~\eqref{eq:sdpx_relax}, the dual program reads
\begin{align}\label{eq:dual_solall_prop_app}
	\alpha \quad = \quad \max_{Y^{(a,k)}, Q_i, C_i} \quad &   (KD_0-C_0) + \left(\frac{2^n}{K} -1\right)Q_0 - {y^{0,0}_{0,0}} \quad &  \nn \\
	\text{subject to}  \quad & Y^{(a,k)} \succeq0,\nn \\
	& y^{t,p}_{i,j}\ = \frac{1}{\gamma^{t,p}_{i,j}} \sum^{\min(i,j)}_{k=0} \sum^k_{a=\max(i,j)+k-n} \alpha(i,j,t,p,a,k) Y^{(a,k)}_{i-k,j-k}\,, \nn \\
	&D_i = 2^{-n} \sum^{\delta-1}_{j=0} K_j(i,n)  C_{j} \,, \nn \\
	&(KD_i-C_i)+ \left(\frac{2^n}{K}-2\right)Q_i - 2y^{0,0}_{i,0}
	-Q_0  -y^{i,i}_{i,i}=0
	\quad \text{if} \quad 0 <i <\delta \,,\nn \\
	& KD_i + \left(\frac{2^n}{K}-2\right)Q_i - 2y^{0,0}_{i,0}
	-Q_0  -y^{i,i}_{i,i}=0
	\quad\quad\quad\quad \text{if} \quad \delta \leq i \leq n \,, \nn \\
	& \sum_{\substack{ (i',j',t',p')\in \II(4,n) \\ (t-p)=(t'-p') \\ (i',j',i'+j'-t'-p') \, \text{a} \\
			\, \text{permutation of} \,  (i,j,i+j-t-p) \\ }} \!\!\!\!\!\!\!\!\!\!  
	\left(y^{t',p'}_{i',j'}+ Q_{i'+j'-t'-p'}\right) = 0 \quad  \hspace{3.3em} \text{if} \quad (t-p) \quad \text{even}  \nn \\ &  \hspace{6cm}  \quad \text{and} \quad i,j,i+j-t-p \neq 0 \,.
\end{align}
Here the dual variables are real matrices $Y^{(a,k)}$ of size $n+a-2k$ with $ 0\leq a\leq k\leq n+a-k$ and
$Q_i, C_i \in \R$ with $0\leq i\leq n$.
	\end{proposition}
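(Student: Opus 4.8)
The plan is to derive \eqref{eq:dual_solall_prop_app} by Lagrangian duality, mirroring the scheme used for the Lov\'asz dual in Proposition~\ref{prop:lovasz_dual}. First I would recast the relaxed symmetry-reduced program \eqref{eq:sdpx_relax} as a standard-form SDP feasibility problem in the scalar variables $x^{t,p}_{i,j}$: the vanishings $x^{t,p}_{i,j}=0$ for $t-p$ odd and the orbit identifications $x^{t,p}_{i,j}=x^{t',p'}_{i',j'}$ become linear equality constraints; the normalization $\sum_i\gamma^{0,0}_{i,0}x^{0,0}_{i,0}=2^n/K$, the moment relations $\sum_{i+j-t-p=k}\gamma^{t,p}_{i,j}x^{t,p}_{i,j}=\tfrac{2^n}{K}\gamma^{0,0}_{k,0}x^{0,0}_{k,0}$, and the Knill--Laflamme equalities $K2^{-n}\sum_iK_j(i,n)\gamma^{0,0}_{i,0}x^{0,0}_{i,0}=\gamma^{0,0}_{j,0}x^{0,0}_{j,0}$ for $0<j<\delta$ are further linear equalities; and the direct sum over $(a,k)$ is the single block-diagonal PSD constraint.

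Second, I would attach dual variables exactly as in Proposition~\ref{prop:lovasz_dual}: a scalar $u$ to $x^{0,0}_{0,0}=1$; scalars $Q_k\in\R$, $0\le k\le n$, to the moment relations; scalars $C_j\in\R$, $0\le j<\delta$, to the Knill--Laflamme family, using $K_0(i,n)\equiv1$ to absorb the normalization constraint into the $j=0$ member (this is the step that prevents double counting); slack scalars $r^{t,p}_{i,j}$ to the vanishing- and orbit-identification constraints; and a PSD matrix $Y^{(a,k)}\succeq0$ to each block. The Lagrangian is then a linear functional of the $x^{t,p}_{i,j}$; expanding the term pairing $Y^{(a,k)}$ against its block exactly as in the proof of Proposition~\ref{prop:lovasz_dual} produces a contribution $\sum_{(i,j,t,p)\in\II(4,n)}\gamma^{t,p}_{i,j}y^{t,p}_{i,j}x^{t,p}_{i,j}$ with $y^{t,p}_{i,j}$ given by the stated double sum over $\alpha(i,j,t,p,a,k)Y^{(a,k)}_{i-k,j-k}$, and collecting the multiples of $\gamma^{0,0}_{i,0}x^{0,0}_{i,0}$ coming from the $C_j$'s introduces the abbreviation $D_i:=2^{-n}\sum_{j=0}^{\delta-1}K_j(i,n)C_j$.

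Third, I would group the remaining primal variables by their orbit under the permutation action on $(i,j,i+j-t-p)$ at fixed parity of $t-p$, since $x$ is constant on each orbit. The infimum of the Lagrangian over the $x$'s is finite precisely when the total coefficient of each orbit vanishes, and reading these off gives the dual equalities: the orbit of $x^{0,0}_{0,0}$ fixes $u$ and hence the objective $(KD_0-C_0)+(\tfrac{2^n}{K}-1)Q_0-y^{0,0}_{0,0}$; the diagonal orbits with representative $x^{0,0}_{i,0}=x^{i,i}_{i,i}$ give the two families split at $0<i<\delta$ versus $\delta\le i\le n$ (the $C_i$-term disappearing in the latter range); and the generic orbits with $i,j,i+j-t-p\neq0$ and $t-p$ even give the summed constraint $\sum_{\text{orbit}}\bigl(y^{t',p'}_{i',j'}+Q_{i'+j'-t'-p'}\bigr)=0$. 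Finally I would discard the slack variables $r^{t,p}_{i,j}$, which appear in no other constraint nor in the objective and can thus always be chosen to satisfy their equalities just as in Proposition~\ref{prop:lovasz_dual}, eliminate $u$ and the diagonal multipliers, and rescale $y^{t,p}_{i,j}\mapsto\gamma^{t,p}_{i,j}y^{t,p}_{i,j}$ using $\gamma^{0,0}_{i,0}=\gamma^{i,i}_{i,i}$ and the constancy of $\gamma^{t,p}_{i,j}$ on each orbit \eqref{eq:gamma} to reach \eqref{eq:dual_solall_prop_app}.

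The main obstacle is the bookkeeping for the orbit-identification constraints: one must verify that $\gamma^{t,p}_{i,j}$ is genuinely constant on each permutation orbit so that the rescaling is legitimate and the $+Q_{i+j-t-p}$ contribution enters without a $\gamma$ weight, and one must correctly separate the special orbits ($x^{0,0}_{0,0}$ and the diagonals $x^{i,i}_{i,i}$) from the generic ones so that the case split at $i=j=0$, $0<i<\delta$, and $\delta\le i\le n$ matches \eqref{eq:dual_solall_prop_app}. A secondary subtlety is confirming that dropping $\tilde\Gamma^c\succeq0$ and relaxing the Knill--Laflamme inequalities to the stated equalities in \eqref{eq:sdpx_relax} leaves the derivation otherwise identical to the general case.
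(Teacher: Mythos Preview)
Your proposal is correct and follows essentially the same Lagrangian-duality route as the paper: absorb the normalization into the $j=0$ Knill--Laflamme equality, attach multipliers $u,Q_k,C_j$ and PSD duals $Y^{(a,k)}$, introduce the abbreviations $y^{t,p}_{i,j}$ and $D_i$, eliminate the genuinely slack multipliers for $t-p$ odd, and finish with the $\gamma^{t,p}_{i,j}$-rescaling using its constancy on permutation orbits. The only cosmetic difference is that the paper keeps explicit multipliers $v_i$ for $x^{0,0}_{i,0}=x^{i,i}_{i,i}$ and $s^{t,t',p,p'}_{i,i',j,j'}$ for the structure constraint, then packages the latter into an auxiliary $f^{t,p}_{i,j}$ whose orbit-sum vanishes, whereas you pass directly to one variable per orbit; your wording that the orbit multipliers are ``slack'' is slightly loose (only the odd-parity ones are), but the grouping-by-orbit step you describe recovers exactly the summed constraint the paper obtains from the $f^{t,p}_{i,j}$ identity.
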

	\begin{proof}
		
		For convenience, we perform the following variations to the SDP~\eqref{eq:sdpx_relax}:
		\begin{enumerate}
			\item Substitute $\sum^{n}_{i=0} \gamma^{0,0}_{i,0}x^{0,0}_{i,0}=\frac{2^n}{K}$ for $\frac{K}{2^n}\sum^n_{i=0} K_0(i;n) \gamma^{0,0}_{i,0} x^{0,0}_{i,0} =x^{0,0}_{0,0}\gamma^{0,0}_{0,0}$ since they are equivalent
			(see Proposition~\ref{prop:kl_cond_sdp} ff.).
			\item Consider $x^{0,0}_{i,0}=x^{i,i}_{i,i}$ independent from the remaining conditions in (iii) from Proposition~\ref{prop:Terwilliger_extra_constraints}.
		\end{enumerate}

		As result the SDP from Eq.~\eqref{eq:sdpx_relax} is now written as
		\begin{align}\label{eq:sdprefall}
			\text{find} 		\quad & x^{t,p}_{i,j}  \nn \\
			\text{subject to} 	\quad
			& x^{0,0}_{0,0}=1, \quad x^{0,0}_{i,0} = x^{i,i}_{i,i} \,,\nn \\ 
			&x^{t,p}_{i,j}=0 \quad\quad\quad \text{if}\quad  t-p \quad  \text{is odd}\,, \nn \\
			& x^{t,p}_{i,j}=x^{t',p'}_{i',j'}\,, \quad \text{if} \quad  t-p=t'-p' \quad \text{is even} \quad \text{and} \quad (i,j,i+j-t-p) \quad \text{a} \nn \\
			&\hspace{2.6cm}\text{permutation of} \,\,(i',j',i'+j'-t'-p') \quad \text{with} \quad i,j,i+j-t-p\neq 0 \,, \nn \\
			&\sum_{\substack{(i,j,t,p)\in \II(4,n) \\ k=i+j-t-p}}\gamma^{t,p}_{i,j}x^{t,p}_{i,j}  =\frac{2^n}{K}  \gamma^{0,0}_{k,0}x^{0,0}_{k,0}  \,, \nn\\
			& K2^{-n}\sum^n_{i=0} K_j(i,n) \gamma^{0,0}_{i,0} x^{0,0}_{i,0}=\gamma^{0,0}_{j,0} x^{0,0}_{j,0} \quad \text{if} \quad 0\leq j<\delta \,, \nn \\
			&\bigoplus_{\substack{a,k \in \N_0\\ 0\leq a\leq k\leq n+a-k}}
			\left(\sum\limits_{\substack{t,p\in \N_0 \\ 0 \leq p \leq t \leq i,j \\  i+j \leq t+n}}\alpha(i,j,t,p,a,k)x_{i,j}^{t,p}\right)_{i,j=k}^{n+a-k}   \succeq 0 \,,
		\end{align} 
		where $\alpha(i,j,t,p,a,k)$ is given by Eq.~\eqref{eq:alpha}. In all the development, we will impose that $\gamma^{0,0}_{0,0}=1$.

		Similarly to the previous section, we now construct the Lagrangian. Associate with each equality constraint a dual variable ${u},{v_i},r^{t,p}_{i,j}, s^{t,t',p,p'}_{i,i',j,j'}, Q_i, C_i \in \R$ and with each semidefinite constraint a dual semidefinite variable $Y^{(a,k)}\succeq0$ and the objective function equals to $0$. Then
		\begin{align}\label{eq:dual_relaxation_proof}
			\mathcal{L} &= \left(1 - x^{0,0}_{0,0}\right) {u}+ \sum^n_{i=1} \left( x^{0,0}_{i,0}-x^{i,i}_{i,i}\right) {v_i} \nn \\
			&\quad + \sum_{\substack{ (i,j,t,p)\in \II(4,n) \\ (t-p) \,\text{odd}}} x^{t,p}_{i,j}r^{{t,p}}_{i,j} 
			\quad + \sum_{\substack{ (i,j,t,p)\in \II(4,n) \\ i,j,i+j-t-p \neq 0\\ (t-p) \,\text{even}}} \sum_{\substack{ (i',j',t',p')\in \II(4,n) \\ (t-p)=(t'-p') \\ (i',j',i'+j'-t'-p') \, \text{a} \\
					\, \text{permutation of} \,  (i,j,i+j-t-p) \\ }} \!\!\!\!\!\!\!\!\!\!\!\!
					(x^{t,p}_{i,j} - x^{t',p'}_{i',j'} )s^{{t,t',p,p'}}_{i,i',j,j'} \nn \\
			& \quad + \sum^n_{k=0} \frac{2^n}{K} \gamma^{0,0}_{k,0}x^{0,0}_{k,0} Q_k
			\quad-\quad
			\sum^n_{k=0}\sum_{\substack{(i,j,t,p)\in \II(4,n) \\ k=i+j-t-p}}\gamma^{t,p}_{i,j} x^{t,p}_{i,j}   Q_{i+j-t-p} \nn\\
			& \quad
			+ \sum^n_{i=0} K \left(2^{-n} \sum^{\delta-1}_{j=0} K_j(i,n)  C_{j} \right)\gamma^{0,0}_{i,0} x^{0,0}_{i,0}
			- \sum^{\delta-1}_{j=0}\gamma^{0,0}_{j,0} x^{0,0}_{j,0} C_j  \nn \\
			&\quad-\sum_{(i,j,t,p)\in \II(4,n)} \left(\sum^{\min(i,j)}_{k=0} \sum^k_{\substack{a=\max(i,j)\\+k-n}} \!\!\!\!\!\! \alpha(i,j,t,p,a,k) Y^{(a,k)}_{i-k,j-k}\right)x^{t,p}_{i,j}\,.
		\end{align}
		As shown in the development from  Eq.~\eqref{eq:dual_semidefinite1} to Eq.~\eqref{eq:dual_semidefinite2}, the last term arises as the the trace of the product of the primal and dual semidefinite variables.
		Before proceeding with the Lagrangian, we simplify the last term in the second line
		of Eq.~\eqref{eq:dual_relaxation_proof} coming from the structure constraint
		by reorganizing its elements as
		\begin{align}\label{eq:last_term_second_line}
			&\sum_{\substack{ (i,j,t,p)\in \II(4,n) \\ i,j,i+j-t-p \neq 0\\ (t-p) \,\text{even}}} \!\!\!\!
			\sum_{\substack{ (i',j',t',p')\in \II(4,n) \\ (t-p)=(t'-p') \\ (i',j',i'+j'-t'-p') \, \text{a} \\
					\, \text{permutation of} \,  (i,j,i+j-t-p) \\ }} \!\!\!\!\!\!\!\!\!\!  
			(x^{t,p}_{i,j} - x^{t',p'}_{i',j'} )s^{{t,t',p,p'}}_{i,i',j,j'}   \nn\\
			&= 
			\sum_{\substack{ (i,j,t,p)\in \II(4,n) \\ i,j,i+j-t-p \neq 0\\ (t-p) \,\text{even}}} x^{t,p}_{i,j}
			\Bigg(\!\!\!\!\!\!\!\!\sum_{\substack{ (i',j',t',p')\in \II(4,n) \\ (t-p)=(t'-p') \\ (i',j',i'+j'-t'-p') \, \text{a} \\
					\, \text{permutation of} \,  (i,j,i+j-t-p) \\ }} \!\!\!\!\!\!\!\!\!\!  (s^{{t,t',p,p'}}_{i,i',j,j'} - s^{{t',t,p',p}}_{i',i,j',j} ) \Bigg)
			\nn \\
			&= \sum_{\substack{ (i,j,t,p)\in \II(4,n) \\ i,j,i+j-t-p \neq 0\\ (t-p) \,\text{even}}} x^{{t,p}}_{i,j} f^{{t,p}}_{i,j} \,,
		\end{align}
		where
		\begin{align}\label{eq:str_const_dual}
			f^{{t,p}}_{i,j} = \sum_{\substack{ (i',j',t',p')\in \II(4,n) \\ (t-p)=(t'-p') \\ (i',j',i'+j'-t'-p') \, \text{a} \\
					\, \text{permutation of} \,  (i,j,i+j-t-p) \\ }} \!\!\!\!\!\!\!\!\!\!  (s^{{t,t',p,p'}}_{i,i',j,j'} - s^{{t',t,p',p}}_{i',i,j',j} )\,.
		\end{align}
		From the definition of $f^{t,p}_{i,j}$ follows a first constraint:
		for $(a,b,c,d)\in \II(4,n)$ with $(c-d)$ even and $a,b,a+b-c-d~\neq~0$,
		\begin{align}\label{eq:dual_str_constr}
			\sum_{\substack{ (i,j,t,p)\in \II(4,n) \\ (c-d)=(t-p) \\ (i,j,i+j-t-p) \, \text{a} \\
					\, \text{permutation of} \,  (a,b,a+b-c-d) \\ }} \!\!\!\!\!\!\!\!\!\!  f^{t,p}_{i,j} = 0 \,.
		\end{align}
		This can be seen by taking Eq.~\eqref{eq:str_const_dual} and developing the left hand side of Eq.~\eqref{eq:dual_str_constr} as
		\begin{align}
			 \sum_{\substack{ (i,j,t,p)\in \II(4,n) \\ (c-d)=(t-p) \\ (i,j,i+j-t-p) \, \text{a} \\
						\, \text{permutation of} \,  (a,b,a+b-c-d) \\ }} \sum_{\substack{ (i',j',t',p')\in \II(4,n) \\ (t-p)=(t'-p') \\ (i',j',i'+j'-t'-p') \, \text{a} \\
					\, \text{permutation of} \,  (i,j,i+j-t-p) \\ }} \!\!\!\!\!\!\!\!\!\!  s^{{t,t',p,p'}}_{i,i',j,j'}  - 			 \sum_{\substack{ (i,j,t,p)\in \II(4,n) \\ (c-d)=(t-p) \\ (i,j,i+j-t-p) \, \text{a} \\
					\, \text{permutation of} \,  (a,b,a+b-c-d) \\ }} \sum_{\substack{ (i',j',t',p')\in \II(4,n) \\ (t-p)=(t'-p') \\ (i',j',i'+j'-t'-p') \, \text{a} \\
					\, \text{permutation of} \,  (i,j,i+j-t-p) \\ }} \!\!\!\!\!\!\!\!\!\! s^{{t',t,p',p}}_{i',i,j',j} 
		\end{align}
		Note that the right and left hand side of the above equation sum over the same elements and thus, they are equal.
		
		Define also $y^{t,p}_{i,j}\ = \sum^{\min(i,j)}_{k=0} \sum^k_{a=\max(i,j)+k-n} \alpha(i,j,t,p,a,k) Y^{(a,k)}_{i-k,j-k}$ and $ D_i = 2^{-n} \sum^{\delta-1}_{j=0} K_j(i,n)  C_{j}$.
		We then factorize the primal variables so that
		\begin{align}\label{eq:lag_factor}
			\mathcal{L} &={u} 
			+ x^{0,0}_{0,0}\left( -{u} + (KD_0-C_0) + \left(\frac{2^n}{K} -1\right)Q_0 - y^{0,0}_{0,0}\right) \nn\\ 
			&\quad +\sum^{\delta-1}_{i=1} x^{0,0}_{i,0}\Big( {v_i}  + \gamma^{0,0}_{i,0}(KD_i-C_i)+ \frac{2^n}{K}\gamma^{0,0}_{i,0}Q_i-2\gamma^{0,0}_{i,0}Q_i  - 2y^{0,0}_{i,0}\Big) \nn \\ 
			&\quad +\sum^{n}_{i=\delta} x^{0,0}_{i,0}\Big( {v_i}  + \gamma^{0,0}_{i,0} KD_i + \frac{2^n}{K}\gamma^{0,0}_{i,0}Q_i-2\gamma^{0,0}_{i,0}Q_i  - 2y^{0,0}_{i,0}\Big) \nn \\
			&\quad
			+\sum^{n}_{i=1} x^{i,i}_{i,i}\Big(- {v_i} -\gamma^{i,i}_{i,i}Q_0 - y^{i,i}_{i,i}\Big) \nn \\
			&\quad +\sum_{\substack{(i,j,t,p)\in \II(4,n) \\ (t-p) \,\text{odd}}} x^{t,p}_{i,j}\Big(r^{t,p}_{i,j} - \gamma^{t,p}_{i,j} Q_{i+j-t-p}-y^{t,p}_{i,j}\Big) \nn \\
			&\quad + \sum_{\substack{ (i,j,t,p)\in \II(4,n) \\ i,j,i+j-t-p \neq 0\\ (t-p) \,\text{even}}} x^{t,p}_{i,j} ( f^{t,p}_{i,j}-\gamma^{t,p}_{i,j} Q_{i+j-t-p} -y^{t,p}_{i,j} )\,.
		\end{align}
		As done in Eq.~\eqref{eq:clar_fact_7}, we have decomposed
		\begin{align}
	\sum_{(i,j,t,p)\in \II(4,n)} x^{t,p}_{i,j} y^{t,p}_{i,j}&= x^{0,0}_{0,0} y^{0,0}_{0,0} +  \sum^{n}_{i=1}2x^{0,0}_{i,0} y^{0,0}_{i,0} + \!\!\!\!\!\!\!\!\sum_{\substack{(i,j,t,p)\in \II(4,n) \\ (t-p) \,\text{odd} }} x^{t,p}_{i,j} r^{{t,p}}_{i,j} + \!\!\!\!\!\!\!\!\sum_{\substack{ (i,j,t,p)\in \II(4,n) \\ i,j,i+j-t-p \neq 0\\ (t-p) \,\text{even}}} x^{t,p}_{i,j}y^{t,p}_{i,j} \,.
		\end{align} 
		By Eq.~\eqref{eq:lag_factor}, the dual objective function is then
				\begin{align}\label{eq:inf_all}
			\inf_{x^{t,p}_{i,j}} \mathcal{L}&={u} \quad \text{if} \quad
			\begin{cases}
				(1) \quad 0= - {u} + (KD_0-C_0) + \left(\frac{2^n}{K} -1\right)Q_0 - {y^{0,0}_{0,0}}\,, \\
				(2) \quad 0={v_i}  + \gamma^{0,0}_{i,0}(KD_i-C_i)+ \frac{2^n}{K}\gamma^{0,0}_{i,0}Q_i-2\gamma^{0,0}_{i,0}Q_i  - 2y^{0,0}_{i,0}
				&\text{if} \quad 0 <i <\delta \,,\\
				(3) \quad0={v_i}  + \gamma^{0,0}_{i,0} KD_i + \frac{2^n}{K}\gamma^{0,0}_{i,0}Q_i-2\gamma^{0,0}_{i,0}Q_i  - 2y^{0,0}_{i,0}
				&\text{if} \quad \delta \leq i \leq n \,,\\
				(4) \quad 0=-{v_i} -\gamma^{i,i}_{i,i}Q_0 - y^{i,i}_{i,i}\quad 			&\text{if} \quad 0 < i \leq n \,,\\
				(5) \quad 0=r^{t,p}_{i,j} - \gamma^{t,p}_{i,j} Q_{i+j-t-p}-y^{t,p}_{i,j}	&	\text{if} \quad (t-p) \,\, \text{odd} \,, \\
				(6) \quad 0=f^{t,p}_{i,j}-\gamma^{t,p}_{i,j} Q_{i+j-t-p} -y^{t,p}_{i,j}				&\text{if} \quad (t-p) \,\, \text{even}
				\,, \\
				&\quad i,j\neq 0 \quad \text{and} \quad \\ &
				\quad\quad i+j-t-p\neq 0 \,.
			\end{cases}
		\end{align}
		Otherwise, it tends to $-\infty$. Similarly to Eq.~\eqref{eq:dual_redun}, condition (5) above,
		\begin{align}\label{eq:cond_slack}
		r^{t,p}_{i,j} - \gamma^{t,p}_{i,j} Q_{i+j-t-p}-y^{t,p}_{i,j}=0	\quad	\text{if} \quad (t-p) \,\, \text{odd} 
		\end{align}
		does not constrain the objective function. Again, $r^{t,p}_{i,j}$ acts as a slack variable
		which can always take values such that Eq.~\eqref{eq:cond_slack} is satisfied.
		
		All dual variables can be expressed in terms of $Y^{(a,k)},Q_i,C_i$ using the remaining of constraints in Eq.~\eqref{eq:inf_all}. First, use condition (1) to derive the objective function of the dual which reads ${u} = (KD_0-C_0) + \left(\frac{2^n}{K} -1\right)Q_0 - {y^{0,0}_{0,0}}$. Then apply condition (4) to (2) and (3) to obtain respectively,
	\begin{align}
	 \quad 0 &= \gamma^{0,0}_{i,0}(KD_i-C_i)+ \left(\frac{2^n}{K}-2\right)\gamma^{0,0}_{i,0}Q_i - 2y^{0,0}_{i,0}
	 -\gamma^{i,i}_{i,i}Q_0  -y^{i,i}_{i,i}
	\quad \text{if} \quad 0 <i <\delta \,,\\
	\quad0 & = \gamma^{0,0}_{i,0} KD_i + \left(\frac{2^n}{K}-2\right)\gamma^{0,0}_{i,0}Q_i - 2y^{0,0}_{i,0}
	-\gamma^{i,i}_{i,i}Q_0  -y^{i,i}_{i,i}
	\quad\quad\quad\quad \text{if} \quad \delta \leq i \leq n \,.
	\end{align}
	Finally, we use the property of $f^{t,p}_{i,j}$ described in Eq.~\eqref{eq:dual_str_constr}  together with condition (6) and thus,
	\begin{align}
	\sum_{\substack{ (i',j',t',p')\in \II(4,n) \\ (t-p)=(t'-p') \\ (i',j',i'+j'-t'-p') \, \text{a} \\
			\, \text{permutation of} \,  (i,j,i+j-t-p) \\ }} \!\!\!\!\!\!\!\!\!\!  
	\left(y^{t',p'}_{i',j'}+\gamma^{t',p'}_{i',j'} Q_{i'+j'-t'-p'}\right) = 0 \quad  \hspace{0.3em} \text{if} \quad (t-p) \quad \text{even}  \nn \\ \quad \text{and} \quad i,j,i+j-t-p \neq 0. 
	\end{align}
	The dual program then reads as
	\begin{align}
		\alpha \quad = \quad \max_{Y^{(a,k)}, Q_i, C_i} \quad &   (KD_0-C_0) + \left(\frac{2^n}{K} -1\right)Q_0 - {y^{0,0}_{0,0}} \quad &  \nn \\
		\text{subject to}  \quad & Y^{(a,k)} \succeq0,\nn \\
		& y^{t,p}_{i,j}\ = \sum^{\min(i,j)}_{k=0} \sum^k_{a=\max(i,j)+k-n} \alpha(i,j,t,p,a,k) Y^{(a,k)}_{i-k,j-k}\,, \nn \\
		&D_i = 2^{-n} \sum^{\delta-1}_{j=0} K_j(i,n)  C_{j} \,, \nn \\
		&\gamma^{0,0}_{i,0}(KD_i-C_i)+ \left(\frac{2^n}{K}-2\right)\gamma^{0,0}_{i,0}Q_i - 2y^{0,0}_{i,0}
		-\gamma^{i,i}_{i,i}Q_0  -y^{i,i}_{i,i}=0
		\quad \text{if} \quad 0 <i <\delta \,,\nn \\
		& \gamma^{0,0}_{i,0} KD_i + \left(\frac{2^n}{K}-2\right)\gamma^{0,0}_{i,0}Q_i - 2y^{0,0}_{i,0}
		-\gamma^{i,i}_{i,i}Q_0  -y^{i,i}_{i,i}=0
		\quad\quad\quad\quad \text{if} \quad \delta \leq i \leq n \,, \nn \\
		& \sum_{\substack{ (i',j',t',p')\in \II(4,n) \\ (t-p)=(t'-p') \\ (i',j',i'+j'-t'-p') \, \text{a} \\
				\, \text{permutation of} \,  (i,j,i+j-t-p) \\ }} \!\!\!\!\!\!\!\!\!\!  
		\left(y^{t',p'}_{i',j'}+\gamma^{t',p'}_{i',j'} Q_{i'+j'-t'-p'}\right) = 0 \quad  \hspace{5.6em} \text{if} \quad (t-p)  \nn \\ &  \hspace{6cm}  \text{even} \quad \text{and} \quad i,j,i+j-t-p \neq 0.
	\end{align}
	
	Finally, one can check from Eq.~\eqref{eq:gamma} that $\gamma^{t,p}_{ij}=\gamma^{t',p'}_{i'j'}$ if $t-p=t'-p'$ is even and $(i,j,i+j-t-p)$ is a permutation of $(i',j',i'+j'-t'-p')$. In particular, $\gamma^{0,0}_{i,0}=\gamma^{i,i}_{i,i}$.
	Thus, by mapping the variable $y^{t,p}_{ij} \mapsto \gamma^{t,p}_{i,j}y^{t,p}_{ij}$, the constraints in the SDP simplify to Eq.~\eqref{eq:dual_solall_prop_app} since $\gamma^{t,p}_{ij}$ factorizes.
	This ends the proof.
\end{proof}
	
	\section{Infeasibility certificate $(\!(7,1,4)\!)_2$ code}\label{app:inf}

	The following matrices $Y^{(a,k)}$ are a solution of the dual program~\eqref{eq:dual_sol} for $n=7$ and $\delta=4$ with $\alpha =0.58>0$ and a maximum constraint violation of
	$\mathcal{O}(10^{-16})$. 
	Note that the first constraint from Eq.~\eqref{eq:dual_sol_app_prop} allows to determine the value from $w$ using $Y^{(a,k)}$ since $w=-y^{i,i}_{i,i}-y^{0,0}_{i,0}$ if $\delta \leq i\leq n$. In particular, one can choose $w=-y^{n,n}_{n,n}-y^{0,0}_{n,0}$.
	The matrices thus provide an infeasibility certificate and show that a $(\!(7,1,4)\!)_2$ quantum code does not exist.

\begin{align}
Y^{(0,0)} &=
\begin{bmatrix}
	124 & 0 & 0 & 0 & -9 & -15 & -26 & -43\\
	0 & 115 & 0 & 0 & 0 & 0 & 0 & 0\\
	0 & 0 & 22 & 0 & 0 & 0 & 0 & 0\\
	0 & 0 & 0 & 3 & 0 & 0 & 0 & 0\\
	-9 & 0 & 0 & 0 & 1 & 1 & 2 & 1\\
	-15 & 0 & 0 & 0 & 1 & 3 & 2 & 2\\
	-26 & 0 & 0 & 0 & 2 & 2 & 7 & 11\\
	-43 & 0 & 0 & 0 & 1 & 2 & 11 & 43\\
\end{bmatrix}
\nn\\
Y^{(0,1)} &=
\begin{bmatrix}
	120 & 0 & 0 & 0 & 0 & 0\\
	0 & 238 & 0 & 0 & 0 & 0\\
	0 & 0 & 66 & 0 & 0 & 0\\
	0 & 0 & 0 & 9 & 10 & -10\\
	0 & 0 & 0 & 10 & 15 & 3\\
	0 & 0 & 0 & -10 & 3 & 92\\
\end{bmatrix}
\nn\\
Y^{(0,2)} &=
\begin{bmatrix}
	140 & 0 & 0 & 0\\
	0 & 274 & 0 & 0\\
	0 & 0 & 91 & 24\\
	0 & 0 & 24 & 330\\
\end{bmatrix}
\nn\\
Y^{(0,3)} &=
\begin{bmatrix}
	84 & 0\\
	0 & 12\\
\end{bmatrix}
\nn\\
Y^{(1,1)} &=
\begin{bmatrix}
	120 & 0 & 0 & 0 & 0 & 0 & 0\\
	0 & 120 & 0 & 0 & 0 & 0 & 0\\
	0 & 0 & 63 & 0 & 0 & 0 & 0\\
	0 & 0 & 0 & 11 & 4 & -5 & 5\\
	0 & 0 & 0 & 4 & 7 & 4 & -8\\
	0 & 0 & 0 & -5 & 4 & 29 & 56\\
	0 & 0 & 0 & 5 & -8 & 56 & 269\\
\end{bmatrix}
\nn\\
Y^{(1,2)} &=
\begin{bmatrix}
	117 & 0 & 0 & 0 & 0\\
	0 & 291 & 0 & 0 & 0\\
	0 & 0 & 130 & 56 & 285\\
	0 & 0 & 56 & 419 & 325\\
	0 & 0 & 285 & 325 & 831\\
\end{bmatrix}
\nn\\
Y^{(1,3)} &=
\begin{bmatrix}
	126 & 0 & 0\\
	0 & \frac{405}{4} & 207\\
	0 & 207 & 470\\
\end{bmatrix}
\nn\\
Y^{(1,4)} &=
\begin{bmatrix}
	\frac{93}{2}\\
\end{bmatrix}
\nn\\
Y^{(2,2)} &=
\begin{bmatrix}
	114 & 0 & 0 & 0 & 0 & 0\\
	0 & 109 & 0 & 0 & 0 & 0\\
	0 & 0 & 53 & 26 & 85 & 20\\
	0 & 0 & 26 & 108 & -69 & 123\\
	0 & 0 & 85 & -69 & 352 & 147\\
	0 & 0 & 20 & 123 & 147 & 937\\
\end{bmatrix}
\nn\\
Y^{(2,3)} &=
\begin{bmatrix}
	140 & 0 & 0 & 0\\
	0 & 127 & 65 & \frac{125}{3}\\
	0 & 65 & 57 & 46\\
	0 & \frac{125}{3} & 46 & 81\\
\end{bmatrix}
\nn\\
Y^{(2,4)} &=
\begin{bmatrix}
	18 & -29\\
	-29 & 56\\
\end{bmatrix}
\nn\\
Y^{(3,3)} &=
\begin{bmatrix}
	107 & 0 & 0 & 0 & 0\\
	0 & 24 & -3 & \frac{125}{4} & 70\\
	0 & -3 & 71 & 98 & 13\\
	0 & \frac{125}{4} & 98 & 224 & 149\\
	0 & 70 & 13 & 149 & 246\\
\end{bmatrix}
\nn\\
Y^{(3,4)} &=
\begin{bmatrix}
	99 & \frac{45}{2} & 70\\
	\frac{45}{2} & \frac{263}{4} & -60\\
	70 & -60 & 158\\
\end{bmatrix}
\nn\\
Y^{(3,5)} &=
\begin{bmatrix}
	\frac{39}{2}\\
\end{bmatrix}
\nn\\
Y^{(4,4)} &=
\begin{bmatrix}
	106 & 10 & \frac{-290}{3} & 0\\
	10 & \frac{16187991}{644740} & -44 & 38\\
	\frac{-290}{3} & -44 & \frac{6623}{36} & -82\\
	0 & 38 & -82 & 88\\
\end{bmatrix}
\nn\\
Y^{(4,5)} &=
\begin{bmatrix}
	\frac{25998073}{322370} & 44\\
	44 & \frac{379}{12}\\
\end{bmatrix}
\nn\\
Y^{(5,5)} &=
\begin{bmatrix}
	\frac{21984965}{392883} & \frac{-78}{5} & \frac{-186}{5}\\
	\frac{-78}{5} & \frac{1061}{80} & \frac{93}{5}\\
	\frac{-186}{5} & \frac{93}{5} & \frac{34838427}{857647}\\
\end{bmatrix}
\nn\\
Y^{(5,6)} &=
\begin{bmatrix}
	\frac{907}{8}\\
\end{bmatrix}
\nn\\
Y^{(6,6)} &=
\begin{bmatrix}
	\frac{289}{5} & \frac{-548}{5}\\
	\frac{-548}{5} & \frac{164534807}{670060}\\
\end{bmatrix}
\nn\\
Y^{(7,7)} &=
\begin{bmatrix}
	\frac{3813979}{522046}\\
\end{bmatrix}
\end{align}
\end{document}